\newtheorem{theorem}{Theorem}[section]
\newtheorem{corollary}[theorem]{Corollary}
\newtheorem{lemma}[theorem]{Lemma}
\newtheorem{proposition}[theorem]{Proposition}
\theoremstyle{definition}
\newtheorem{definition}[theorem]{Definition}
\newtheorem{remark}[theorem]{Remark}
\newtheorem{example}[theorem]{Example}
\numberwithin{equation}{section}
\newcommand{\eps}{\varepsilon}
\newcommand{\calA}{\mathcal{A}}
\newcommand{\calQ}{\mathcal{Q}}
\newcommand{\calT}{\mathcal{T}}
\newcommand{\calB}{\mathcal{B}}
\newcommand{\calR}{\mathcal{R}}
\newcommand{\E}{\operatorname{\mathds{E}}} 
\renewcommand{\P}{\operatorname{\mathds{P}}} 
\newcommand{\R}{\mathds{R}}
\newcommand{\prt}{\partial}
\newcommand{\wh}{\widehat}
\newcommand{\wt}{\widetilde}
\DeclareMathOperator{\sign}{sgn}
\def\bone{{\bf 1}}
\def\bn{{\bf n}}
\newcommand{\om}{\omega}
\newcommand{\Om}{\Omega}
\newcommand{\Ot}{\widetilde{\Omega}}
\newcommand{\al}{\alpha}
\newcommand{\si}{\sigma}
\newcommand{\de}{\delta}
\newcommand{\la}{\lambda}
\newcommand{\Si}{\Sigma}
\newcommand{\pa}{\partial}
\newcommand{\na}{\nabla}
\newcommand{\pb}{\overline{p}}
\newcommand{\vb}{\overline{v}}
\newcommand{\sbar}{\overline{\sigma}} 
\newcommand{\Phibar}{\overline{\Phi}}  
\newcommand{\cS}{\mathcal{S}}
\newcommand{\cM}{\mathcal{M}}
\newcommand{\cH}{\mathcal{H}} 
\newcommand{\cD}{\mathcal{D}}  
\newcommand{\ph}{\varphi}
\def\crn#1#2{{\vcenter{\vbox{
        \hbox{\kern#2pt \vrule width.#2pt height#1pt
           }
          \hrule height.#2pt}}}}
\def\intprod{\mathchoice\crn54\crn54\crn{3.75}3\crn{2.5}2}
\def\into{\mathbin{\intprod}}
\title{Fermi acceleration in rotating drums}
\author{Krzysztof Burdzy, Mauricio Duarte, Carl-Erik Gauthier, C. Robin
  Graham and Jaime San Martin}
\address{KB, CEG and CRG: Department of Mathematics, Box 354350, University of
  Washington, Seattle, WA 98195} 
\email{burdzy@uw.edu}
\email{carlerik.gauthier@gmail.com}
\email{robin@math.washington.edu}
\address{MD: Departamento de Matematicas, Facultad de Ciencias Exactas,
  Universidad Andres Bello, Santiago, Chile} 
   \email{mauricio.duarte@unab.cl}
\address{JSM: Universidad de Chile, Facultad de Ingenieria,
Depto. de Mathem\'aticas, Beaucheff 850, Santiago, Chile}
\email{jsanmart@dim.uchile.cl}
\thanks{KB's research was supported in part by Simons Foundation Grant 506732.  }
\subjclass[2020]{70L99}
\keywords{Fermi acceleration, rotating drum, microcanonical ensemble measure}
\begin{document}

\begin{abstract}
Consider hard balls in a bounded rotating drum.
If there is no gravitation then there is no Fermi acceleration, i.e., the
energy of the balls  remains bounded forever. If there is gravitation,
Fermi acceleration may arise. A number of explicit formulas for the system
without gravitation are given.  Some of these are based on an explicit
realization, which we derive, of the well-known microcanonical ensemble
measure. 
\end{abstract}

\maketitle

\section{Introduction}\label{intro}

We will discuss systems of hard balls or pointlike particles in rotating
drums. The main problem that we are going to address is whether there is
Fermi acceleration in a rotating drum, i.e., whether the energy of the
balls can go to infinity. The short answer is no, if there is no
gravitation, and yes, if there is gravitation. 

When there is no gravitation then the system has a conserved quantity and
its analysis 
follows established methods from classical mechanics. In this case, our
contributions consist of derivation of an explicit realization of the
microcanonical ensemble measure and explicit formulas for some quantities
characterizing the system.  

The case of a rotating drum with gravitation is much harder to analyze
explicitly, yet we provide some detailed calculations that support the
claim of Fermi acceleration in that case. 

In the rest of the introduction, we will discuss our models in more detail
and outline the main results. We will also provide a review of  related
literature. 

\subsection{Rotating drum}\label{RD}
Gas centrifuges used for separation of uranium isotopes have cylindrical
rotors. Before we discuss a mathematical model inspired by cylindrical
rotors, we first consider a  model inspired by a centrifuge with an
impeller (see Fig. \ref{fig1}). 

It comes at no additional cost to analyze a  general bounded
$d$-dimensional domain $D$, with $d\geq 2$, (see Fig. \ref{fig2}),
generalizing the centrifuge with impeller (Fig. \ref{fig1}).  In these
illustrations, $D$ consists of the part of the centrifuge in the exterior
of the impeller blades.  We assume that there exists a 2-dimensional  
subspace of $\R^d$ (the ``horizontal'' subspace) in which $D$ is rotating  
with constant angular velocity about $(0,0)$, while other  
coordinates remain constant. Gas molecules are represented by hard
balls. The collisions between the balls  are assumed to be totally elastic,
i.e., we assume conservation of total energy and momentum. The balls move
with constant velocities between collisions. A ball reflects from the
boundary of $D$ according to the classical specular reflection in the
moving frame of reference which makes the reflecting wall static at the
moment of collision.  
At a collision of a ball with the wall at a point where the horizontal
component of the normal to the wall is not radial, evidently  
the energy of the ball increases when it collides with   
the wall moving towards the ball, and the energy of the ball decreases when it
collides with the wall moving away from the ball.  
Since the energy of the balls goes up and down, in principle it is 
possible that it will grow to infinity (over the infinite time horizon),
i.e., the Fermi acceleration might occur; the term refers to a model
proposed by Fermi in \cite{Fermi49}. 
A simple mathematical version of the model was presented by Ulam in \cite{Ulam61}. 
\begin{figure}
\centering
\begin{minipage}{.45\textwidth}
  \centering
  \includegraphics[width=1\linewidth]{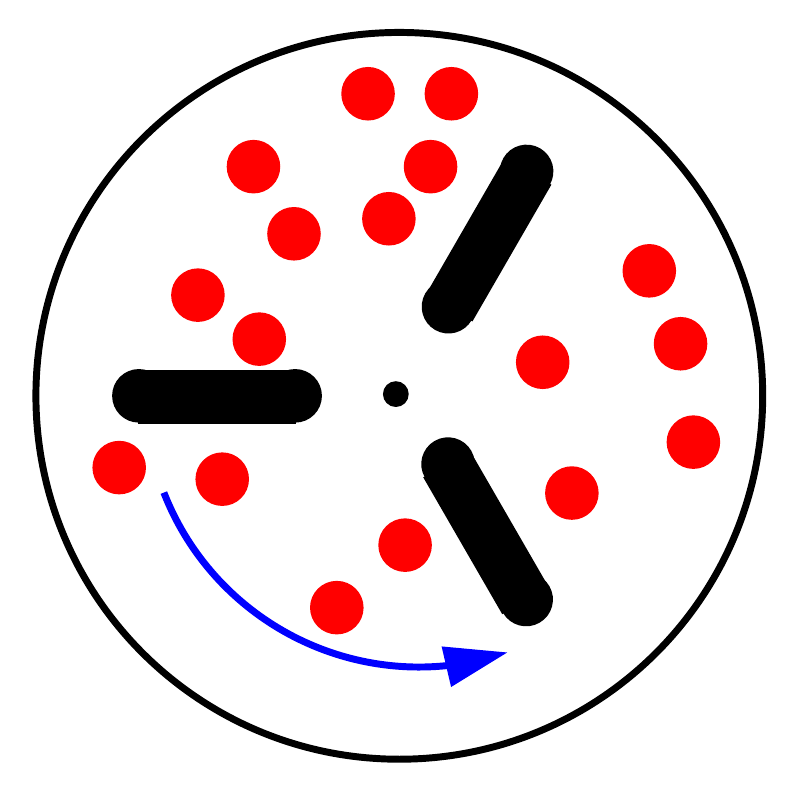}
  \captionof{figure}{Horizontal cross-section of vertical centrifuge with a three-blade impeller.}
  \label{fig1}
\end{minipage}%
\begin{minipage}{.45\textwidth}
  \centering
  \includegraphics[width=1\linewidth]{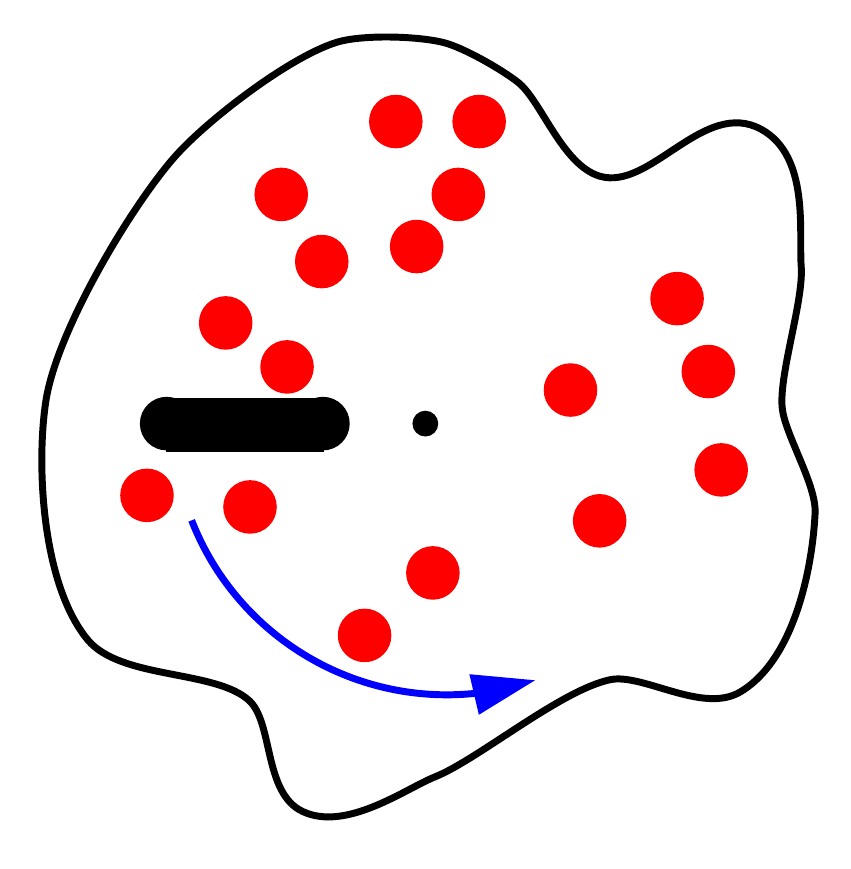}
  \captionof{figure}{Horizontal cross-section of a rotating drum. The
    rotating walls include the outside boundary of the domain and an
    internal impeller.} 
  \label{fig2}
\end{minipage}
\end{figure}
It turns out that the  dynamical system defined above has a conserved
quantity and, as a consequence, there is no Fermi acceleration in this 
case. We will now present a more formal definition of the system and the
conserved quantity.  

Let $H$ be the two-dimensional subspace of $\R^d$ spanned by the first two
basis vectors.  Consider a bounded $d$-dimensional  domain with smooth
boundary rotating with the constant 
angular velocity $\omega>0$ in  $H$ about the origin. Let $P_H$ denote the
projection on $H$ and let $\calR(\theta)$ denote the rotation by angle
$\theta$ in $H$.  Set $L=\calR(\pi/2)\circ P_H$.   
Suppose that the rotating drum $D$ holds $n$ balls, for some finite $n\ge 1$.
Let $F$ be the moving frame of reference rotating with the centrifuge. In 
other words, the drum $D$ is static in $F$. For the $k$-th molecule, let
$m_k$ be its mass, $v_k(t)\in \R^d$  its velocity at time $t$, 
$v^F_k(t)\in \R^d$  its velocity in $F$ at time $t$,  $x_k(t)\in \R^d$  its
center  at 
time $t$,  $x_k^F(t)\in \R^d$  its center in $F$ at time $t$,  
$x^H_k(t)\in H$ the projection of its center on $H$ at time $t$, and
$x^{F,H}_k(t)\in H$ the projection of its center on $H$ in $F$ at time $t$.   
Note that
$x^H_k(t) = P_H(x_k(t))$, $x^{F,H}_k(t) = P_H(x^F(t))$,
$x_k^F (t)  = \calR(-\omega t) (x_k(t) ) $, 
$\|x^{F,H}_k(t)\|=\|x^{H}_k(t)\|$.  The velocities are related by 
\begin{equation}\label{velocities}
\calR(\omega t)(v_k^F(t)) =v_k(t)  - \omega L(x_k(t)) 
\end{equation}
(see Section~\ref{fe16.1}).  If $d=3$, the angular velocity is typically  
represented as the vector 
$\bm{\omega}:=\omega\bm{e_3}$ and then  
$\omega L(x_k(t))=\bm{\omega}\times x_k(t)$.  Let 
\begin{align}\label{Fenergies}
E^{F,K}(t)&=\sum_{k=1}^n  m_k \|v_k^F(t)\|^2/2,\notag \\
E^{F,P}(t)&=-\sum_{k=1}^n  m_k \omega^2 \|x^H_k(t)\|^2/2,\\  
E^F(t) & = E^{F,K}(t) + E^{F,P}(t).\notag
\end{align}
We will show in Propositions~\ref{rotatingprop} and \ref{f17.1} that the
quantity $E^F(t) $ is  conserved, not only between collisions but also 
at collisions of balls or of a ball and a wall.  Thus   
$E^F :=E^F(t)=E^F(0)  $ for all $t\geq 0$.  This can be interpreted as the 
conservation of energy in the rotating 
(non-inertial) frame of reference $F$. The sum of the ``kinetic energy''
$E^{F,K}(t) $ and ``potential energy'' $E^{F,P}(t)$, i.e., 
$E^F = E^{F,K}(t) + E^{F,P}(t)$, remains constant. The potential energy
$E^{F,P}(t)$ is associated with the centrifugal force. There is no
contribution from the Coriolis force.  The usual kinetic energy in the
inertial frame is of course preserved between collisions, since the balls are 
free, and at collisions of two balls, since these collisions are assumed
elastic.  But, as noted above, it is not preserved at collisions of a 
ball and a wall. 

We will now sketch an argument that there is no Fermi acceleration, based
on the conservation of $E^F$. 
Since the drum is represented by a bounded domain $D$ and the angular
velocity $\omega$ is fixed, all quantities in the formula 
$-\sum_{k=1}^n m_k \omega^2 \|x^H_k(t)\|^2/2$ stay bounded and, 
therefore, the potential
energy $E^{F,P}(t)$ stays bounded. This and the fact that the energy $E^F$
is constant imply that the kinetic energy $E^{F,K}(t)$ must remain
bounded. Again since the domain $D$ is bounded and $\omega$ is fixed, the
difference  
$\|\calR(\omega t)(v_k^F(t)) - v_k(t)\|=\|\omega L(x_k(t))\|$ stays 
bounded. Thus the difference between $E^{F,K}(t)$ and the total energy in
the inertial frame of reference $E^K(t):=\sum_{k=1}^n m_k \|v_k(t)\|^2/2$
remains bounded and, therefore, $E^K(t)$ remains bounded. Hence, there is
no Fermi acceleration. 

Next we describe an invariant measure for this model, a special case of the
microcanonical ensemble (see \cite[Sect.~1.2]{Ruelle}).  
There are typically many invariant measures for the system defined above
but the measure we will present is unique for some probabilistic systems 
defined later.  

For $E^F\in \R$, we will give a formula for an invariant measure on the
level set  
\[
\cS_{E^F}=\Big\{\left(x_1^F ,x_2^F ,\dots,x_n^F ,v_1 ^F,v_2^F
,\dots,v_n^F\right):
\sum_{k=1}^n m_k \|v_k^F\|^2/2 -
\sum_{k=1}^n m_k\omega^2\|x^{F,H}_k\|^2/2=E^F\Big\}  
\]
of the conserved function 
in the space of
\begin{align*}
\left(x_1^F ,x_2^F ,\dots,x_n^F ,v_1 ^F,v_2^F ,\dots,v_n^F \right),
\end{align*}
not
$\left(x_1 ,x_2 ,\dots,x_n ,v_1 ,v_2 ,\dots,v_n \right)$.
Since the balls have positive radii, the vector of centers 
$\left(x_1^F ,x_2^F,\dots,x_n^F \right)$ is required 
to lie in the closed subset $\cD$ of $D^n$ defined by the conditions that
the interiors of the balls are contained in $D$ and do not overlap. 
Set 
$\cS_{E^F,\cD}=\left\{\left(x_1^F ,x_2^F ,\dots,x_n^F,
v_1^F,v_2^F,\dots,v_n^F \right)\in \cS_{E^F}:
\left(x_1^F ,x_2^F ,\dots,x_n^F\right)\in \cD\right\}$  
and denote by $\cS_{E^F}^0$, $\cS_{E^F,\cD}^0$ the subsets of
$\cS_{E^F}$, $\cS_{E^F,\cD}$, resp., for which    
$\left(v_1^F,v_2^F,\dots,v_n^F \right)\neq 0$; equivalently for which  
$E^F + \sum_{k=1}^n  m_k \omega^2 \|x^{F,H}_k \|^2/2> 0$.   
Consider the bijection 
$\Phibar:\cS^0_{E^F,\cD}\rightarrow \cD\times S^{nd-1}$ given by 
\[
\Phibar\left(x_1^F ,x_2^F ,\dots,x_n^F ,v_1^F ,v_2^F ,\dots,v_n^F \right) 
=\left(x_1^F ,x_2^F ,\dots,x_n^F ,\vb_1^F ,\vb_2^F
,\dots,\vb_n^F \right), 
\]
where $S^{nd-1}$ denotes the Euclidean unit sphere in $\R^{nd}$ and 
\begin{equation}\label{vbar}
\vb_k^F 
=
\frac{\sqrt{m_k}v_k^F}
{\left(\sum_{k=1}^nm_k\|v_k^F\|^2\right)^{1/2}}.
\end{equation}
On $\cS^0_{E^F,\cD}$, the invariant measure can be written as   
\begin{equation}\label{f13.10}
f\left(x_1^F ,x_2^F ,\dots,x_n^F \right)
\Phibar^*\left(d\sigma_1\left(\vb_1^F , 
\vb_2^F,\dots,\vb_n^F
\right)\,d\lambda\left( x_1^F ,x_2^F ,\dots,x_n^F \right)\right).  
\end{equation}
Here $d\sigma_1$ denotes the usual measure on the unit sphere $S^{nd-1}$,
$d\lambda$ denotes Lebesgue measure on $\R^{nd}$, 
$\Phibar^*\left(d\sigma_1\left(\vb_1^F
,\vb_2^F,\dots,\vb_n^F
\right)\,d\lambda
\left(x_1^F ,x_2^F ,\dots,x_n^F \right)\right)$ 
denotes the  pullback of the product measure, and the density function $f$
is given by 
\begin{equation}\label{f13.11}
f\left(x_1^F ,x_2^F ,\dots,x_n^F \right)
=\left(E^F + \sum_{k=1}^n  m_k \omega^2 \|x^{F,H}_k
\|^2/2\right)^{nd/2-1}.   
\end{equation}
In particular, given $E^{F,P}$ (equivalently, given $E^{F,K}$, since $E^F$
is fixed), 
the kinetic energy  of the system in the frame $F$ is equidistributed on
the sphere, i.e., the vector  
\begin{align}\label{f13.12}
\left(\sqrt{\frac{m_1}{2}}v_1^F ,\sqrt{\frac{m_2}{2}}v_2^F ,\dots,
\sqrt{\frac{m_n}{2}}v_n^F 
\right)
\end{align}
is uniformly distributed on the $(nd-1)$-dimensional sphere centered at  
0, with radius  $ \left(E^{F,K} \right)^{1/2}$. 

\subsection{Rotating drum with Lambertian reflections}
Our next model is concerned with a rotating drum with a rough surface. On
the mathematical side, this means that the  walls  
$\prt D$ are smooth but reflections are not necessarily specular (i.e., the
angle of reflection is not necessarily equal to the angle of incidence). We
will consider a special class of random reflections. There are two sources
of inspiration for this model. First, as we have already mentioned,  
gas centrifuges used for separation of uranium isotopes have cylindrical
rotors. The surface of the rotor has to be rough or otherwise the molecules
of gas would not have a tendency to rotate. The second reason for
introducing random  reflections is that the microcanonical ensemble  is not
the only invariant measure for the dynamical system described in Section
\ref{RD}. Fig. \ref{fig3} shows a trivial invariant measure represented by
a closed orbit of a single particle. The invariant measure is unique if the
reflections from the walls are random; more precisely, it is unique for
some random reflection laws. 

\begin{figure}
  \includegraphics[width=.4\linewidth]{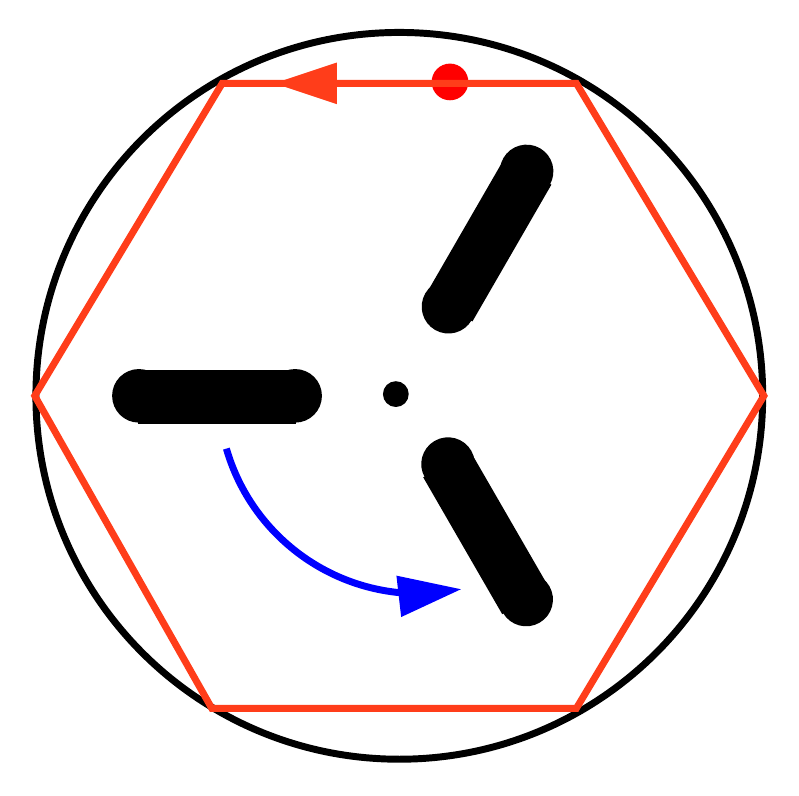}
  \captionof{figure}{An invariant measure is represented by the closed orbit of a single particle.}
  \label{fig3}
\end{figure}

Random reflections can be rigorously defined for hard balls of any size but
on the physics side, only reflections of small balls from a rough surface
can be expected to be random according to the definitions given below. 

We will limit our model to random reflections which are natural in two
ways: (i) the microcanonical  ensemble is an invariant measure for the
model with random reflections and (ii) the random reflections are the limit
of deterministic reflections from rough (fractal) surfaces, assuming that
the size of small crystals forming the rough surface goes to zero.   
Reflection laws with these properties have been studied in \cite{ABS,Feres07,PlakDdim,Plak2dim,Plakbook}.
We will  recall the characterization of such laws in Section \ref{a23.1}. We
will also state and prove an immediate corollary of that result, saying
that if the random angle of reflection does not depend on the angle of
incidence then the reflection law must be Lambertian, also known as the
Knudsen law, 
defined as follows. 
Let $S^{d-1}$ be the unit sphere in $\R^d$, let
$S^{d-1}_+ = \{(x_1,x_2,\dots, x_d)\in S^{d-1}: x_d> 0\}$, and $\bn = (0,0,\dots, 0,1)$. 
We say that the  probability measure $\nu_d$ on $S^{d-1}_+$ is Lambertian if  its density with respect to usual surface measure is
\begin{align}\label{f22.1}
f(v)=b_d \langle v,\bn\rangle,
\end{align}
where $b_d>0$ is the normalizing constant. 

Although Lambert's work \cite{L1760} precedes that of Knudsen \cite{K1934}
by more than one and a half centuries, it is appropriate to use Knudsen's
name in our context because he applied his model to gases, while Lambert
was concerned with reflections of light. 

We will consider a cylindrical drum rotating about its axis and a single 
pointlike particle reflecting from its surface. We will assume that the
reflections are Lambertian in the frame of reference $F$. 
We will derive a  formula for the expected value of the duration of the
free flight between reflections from the cylindrical walls. We will show
that the asymptotic winding number about the axis of rotation for the
particle is equal to the angular velocity of the cylindrical rotor. 

\subsection{Rotating drum with gravitation and Lambertian reflections} 

The last model that we consider includes Lambertian reflections and
gravitation. It is inspired by a bingo machine (see Fig. \ref{fig2old}). In
this case, we consider a  pointlike particle reflecting from an infinitely
heavy circular wall of a 2-dimensional drum with smooth or rough surface, and subject
to gravitational force acting in the plane of rotation. We argue that there
is Fermi acceleration in this model. The precise statement is given in
Section \ref{sec:grav}. We note here that the mathematical heart of the
argument is concerned with a simplified model in which two points are fixed
on the wall in the inertial frame of reference (so that they do not move
with the wall; see Fig. \ref{fig5}). We study a trajectory whose
reflections alternate between the two fixed points. The segments of the
trajectory become closer and closer to line segments and the energy of the
reflecting particle goes to infinity. This simplified model is the basis of
somewhat more realistic theorems.

\begin{figure}
\centering
\begin{minipage}{.45\textwidth}
  \centering
  \includegraphics[width=1\linewidth]{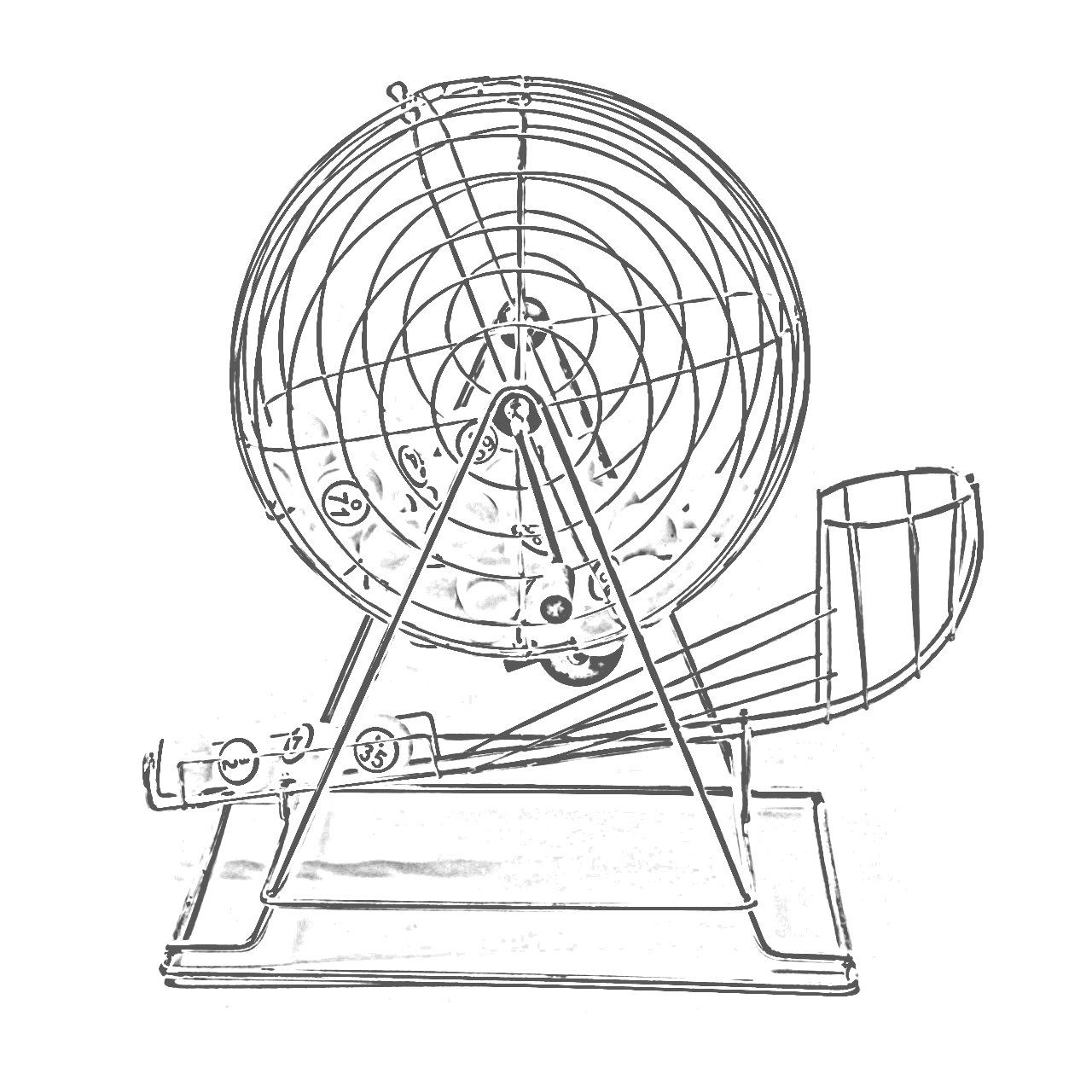}
  \captionof{figure}{Bingo machine.}
  \label{fig2old}
\end{minipage}%
\begin{minipage}{.45\textwidth}
  \centering
  \includegraphics[width=1\linewidth]{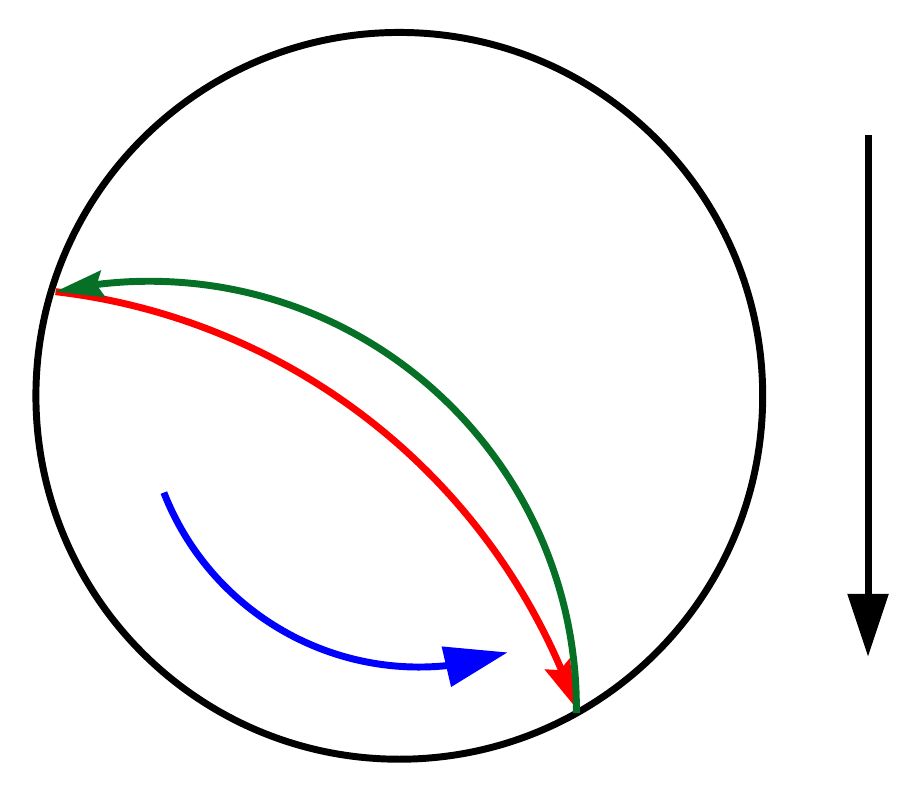}
  \captionof{figure}{Rotating drum with gravitation and pseudo-Lambertian
    reflections. The black arrow represents gravity. The blue arrow
    represents angular velocity. The red and green paths represent two (of
    infinitely many) parabolic segments of a trajectory hitting two points
    fixed in the inertial frame of reference.} 
  \label{fig5}
\end{minipage}
\end{figure}

\subsection{Literature review}

Our paper involves Fermi acceleration, billiards with gravity, billiards
with rotation, and Lambertian reflections, so we briefly review the
literature on  these topics. 

There is vast literature on Fermi acceleration. We point out a recent
addition \cite{Zhou2019}, concerned with rectangular billiards with moving
slits, because it is both interesting and contains an extensive literature
review. See \cite{Dol2008,GRKT,LLC,Pust95} for surveys of Fermi
acceleration. 

Billiards with gravitation or other forces or potentials were considered in 
\cite{DACOSTA2015,Dol,Dullin,Richter_1990}.

Billiards models incorporating  time-dependent boundaries, rotation,
breathing walls or similar effects originated in the papers
\cite{Fermi49,Ulam61} and were more recently analyzed in
\cite{AZ,DACOSTA2015,Dullin,Meyer95,Pust05}. 

The relationship between  rough billiard boundaries and random reflections
(including Lambertian reflections; equivalently Knudsen law) was studied in 
\cite{ABS,Feres16,Popov10,Feres12,Feres07,Feres13,Feres10,Feres12a,PlakDdim,Plak2dim,Plakbook}.

\section{Rotating drum with arbitrary shape and specular reflections}\label{sec:gen}

In this section we consider a rotating drum represented by a bounded domain
$D\subset \R^d$, for some $d\geq 2$, with smooth boundary, rotating with a
fixed angular velocity $\omega>0$.
The rotation applies only to a 2-dimensional plane $H$.  All 
other coordinates of points in $D$ remain constant. We suppose that the
drum $D$ holds $n$ hard (non-intersecting) balls for some finite $n\ge 1$.  
We will write $m_k >0$ for the mass of the $k$-th ball,  $v_k(t)\in \R^d$
will stand for its velocity at time $t$, and $x_k(t)\in \R^d$ will be  its
center at time $t$. 

To avoid uninteresting technical complications, we will assume that the
curvature of $\prt D$ is small relative to the radii of the balls. More
precisely, let $r_{\max}$ be the maximum of the ball radii. We will assume
that for every point $y \in \prt D$, there is exactly one open ball of
radius  $r_{\max}$ contained in $D$ whose boundary is tangent to $\prt D$
at $y$, and the closure of this ball touches $\partial D$ only at $y$.  

In this section we will assume that $\prt D$ and the surfaces of the balls
are perfectly smooth so that their collisions do not involve friction and,
therefore, they do not change the angular velocities of the balls. Hence,
we can and will assume that the balls do not rotate.  

We will assume that there are no simultaneous collisions of more than two
balls (see \cite{BurdzyDuarte20} for a precise formulation).  Unlike
for collisions of two balls, in a
simultaneous collision of more than two balls the evolution of the system
is not uniquely 
determined by conservation of energy, momentum and angular momentum (see,
for example \cite{Vaser79}).  
In view of this assumption and that of the previous paragraph, for the
evolution of the system to be uniquely 
defined at a collision, it is enough to assume  conservation of energy and
momentum.  We will also assume that when a ball collides with the wall
$\prt D$, it  
does not collide with another ball at the same time.  The law of 
specular reflection in the rotating frame then determines the evolution  
at collisions involving the wall.  
The wall $\prt D$ will be assumed to be infinitely heavy so
that we will exclude it from the energy and momentum balances.      

Let $F$ be the non-inertial frame of reference which makes  $D$ static.
For the $k$-th particle, let  $v^F_k(t)\in \R^d$ be its velocity in $F$ at
time $t$, given by \eqref{velocities}, and $x^H_k(t)\in H$ be the  
projection of its center on $H$ at time $t$.  Recall the definitions
\eqref{Fenergies} of the energy quantities in $F$.  

Recall from Section \ref{RD} that $L=\calR(\pi/2)\circ P_H$, and note that for all $x$,
\begin{align}\label{jy27.1}
\langle x, L(x)\rangle = \langle x - P_H(x), \calR(\pi/2)\circ P_H(x) \rangle + \langle P_H(x), \calR(\pi/2)\circ P_H(x) \rangle = 0 + 0 = 0.
\end{align}

\begin{proposition}\label{f17.1}
The energy
$E^F(t)$ is a conserved quantity, i.e., it depends only on the initial
conditions; it does not depend on time $t$. 
\end{proposition}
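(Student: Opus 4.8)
The plan is to first rewrite $E^F(t)$ so that the rotating frame drops out, and then to check invariance separately through free flight, through collisions of two balls, and through collisions of a ball with the wall.

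The preliminary step is an algebraic identity. Since $\calR(-\omega t)$ is orthogonal, \eqref{velocities} gives $\|v_k^F(t)\|=\|v_k(t)-\omega L(x_k(t))\|$, and since $L=\calR(\pi/2)\circ P_H$ preserves Euclidean norm, $\|L(x_k(t))\|=\|P_H(x_k(t))\|=\|x^H_k(t)\|$. Expanding the square, multiplying by $m_k/2$ and summing over $k$, the term quadratic in $\omega$ is exactly $-E^{F,P}(t)$ by \eqref{Fenergies}, so
\[
E^F(t)=E^K(t)-\omega\,J(t),\qquad
J(t):=\sum_{k=1}^n m_k\,\langle v_k(t),L(x_k(t))\rangle,
\]
where $E^K(t)=\sum_{k=1}^n m_k\|v_k(t)\|^2/2$ is the kinetic energy in the inertial frame and $J(t)$ is the total angular momentum about the rotation axis. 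It then suffices to show that $E^K-\omega J$ is unchanged through free flight and through a two-ball collision, and that $E^F$ itself is unchanged through a ball--wall collision.

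During free flight each $v_k$ is constant, so $E^K$ is constant, while $\tfrac{d}{dt}J(t)=\sum_{k=1}^n m_k\langle v_k,L(v_k)\rangle=0$ by \eqref{jy27.1}; hence $E^F$ is constant between collisions. At a collision of balls $i$ and $j$ the centers are unchanged, $E^K$ is preserved by the assumed conservation of energy, and, since the balls are smooth, the exchanged impulse points along the line of centers; conservation of momentum then forces $\Delta v_i=c\,(x_i-x_j)/m_i$ and $\Delta v_j=-c\,(x_i-x_j)/m_j$ for some scalar $c$, so $\Delta J=c\,\langle x_i-x_j,L(x_i-x_j)\rangle=0$, again by \eqref{jy27.1}; hence $\Delta E^F=0$. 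For a ball--wall collision I would work directly in $F$: by hypothesis no other ball is involved, and the velocity of the ball that meets the wall is, in $F$, reflected across the tangent hyperplane to $\partial D$ at the contact point (the wall being static in $F$), which is an orthogonal map; so $\|v_k^F\|$, and hence $E^{F,K}$, is unchanged, while the positions, and hence $E^{F,P}$, are continuous. Thus $E^F$ is unchanged. Combining the three cases, $E^F(t)$ is constant on each free-flight interval and continuous at every collision, so it is constant in $t$.

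Everything past the first step reduces to the two elementary facts that $L$ preserves Euclidean norm and that $\langle x,L(x)\rangle=0$ for all $x$ (equation \eqref{jy27.1}). The step that looks like it might cause trouble is the ball--wall collision, since the moving wall does work on the ball and one might fear having to track that work in the inertial frame; but this is sidestepped entirely by the observation that $E^{F,K}$ is invariant under \emph{any} reflection that is specular in $F$, so no inertial-frame energy bookkeeping is needed there.
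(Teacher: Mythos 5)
Your proof is correct, and it is organized around a decomposition the paper never states explicitly: the global identity $E^F(t)=E^K(t)-\omega J(t)$ with $J$ the total angular momentum about the rotation axis (the Jacobi integral). This makes the logic transparent, since $E^K$ and $J$ are each separately conserved both in free flight and at two-ball elastic collisions, the latter because the exchanged impulse lies along the line of centers and $\langle x,L(x)\rangle=0$ by \eqref{jy27.1}. The paper performs the same underlying expansion of $\|v_k-\omega L(x_k)\|^2$, but does it locally at each collision time and verifies continuity of each group of terms, with a longer chain of substitutions (using momentum conservation term by term) to handle the cross terms $\langle v_k,L(x_k)\rangle$; and for the free-flight segments it appeals to the Hamiltonian formulation of Proposition~\ref{rotatingprop} in the rotating frame rather than to your elementary computation $\tfrac{d}{dt}J=\sum_k m_k\langle v_k,L(v_k)\rangle=0$. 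Both arguments ultimately rest on the same two facts ($L$ is norm-preserving on $H$ with $\langle x,L(x)\rangle=0$, and the collision impulse is radial), and both treat the ball--wall collision identically by working in $F$; your version buys a cleaner bookkeeping and avoids the geometric-mechanics machinery for the free-flight step, at the cost of not exhibiting the rotating-frame Hamiltonian structure that the paper reuses later for the invariant measure.
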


\begin{proof}
It follows from Proposition~\ref{rotatingprop} that $E^F(t)$ 
does not depend on $t$ between collisions. 
We will next argue that $E^F(t)$  does not change its value at collision times.

The positions $x_k(t)$ are continuous across a collision time while the
velocities $v_k(t)$ may have a jump discontinuity.  The quantity
$E^{F,P}(t)$ is therefore continuous so it suffices to show that  
$E^{F,K}(t)$ is continuous. 

Suppose that balls $j$ and $k$ collide at time $s>0$.  We denote 
$v_j^\pm =\lim_{t\to s^{\pm}}v_j(t)$  
and similarly for $v_k^\pm$, $v_j^{F,\pm}$, and $v_k^{F,\pm}$.  
Since the collision is elastic, momentum and energy are conserved across
the collision, so   
\begin{align*}
m_jv_j^+ + m_kv_k^+&= m_jv_j^- +m_kv_k^-\\
m_j\|v_j^+\|^2 + m_k\|v_k^+\|^2&= m_j\|v_j^-\|^2 
+m_k\|v_k^-\|^2.
\end{align*}
Using \eqref{velocities}, we have
\begin{align*}
m_j\|v_j^F(t)\|^2+m_j\|v_k^F(t)\|^2
&=m_j\|v_j(t)-\om L(x_j(t))\|^2
+m_k\|v_k(t)-\om L(x_k(t))\|^2\\
&=m_j\|v_j(t)\|^2+m_k\|v_k(t)\|^2 \\
&\quad +m_j\om^2\|L(x_j(t))\|^2+m_k\om^2\|L(x_k(t))\|^2\\ 
&\quad -2m_j\om\langle v_j(t),L(x_j(t))\rangle
-2m_k\om\langle v_k(t),L(x_k(t))\rangle.  
\end{align*}
The first line after the last equality is continuous across the collision
by conservation of energy in the inertial frame.  The second line after the 
last equality is continuous since the quantities which appear in it 
are continuous.  So it suffices to show that the last line is continuous
across the collision.  
Denote $x_j(s)$, $x_k(s)$ simply by $x_j$, $x_k$.  Using 
conservation of momemtum across the collision in the inertial frame, we
have  
\begin{align*}
\lim_{t\to s^+}&\big[m_j\langle v_j(t),L(x_j(t))\rangle
+m_k\langle v_k(t),L(x_k(t))\rangle\big]\\
&=m_j\langle v_j^+,L(x_j)\rangle +m_k\langle v_k^+,L(x_k)\rangle\\ 
&=m_j\langle v_j^+,L(x_j)\rangle +m_k\langle v_k^+,L(x_j)\rangle  
+m_k\langle v_k^+,L(x_k-x_j)\rangle\\
&=\langle m_jv_j^++m_kv_k^+,L(x_j)\rangle 
+m_k\langle v_k^+,L(x_k-x_j)\rangle\\
&=\langle m_jv_j^-+m_kv_k^-, L(x_j)\rangle
+m_k\langle v_k^+,L(x_k-x_j)\rangle\\
&=m_j\langle v_j^-,L(x_j)\rangle +m_k\langle v_k^-,L(x_k)\rangle 
+m_k\langle v_k^-,L(x_j-x_k)\rangle
+m_k\langle v_k^+,L(x_k-x_j)\rangle\\ 
&=\lim_{t\to s^-}\big[m_j\langle
  v_j(t),L(x_j(t))\rangle 
+m_k\langle v_k(t),L(x_k(t))\rangle\big]
+m_k\langle v_k^+-v_k^-,L(x_k-x_j)\rangle.   
\end{align*}
In an elastic collision of two balls, the component of velocity of each
ball orthogonal to the line connecting the centers at the
time of collision is continuous; it is only the 
component of velocity in the direction of the line of centers which
reflects.  So $v_k^+-v_k^-$ is a multiple of $x_k-x_j$.  Since
$\langle x,L(x)\rangle=0$ for all $x$ (see \eqref{jy27.1}), it follows the second term on the
last line vanishes so that $E^{F,K}(t)$ is continuous across a collision
time of two balls. 

It is clear that $E^F(t)$ is continuous at the time of a collision 
between a ball and the wall $\prt D$ since the ball reflects according to 
the law of specular reflection in the frame $F$, so that 
$\|v^F(t)\|$ is preserved.  We conclude that $E^F(t)$ is in fact a constant
$E^F$.  
\end{proof}

\begin{proposition}
Let 
\begin{align*}
R =\sup_{x\in D}\|x^H\|, \quad M=\sum_{k=1}^n  m_k, 
\quad E^{K}(t)=\sum_{k=1}^n  m_k \|v_k(t)\|^2/2,
\end{align*}
and let $E^F:=E^F(0)=E^F(t)$.  For all $t>0$,  
\[
E^K(t)\leq 2E^F +2M\omega^2R^2.
\]
In particular, since $E^F$ is constant, there is no Fermi acceleration. 
\end{proposition}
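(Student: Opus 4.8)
The plan is to bound the inertial kinetic energy $E^K(t)$ by the conserved quantity $E^F$ plus an explicit constant, using the kinematic relation \eqref{velocities} between $v_k(t)$ and $v_k^F(t)$ together with the boundedness of $D$.

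First I would rewrite \eqref{velocities} as $v_k(t) = \calR(\omega t)(v_k^F(t)) + \omega L(x_k(t))$. Since $\calR(\omega t)$ is an isometry and $\|L(x_k(t))\| = \|\calR(\pi/2)P_H(x_k(t))\| = \|x_k^H(t)\| \le R$, the elementary inequality $\|a+b\|^2 \le 2\|a\|^2 + 2\|b\|^2$ gives
\[
\|v_k(t)\|^2 \le 2\|v_k^F(t)\|^2 + 2\omega^2\|x_k^H(t)\|^2 \le 2\|v_k^F(t)\|^2 + 2\omega^2 R^2.
\]
Multiplying by $m_k/2$ and summing over $k$ yields $E^K(t) \le 2E^{F,K}(t) + M\omega^2 R^2$, with $E^{F,K}(t) = \sum_{k=1}^n m_k\|v_k^F(t)\|^2/2$ as in \eqref{Fenergies}.

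Next I would invoke Proposition~\ref{f17.1}: since $E^F = E^{F,K}(t) + E^{F,P}(t)$ is constant, we have $E^{F,K}(t) = E^F - E^{F,P}(t) = E^F + \sum_{k=1}^n m_k\omega^2\|x_k^H(t)\|^2/2 \le E^F + M\omega^2 R^2/2$, again using $\|x_k^H(t)\| \le R$. Combining the two displays gives $E^K(t) \le 2E^F + 2M\omega^2 R^2$, as claimed; since $E^F$ is a fixed constant, $E^K$ stays bounded and there is no Fermi acceleration.

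There is really no substantive obstacle: the whole argument is the triangle inequality plus the conservation law already established in Proposition~\ref{f17.1}. The only point to handle with a word of care is that the estimate $\|v_k(t)\|^2 \le 2\|v_k^F(t)\|^2 + 2\omega^2R^2$ is purely algebraic in the vectors involved, hence holds for each fixed $t$ and for each one-sided limit of the velocities at a collision time, so the possible jumps of $v_k$ and $v_k^F$ at collisions do not affect the conclusion.
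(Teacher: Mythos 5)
Your proposal is correct and follows essentially the same route as the paper: the bound $\|v_k(t)\|^2\le 2\|v_k^F(t)\|^2+2\omega^2 R^2$ from \eqref{velocities} and the elementary inequality $\|a+b\|^2\le 2\|a\|^2+2\|b\|^2$, followed by the substitution $E^{F,K}(t)=E^F-E^{F,P}(t)$ and the bound $|E^{F,P}(t)|\le M\omega^2R^2/2$. No substantive differences.
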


\begin{proof}
For all $t$,
\begin{align}\label{f13.8}
0 \leq \left| E^{F,P}(t) \right|
=  \sum_{k=1}^n  m_k \omega^2 \|x^H_k(t)\|^2/2
\leq M\omega^2 R^2/2 .
\end{align}
We have 
$\|\calR(\omega t)(v_k^F(t)) -v_k(t)\|= \omega\|L(x_k(t))\|
\leq \omega R$, so 
\begin{align*}
\| v_k(t)\|^2 &\leq 2\|v_k^F(t) \|^2+2\|\calR(\omega t)(v_k^F(t))-v_k(t)\|^2    
\leq 2\|v_k^F(t) \|^2 +  2 \omega^2 R^2.
\end{align*}
This and \eqref{f13.8} imply that for any $t>0$,
\begin{align*}
E^K(t)&=\sum_{k=1}^n \frac{m_k}2 \|v_k(t)\|^2
\leq \sum_{k=1}^n m_k \left(\|v_k^F(t) \|^2 +   \omega^2 R^2\right)
= 2 E^{F,K}(t)+M  \omega^2 R^2 \\
&=2(E^F- E^{F,P}(t))+M\omega^2R^2
\leq 2E^F+ 2M\omega^2 R^2. 
\end{align*}
\end{proof}

The next proposition concerns the invariant measure.  
Proposition~\ref{rotatingprop} describes the invariant measure for the
dynamical system consisting of noninteracting point particles free to roam 
in the whole Euclidean space.  Our vector of centers 
$\left(x_1^F,x_2^F,\dots,x_n^F \right)$ is constrained to lie in $\cD$,
which has boundaries corresponding to collisions of the balls or of balls 
and the wall $\partial D$.  Recall the level set $\cS_{E^F}$ and its
subsets $\cS_{E^F,\cD}$ and $\cS^0_{E^F,\cD}$ defined in the introduction.   
\begin{proposition}\label{f22.6}
Fix energy $E^F$ such that $\cS^0_{E^F,\cD}\neq \emptyset$.  The measure      
defined in \eqref{f13.10} and \eqref{f13.11} is the restriction to
$\cS^0_{E^F,\cD}$ of an invariant measure on $\cS_{E^F,\cD}$.   
\end{proposition}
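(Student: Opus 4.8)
The plan is to take the measure \eqref{f13.10}--\eqref{f13.11}, extend it by zero to all of $\cS_{E^F,\cD}$, and show this extension is invariant under the dynamics of Section~\ref{sec:gen}, by combining Proposition~\ref{rotatingprop} for the motion between collisions with a direct check at collisions. Between collisions the $n$ particles, described in the coordinates $(x_1^F,\dots,x_n^F,v_1^F,\dots,v_n^F)$, move by the free non-interacting motion whose invariant measure on $\cS_{E^F}$ is exactly \eqref{f13.10}--\eqref{f13.11}, by Proposition~\ref{rotatingprop}; the only change here is that the vector of centers is confined to the closed set $\cD$ and the flow is stopped upon reaching $\partial D$. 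Since the restriction of an invariant measure to a region is invariant for the stopped flow, \eqref{f13.10} is preserved along every trajectory segment that does not meet the boundary of $\cD$.

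Next I would check that the two collision maps --- elastic ball--ball collisions and specular ball--wall reflections in the frame $F$ --- each preserve \eqref{f13.10}. Fix such a collision. The centers $(x_1^F,\dots,x_n^F)$ are continuous across it, so the factor $f(x_1^F,\dots,x_n^F)\,d\lambda(x_1^F,\dots,x_n^F)$ is unchanged. By the proof of Proposition~\ref{f17.1}, the kinetic energy $E^{F,K}=\sum_k m_k\|v_k^F\|^2/2$ is unchanged as well: for a ball--wall collision because specular reflection preserves $\|v_k^F\|$ while leaving the other velocities fixed, and for a ball--ball collision because $E^{F,K}$ was shown there to be continuous across the collision, using $\langle x, L(x)\rangle=0$ from \eqref{jy27.1}. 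Each collision map is moreover linear in the velocities: the ball--wall map is a reflection across a tangent hyperplane of $\partial D$, and the ball--ball map is the classical two-body elastic map --- and passing to the frame $F$ via \eqref{velocities} only conjugates the latter by a rotation, the translation term $\omega L(x)$ having no effect because $\langle L(e),e\rangle=0$ by \eqref{jy27.1}, with $e$ along the line of centers. Hence in the mass-weighted velocities $w_k=\sqrt{m_k}\,v_k^F$ each collision map is a norm-preserving linear map $O$ of $\R^{nd}$ (depending on the centers through the tangent hyperplane, resp.\ the line of centers), and therefore in the coordinates supplied by $\Phibar$ it acts as $(x_1^F,\dots,x_n^F,\vb_1^F,\dots,\vb_n^F)\mapsto(x_1^F,\dots,x_n^F,O(\vb_1^F,\dots,\vb_n^F))$, i.e.\ for fixed centers as a rotation of the sphere fiber $S^{nd-1}$. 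Such a map preserves $d\sigma_1$, and combined with the preservation of the position factor it preserves the product measure pulled back by $\Phibar$, namely \eqref{f13.10}.

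It remains to assemble these facts. Under the standing assumptions of Section~\ref{sec:gen} --- curvature of $\partial D$ small relative to the radii, no simultaneous collision of more than two balls, and no coincidence of a ball--wall with a ball--ball collision --- for a.e.\ initial condition the trajectory undergoes only finitely many collisions in any finite time interval, so off a null set the flow on $\cS^0_{E^F,\cD}$ is a concatenation of free-flow segments and collision maps, each of which preserves \eqref{f13.10}; this is the standard argument that a billiard flow preserves the Liouville measure on its energy surface, and it gives invariance of \eqref{f13.10} on $\cS^0_{E^F,\cD}$. Finally, $\cS_{E^F,\cD}\setminus\cS^0_{E^F,\cD}$ is the ``resting'' set where all $v_k^F=0$, a closed subset of $\cS_{E^F,\cD}$ of strictly smaller dimension and hence of measure zero there; since the piecewise-smooth flow carries null sets to null sets, the zero-extension of \eqref{f13.10} to $\cS_{E^F,\cD}$ is invariant as well.

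The step I expect to be the main obstacle is this last assembly --- making rigorous the passage from invariance under the free flow and under the individual collision maps to invariance under the full flow. This requires the a.e.\ finiteness of the number of collisions in finite time and, in effect, the description of the billiard flow as a suspension over the first-collision (Poincar\'e) map; the hypotheses imposed in Section~\ref{sec:gen} are exactly what rules out the relevant pathologies (grazing trajectories, accumulation of collision times, non-uniqueness at multi-ball collisions). The remaining steps --- the restriction argument and the two collision-map computations --- are routine given Proposition~\ref{rotatingprop} and the proof of Proposition~\ref{f17.1}.
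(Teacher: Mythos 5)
Your proposal is correct and follows essentially the same route as the paper: restrict the invariant measure from Proposition~\ref{rotatingprop} to $\cD$ and check that each collision map acts as an orthogonal transformation on the mass-weighted velocities $\left(\sqrt{m_1}v_1^F,\dots,\sqrt{m_n}v_n^F\right)$, hence preserves $d\sigma_1$. The only difference is cosmetic: where you deduce orthogonality abstractly from linearity plus preservation of $E^{F,K}$ (with the affine shift between frames cancelling because $\langle L(x_j-x_k),\bm{e^\|}\rangle=0$), the paper writes out the $2\times 2$ elastic-collision matrix explicitly, verifies the same relation in the frame $F$ via that identity, and additionally checks that the map carries the incoming half-sphere bijectively onto the outgoing one — a small point worth adding to your collision-map step.
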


\begin{proof}
The invariant measure on $\cS_{E^F}$ in the statement of 
Proposition~\ref{rotatingprop} restricts to a measure 
on $\cS_{E^F,\cD}\subset \cS_{E^F}$ (making the obvious adjustment since 
the discussion in Section~\ref{ME} is formulated in terms of momenta
instead of velocities), which is invariant for  
$\left(x_1^F,\dots,x_n^F\right)\notin \partial \cD$, and whose restriction 
to $\cS^0_{E^F,\cD}$ has the form \eqref{f13.10}, \eqref{f13.11}.  
It remains to analyze the collisions.    

Recall that we are not considering simultaneous collisions of more than two
balls nor collisions in which a ball collides with $\partial D$ at the same
time that it collides with another ball.  It is known that the set of
initial conditions giving rise to simultaneous collisions of more  
than two balls has measure zero (\cite{Alex}).  We believe that the
arguments of that paper apply also to simultaneous collisions of 
$\partial D$ and more than one ball.  

At each point of $\partial \cD$ corresponding to a collision of two balls
or of a ball and $\partial D$, there is a collision map from the space 
of vectors of incoming velocities to that of outgoing velocities.  It
suffices to show that this map on velocities preserves the   
measure $d\sigma_1\left(\vb_1^F,\vb_2^F,\ldots,\vb_n^F\right)$ appearing in 
\eqref{f13.10}.    

Suppose that balls $j$ and $k$ collide at time $s>0$.  The components of
$v_j(t)$ and $v_k(t)$ orthogonal to the line connecting the locations 
$x_j(s)$ and $x_k(s)$ of the centers at time $s$ are continuous; only 
the components parallel to this line can jump.  Set 
$\bm{e^\|}=\frac{x_k(s)-x_j(s)}{\|x_k(s)-x_j(s)\|}$ and 
\[
v_j^{\|\pm}=\lim_{t\to s^\pm}\left\langle v_j(t),\bm{e^\|}\right\rangle,\qquad
v_k^{\|\pm}=\lim_{t\to s^\pm}\left\langle v_k(t),\bm{e^\|}\right\rangle.
\]
A collision occurs only if 
$v_j^{\|-} > v_k^{\|-}$.  This condition defines the space of incoming
velocities.  The space of outgoing velocities is defined by the
complementary condition $v_j^{\|+} < v_k^{\|+}$.   
For the frame $F$ we similarly set 
\[
v_j^{F,\|\pm}=\lim_{t\to s^\pm}\left\langle\mathcal{R}(\omega t)
v_j^F(t),\bm{e^\|}\right\rangle,\qquad
v_k^{F,\|\pm}=\lim_{t\to s^\pm}\left\langle\mathcal{R}(\omega t)
v_k^F(t),\bm{e^\|}\right\rangle.
\]
In light of \eqref{velocities}, we have 
\begin{equation}\label{velocitiespm}
v_j^{F,\|\pm}=v_j^{\|\pm}-\omega 
\left\langle L(x_j(s)),\bm{e^\|}\right\rangle
\end{equation}
and similarly for $v_k^{F,\|\pm}$.  On the right-hand side, the term 
$-\omega \left\langle L(x_j(s)),\bm{e^\|}\right\rangle$ is the same for 
both equations corresponding to $\pm$.  
Since $x_j(s)-x_k(s)$ is a multiple of $\bm{e^\|}$ and 
$\langle L(x),x\rangle = 0$ for all $x$, the
incoming and outgoing conditions are equivalent to 
$v_j^{F,\|-} > v_k^{F,\|-}$ and $v_j^{F,\|+} < v_k^{F,\|+}$.  

Standard formulas for the transformation of velocities in a  totally
elastic reflection of balls in one dimension can be written in the
following form
\[
\begin{pmatrix}
\sqrt{m_j}v_j^{\|+} \\
\\
\sqrt{m_k}v_k^{\|+}
\end{pmatrix}
=
\begin{pmatrix}
\displaystyle\frac{2\sqrt{m_jm_k}}{m_j+m_k} 
& \displaystyle\frac{m_j-m_k}{m_j+m_k} \\ 
\\
\displaystyle\frac{m_k-m_j}{m_j+m_k} 
& \displaystyle\frac{2\sqrt{m_jm_k}}{m_j+m_k} 
\end{pmatrix}
\begin{pmatrix}
\sqrt{m_k}v_k^{\|-} \\ 
\\
\sqrt{m_j}v_j^{\|-}  
\end{pmatrix}.
\]
We claim that the same relation holds with 
$v_j^{\|+}$, $v_k^{\|+}$, $v_j^{\|-}$, $v_k^{\|-}$ replaced by
$v_j^{F,\|+}$, $v_k^{F,\|+}$, $v_j^{F,\|-}$, $v_k^{F,\|-}$, i.e. 
\begin{equation}\label{Frelation}
\begin{pmatrix}
\sqrt{m_j}v_j^{F,\|+} \\
\\
\sqrt{m_k}v_k^{F,\|+}
\end{pmatrix}
=
\begin{pmatrix}
\displaystyle\frac{2\sqrt{m_jm_k}}{m_j+m_k} 
& \displaystyle\frac{m_j-m_k}{m_j+m_k} \\ 
\\
\displaystyle\frac{m_k-m_j}{m_j+m_k} 
& \displaystyle\frac{2\sqrt{m_jm_k}}{m_j+m_k} 
\end{pmatrix}
\begin{pmatrix}
\sqrt{m_k}v_k^{F,\|-} \\ 
\\
\sqrt{m_j}v_j^{F,\|-}   
\end{pmatrix}.
\end{equation}
According to \eqref{velocitiespm}, this is equivalent to the identity    
\[
\begin{pmatrix}
\sqrt{m_j}\left\langle L(x_j(s),\bm{e^\|}\right\rangle \\
\\
\sqrt{m_k}\left\langle L(x_k(s),\bm{e^\|}\right\rangle 
\end{pmatrix}
=
\begin{pmatrix}
\displaystyle
\frac{2\sqrt{m_jm_k}}{m_j+m_k} & 
\displaystyle\frac{m_j-m_k}{m_j+m_k} \\
\\
\displaystyle\frac{m_k-m_j}{m_j+m_k} & 
\displaystyle\frac{2\sqrt{m_jm_k}}{m_j+m_k} 
\end{pmatrix}
\begin{pmatrix}
\sqrt{m_k}\left\langle L(x_k(s),\bm{e^\|}\right\rangle \\
\\
\sqrt{m_j}\left\langle L(x_j(s),\bm{e^\|}\right\rangle 
\end{pmatrix}.
\]
A little computation shows that both components of this vector equation
reduce to $\left\langle L(x_j(s)-x_k(s)),\bm{e^\|}\right\rangle =0$.   
As above, this holds since $x_j(s)-x_k(s)$ is a multiple of $\bm{e^\|}$ and  
$\langle L(x),x\rangle=0$ for all $x$.  Thus \eqref{Frelation} is verified.   

Since $m_j,m_k>0$ and
\begin{align*}
\left(\frac{2\sqrt{m_j m_k}}{m_j+m_k}\right)^2 
+ \left(\frac{m_j -m_k}{m_j+m_k}\right)^2 =1,
\end{align*}
there is a unique $\alpha\in(-\pi/2, \pi/2)$ such that
\begin{align*}
\cos \alpha = \frac{2\sqrt{m_j m_k}}{m_j+m_k},
\qquad \sin \alpha = \frac{m_k -m_j}{m_j+m_k}.
\end{align*}
Hence we can write
\begin{align}\label{f17.2}
\begin{pmatrix}
\displaystyle\sqrt{m_j}v_j^{F,\|+} \\
\displaystyle\sqrt{m_k}v_k^{F,\|+}
\end{pmatrix}
=
\begin{pmatrix}
\cos \alpha  & 
-\sin \alpha \\
\sin \alpha & 
\cos \alpha 
\end{pmatrix}
\begin{pmatrix}
\displaystyle\sqrt{m_k}v_k^{F,\|-} \\
\displaystyle\sqrt{m_j}v_j^{F,\|-}
\end{pmatrix}.
\end{align}
The incoming condition $v_j^{F,\|-} > v_k^{F,\|-}$ can be rewritten as 
\begin{align}\label{f18.1}
\sqrt{m_j}v_j^{F,\|-} > \sqrt{\frac{m_j}{m_k}} 
\sqrt{m_k}v_k^{F,\|-}.
\end{align}
Straightforward calculations show that the map
\[
\begin{pmatrix}
u_1^-\\
u_2^-
\end{pmatrix}
\mapsto
\begin{pmatrix}
u_1^+\\
u_2^+
\end{pmatrix}
=\begin{pmatrix}
\cos \alpha  & 
-\sin \alpha \\
\sin \alpha & 
\cos \alpha 
\end{pmatrix}
\begin{pmatrix}
u_2^-\\
u_1^-
\end{pmatrix}
\]
is a bijection between the incoming half-plane   
$u_1^->\sqrt{\frac{m_j}{m_k}} u_2^-$ and the outgoing half-plane   
$u_1^+<\sqrt{\frac{m_j}{m_k}} u_2^+$.

The transformation in \eqref{f17.2} is the composition of the symmetry
\begin{align*}
\left(\sqrt{m_j}v_j^{F,\|-} ,
\sqrt{m_k}v_k^{F,\|-} \right)
\to
\left( \sqrt{m_k}v_k^{F,\|-} ,
\sqrt{m_j}v_j^{F,\|-} \right) 
\end{align*}
and rotation by the angle $\alpha$. 
Since the collision map leaves constant the velocity components orthogonal
to $\bm{e^\|}$ and the velocities of the noncolliding balls, it is an
orthogonal transformation in the space of 
$\left(\vb_1^F,\vb_2^F,\ldots,\vb_n^F\right)$ given by \eqref{vbar}, so 
certainly it maps the measure 
$d\sigma_1\left(\vb_1^F,\vb_2^F,\ldots,\vb_n^F\right)$ on the incoming
half-sphere to the same measure on the outgoing half-sphere.  
This completes the proof in the case of the collision of two balls.  

The collision of a ball with the wall of the rotating drum is simpler.  
If the $j$-th ball collides with $\partial D$ at time $s$ and $\bm{\nu}$
denotes the outward-pointing normal at the point of contact, the incoming 
velocities are those for which $\langle v_j^F(s),\bm{\nu}\rangle>0$
and the outgoing velocities are those for which 
$\langle v_j^F(s),\bm{\nu}\rangle<0$.  The collision map just reflects
$v_j^F(s)$ across the  
tangent plane to $\partial D$ and leaves all other
velocities unchanged.  This map clearly preserves the measure 
$d\sigma_1\left(\vb_1^F,\vb_2^F,\ldots,\vb_n^F\right)$.  
\end{proof}

\section{A pointlike particle in rotating drum}

In this section, 
we will focus on the case of a single pointlike particle in a rotating
drum. This model is inspired by ``Knudsen gas,'' i.e.,  gas diluted to the
point that molecules typically do not collide on the length scale of the
diameter of the drum; see \cite{K1934}.  

\subsection{Lambertian reflections as a model for a rough surface}\label{a23.1}

In this subsection we will prove in Corollary \ref{f21.6} that the Knudsen reflection law \eqref{f22.1} is the
only random reflection distribution which can arise from classical specular
reflections from a fractal surface consisting of small smooth crystals if
we assume homogeneity and lack of memory, i.e., if we assume that the
limiting distribution of the angle of reflection does not depend on the
angle of incidence, and it is the same at all boundary points.
 
Corollary \ref{f21.6} follows easily from results in
 \cite{ABS,Feres07,PlakDdim,Plak2dim,Plakbook}, stated below as Propositions \ref{j7.1} and \ref{m16.2}.
The results in \cite{ABS} rediscovered (independently) those in
\cite{Plak2dim}; both articles were concerned with two-dimensional
models. The full generality in an arbitrary number of dimensions was
achieved in  \cite{PlakDdim, Plakbook}. 
Our presentation will combine  \cite[Thms. 2.2, 2.3]{ABS},
\cite[Sect.~4]{Feres07} and \cite[Thms. 4.4, 4.5]{Plakbook}. 
For a very detailed and careful presentation of the billiards
model in the plane see  \cite[Ch. 2]{CM}. 
For the multidimensional setup see \cite{Plakbook}. We will be somewhat
informal as the technical aspects of the model are tangential to our
project. 

Suppose that $M$ is a subset of the half-space 
$\{(x_1,x_2,\dots, x_d)\in \R^d: x_d\leq 0\}$
and its intersection with every ball (with finite radius) consists of a
finite number of smooth surfaces. 
Informally, $M$ consists of very small mirrors (walls of billiard tables)
that are supposed to model a macroscopically flat but rough reflecting
surface. 

Recall \eqref{f22.1} and the definitions preceding it.
Let 
$L_* =\{(x_1,x_2,\dots, x_d)\in \R^d: x_d= 0\}$
and $B= L_* \times S^{d-1}_+ $.
Define a $\sigma$-finite
measure $\Lambda(dx, dv)$ on $B$ as the product of Lebesgue measure on
$L_*$ and  $\nu_d$ on $S^{d-1}_+$. 

We will consider
a pointlike particle trajectory in $\R^d$ starting on $L_*$.
We will consider trajectories which are straight line segments between
reflections and will assume that they reflect from surfaces
comprising $M$ 
according to the rule of specular reflection, that is, the
angle of incidence is equal to the angle of reflection, for every reflection.

Suppose that a trajectory starts from  $x \in L_*$ in the direction $-v$,
where $v\in S^{d-1}_+$, at time 0, reflects from 
surfaces of $M$ and returns to $y \in L_*$ 
at a time $t$, and
$t>0$ is the smallest time with this property. 
Let $w\in S^{d-1}_+$ be the velocity of the particle at time $t$. This defines a
mapping $K : B \to B$, given by $K(x,v) = (y,w)$.
Clearly, $K$ depends on $M$.

It can be proved (see, e.g., \cite[Prop. 2.1]{ABS}) that under mild and natural assumptions,
for $\Lambda$-almost all $(x,v)\in B$, a trajectory starting from
$x$ with velocity $-v$, and reflecting from surfaces comprising $M$ will return to $L_*$
after a finite number of reflections. 

We will write $\P(x, v; dy, dw)$ to denote a Markov
transition kernel on $B$, that is, for fixed $(x,v) \in B$,
$\P(x, v; dy, dw)$ is a probability measure on $B$. We
assume that $\P$ satisfies the usual measurability conditions
in all variables.

We will use $\delta_x(y)$ to denote Dirac's delta function.
Recall the transformation $K$ and let $\P_{K}$ be defined by
$\P_{K}(x, v; dy, dw) = \delta_{K(x,v)}(y,w)dydw$. In other
words, $\P_{K}$ represents a deterministic Markov kernel, with
the atom at $K(x,v)$.

If $\mu_n$, $n\geq 1$, and $\mu_\infty$ are non-negative
$\sigma$-finite measures on some measurable space $\Gamma$ then
we will say that $\mu_n$ converge weakly to $\mu_\infty$ if
there exists a sequence of sets $\Gamma_j$, $j\geq 1$, such
that $\bigcup_{j\geq 1} \Gamma_j = \Gamma$, $\mu_n(\Gamma_j) <
\infty$, $\mu_\infty(\Gamma_j)<\infty$ for all $n$ and $j$, and
for every fixed $j$, the finite measures $\mu_n|_{\Gamma_j}$ 
converge weakly to $\mu_\infty|_{\Gamma_j}$.   

\begin{proposition}\label{j7.1}
 Suppose that for some sequence of sets $M_n$,
corresponding transformations $K_n$, and some Markov transition kernel $
\P(x, v; dy, dw)$, we have 
\begin{align*}
\Lambda(dx, dv)  \P_{K_n} (x, v; dy, dw)
\to
\Lambda(dx, dv)  \P(x, v; dy, dw)
\end{align*}
in the sense of weak convergence
on $B^2$ as $n\to \infty$. Then $\P$ is symmetric
with respect to $\Lambda$ in the sense that for any smooth
functions $f$ and $g$ on $B$ with compact support we have
\begin{align*}
\int_{B^2} f(y,w) \P (x, v; dy, dw)
g(x,v) \Lambda(dx, dv)
= \int_{B^2} g(y,w) \P (x, v; dy, dw)
f(x,v) \Lambda(dx, dv).
\end{align*}
In particular, $\Lambda$ is invariant in the sense that
\begin{align}\label{f20.1}
\int_{B^2} f(y,w) \P (x, v; dy, dw) \Lambda(dx, dv)
= \int_{B} f(x,v) \Lambda(dx, dv).
\end{align}

\end{proposition}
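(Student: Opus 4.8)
The plan is to exploit that each of the \emph{deterministic} kernels $\P_{K_n}$ already satisfies the required symmetry with respect to $\Lambda$, and then to transfer that symmetry to the weak limit $\P$. The argument rests on two structural properties of the billiard scattering map $K_n$ attached to the mirror set $M_n$. First, $K_n$ is a $\Lambda$-a.e.\ involution: this is simply the time-reversibility of specular reflection, since reversing the velocity of a trajectory at any instant produces the same broken line traversed backwards, so that if the trajectory issued from $x\in L_*$ with velocity $-v$ returns to $L_*$ for the first time at $y$ with velocity $w$, then the trajectory issued from $y$ with velocity $-w$ returns to $L_*$ for the first time at $x$ with velocity $v$; equivalently $K_n(K_n(x,v))=(x,v)$ wherever defined. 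Second, $K_n$ preserves $\Lambda$: this is the classical fact that the cosine-law density $b_d\langle v,\bn\rangle$ is precisely the weight making the cross-section measure on $L_*$ invariant under the (Liouville-measure-preserving) billiard flow in $\{x_d\le0\}$, and in the present piecewise-smooth setting it is among the facts recorded in \cite{ABS,Feres07,Plakbook} (cf.\ \cite[Prop.~2.1]{ABS}), where the a.e.\ finiteness of the number of reflections quoted above is also proved.

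Granting these two facts, the $n$-level symmetry is a one-line change of variables. For smooth compactly supported $f,g$ on $B$, since $\P_{K_n}(x,v;dy,dw)$ is the unit point mass at $K_n(x,v)$,
\begin{align*}
\int_{B^2} f(y,w)\,g(x,v)\,\Lambda(dx,dv)\,\P_{K_n}(x,v;dy,dw)
=\int_B f\bigl(K_n(x,v)\bigr)\,g(x,v)\,\Lambda(dx,dv) .
\end{align*}
Substituting $(x,v)\mapsto K_n(x,v)$, which preserves $\Lambda$ and (being an involution) sends $K_n(x,v)$ back to $(x,v)$, turns the right-hand side into $\int_B f(x,v)\,g\bigl(K_n(x,v)\bigr)\,\Lambda(dx,dv)$, i.e.\ the same expression with $f$ and $g$ interchanged. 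Hence, for all such $f$ and $g$,
\begin{align*}
\int_{B^2} f(y,w)\,g(x,v)\,\Lambda(dx,dv)\,\P_{K_n}(x,v;dy,dw)
=\int_{B^2} g(y,w)\,f(x,v)\,\Lambda(dx,dv)\,\P_{K_n}(x,v;dy,dw) .
\end{align*}

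Next I would pass to the limit. The functions $(x,v,y,w)\mapsto f(y,w)g(x,v)$ and $(x,v,y,w)\mapsto g(y,w)f(x,v)$ are continuous with compact support on $B^2$, so the hypothesized weak convergence $\Lambda(dx,dv)\,\P_{K_n}(x,v;dy,dw)\to\Lambda(dx,dv)\,\P(x,v;dy,dw)$ applies to both sides of the last display, yielding the asserted symmetry for $\P$. For the invariance statement \eqref{f20.1}, I would take a smooth compactly supported $f\ge0$, choose smooth compactly supported $g_k$ with $0\le g_k\uparrow1$, apply the symmetry with $g=g_k$, and let $k\to\infty$: on the left, monotone convergence gives $\int_{B^2} f(y,w)\,\Lambda(dx,dv)\,\P(x,v;dy,dw)$; on the right, $\int_B g_k(y,w)\,\P(x,v;dy,dw)\uparrow1$ for every $(x,v)$ because $\P(x,v;\cdot)$ is a probability measure, so monotone convergence gives $\int_B f(x,v)\,\Lambda(dx,dv)$. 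The general (signed) $f$ then follows by linearity.

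The step I expect to be the crux is the first one: that $K_n$ is a $\Lambda$-a.e.\ defined involution preserving $\Lambda$ in the generality allowed here, where $M_n$ is only a locally finite union of smooth surfaces. Time-reversibility is immediate, a.e.\ finiteness of the number of reflections is quoted from \cite[Prop.~2.1]{ABS}, and the invariance of $\Lambda$ is the standard Liouville-to-cross-section computation, which I would invoke from \cite{ABS,Feres07,Plakbook} rather than reprove. Everything after that --- the change of variables and the limit passage --- is routine.
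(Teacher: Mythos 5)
Your argument is correct, and it is essentially the argument behind this result in the sources the paper is quoting: the paper itself offers no proof of Proposition \ref{j7.1}, presenting it as a combination of \cite[Thms.~2.2, 2.3]{ABS}, \cite[Sect.~4]{Feres07} and \cite[Thms.~4.4, 4.5]{Plakbook}, and the proof there is exactly your chain (specular reflection makes $K_n$ a $\Lambda$-a.e.\ involution; the cosine density makes $\Lambda$ the invariant cross-section measure of the Liouville-preserving billiard flow, so $K_n$ preserves $\Lambda$; hence each $\Lambda(dx,dv)\,\P_{K_n}(x,v;dy,dw)$ is symmetric under $(x,v,y,w)\mapsto(y,w,x,v)$; symmetry survives the weak limit; invariance follows by letting $g\uparrow 1$ and using that $\P(x,v;\cdot)$ is a probability measure). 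Your identification of the crux --- that the involution and measure-preservation properties of $K_n$ must be imported from the cited references rather than reproved --- is consistent with the informal level at which the paper treats this material.
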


Recall that $\delta_x(y)$ denotes Dirac's delta function.
Suppose that the probability kernel $\P$ in Proposition \ref{j7.1} 
satisfies $\P(x,v; dy, dw) = \delta_x(y)dy \wt \P(x,v; dw)$ for some $\wt \P$.
Heuristically, this means that the trajectory starting at $x$ is
instantaneously reflected from a mirror located infinitesimally close to
$L_*$.  Then  \eqref{f20.1} implies that for all smooth bounded functions
$f$ on $S^{d-1}_+$, and almost all $x$, 
\begin{align}\label{f21.2}
\int_{\left(S^{d-1}_+\right)^2} f(w) \wt \P (x, v; dw) \nu_d(dv)
= \int_{S^{d-1}_+} f(v) \nu_d(dv).
\end{align}
If, in addition, we assume that $\wt \P (x, v; dw)=\wt \P ( dw)$,
i.e., $\wt \P (x, v; dw)$ does not depend on $x$ and $v$, then 
for all smooth bounded functions $f$  on $S^{d-1}_+$, 
\begin{align*}
\int_{\left(S^{d-1}_+\right)^2} f(w) \wt \P ( dw) \nu_d(dv)
= \int_{S^{d-1}_+} f(v) \nu_d(dv),
\end{align*}
and, therefore,
\begin{align}\label{f21.4}
\int_{S^{d-1}_+} f(w) \wt \P ( dw) 
= \int_{S^{d-1}_+} f(v)  \nu_d(dv).
\end{align}
Hence, $\wt \P ( dw)= \nu_d(dw)$.  
We have just proved the following result.

\begin{corollary}\label{f21.6}

Suppose that for some sequence of sets $M_n$,
corresponding transformations $K_n$, and some  $\wt \P( dw)$, we have
\begin{align*}
\Lambda(dx, dv)  \P_{K_n} (x, v; dy, dw)
\to
\Lambda(dx, dv) \wt \P( dw)
\end{align*}
in the sense of weak convergence
on $B^2$ as $n\to \infty$. Then 
\begin{align}\label{f22.2}
\wt \P ( dw) = \nu_d( dw), \qquad w\in S^{d-1}_+.
\end{align}
\end{corollary}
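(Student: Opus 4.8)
The plan is to obtain Corollary~\ref{f21.6} as a short consequence of Proposition~\ref{j7.1}, following the discussion that precedes the statement.

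First, I would note that the limiting measure $\Lambda(dx,dv)\,\wt\P(dw)$ on $B^2$ is, by the convention in force here, the measure built from the deterministic-return Markov kernel $\P(x,v;dy,dw) = \delta_x(y)\,dy\,\wt\P(dw)$; heuristically the particle is reflected back from a mirror infinitesimally close to $L_*$, so its arrival point $y$ coincides with its departure point $x$. Consequently the hypothesis of the corollary is exactly the hypothesis of Proposition~\ref{j7.1} applied to this $\P$, and that proposition yields the invariance identity \eqref{f20.1} for all smooth compactly supported $f$ on $B$.

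Next, I would insert the explicit form of $\P$ into \eqref{f20.1}. Writing $\Lambda(dx,dv) = dx\,\nu_d(dv)$ on $L_* \times S^{d-1}_+$ and using that $\nu_d$ is a probability measure, the integral over the incoming velocity $v$ on the left-hand side contributes only a factor $1$, so \eqref{f20.1} reduces to
\[
\int_{L_*}\left(\int_{S^{d-1}_+} f(x,w)\,\wt\P(dw) - \int_{S^{d-1}_+} f(x,v)\,\nu_d(dv)\right)dx = 0
\]
for every smooth compactly supported $f$ on $B$. Specializing to product functions $f(x,v)=\varphi(x)g(v)$, with $\varphi$ smooth and compactly supported on $L_*$ and $g$ smooth and bounded on $S^{d-1}_+$, the arbitrariness of $\varphi$ forces $\int_{S^{d-1}_+} g(w)\,\wt\P(dw) = \int_{S^{d-1}_+} g(v)\,\nu_d(dv)$ for all such $g$; this is \eqref{f21.4}, and it gives $\wt\P(dw)=\nu_d(dw)$, i.e.\ \eqref{f22.2}.

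The main obstacle is not in this argument at all: all of the real content --- the symmetrization and invariance of $\Lambda$ under weak limits of specular-reflection kernels --- is packaged into Proposition~\ref{j7.1}, which is quoted from \cite{ABS,Feres07,PlakDdim,Plak2dim,Plakbook}. The only points requiring a little care here are reading the shorthand in the hypothesis correctly (so that the limiting kernel genuinely has the form $\delta_x(y)\,dy\,\wt\P(dw)$) and the fact that $\Lambda$ has infinite mass on $L_*$, which is precisely why one restricts to compactly supported test functions and then recovers the velocity identity by factoring $f(x,v)=\varphi(x)g(v)$.
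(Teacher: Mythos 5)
Your proposal is correct and follows essentially the same route as the paper: both identify the limiting kernel with $\delta_x(y)\,dy\,\wt\P(dw)$, apply Proposition~\ref{j7.1} to obtain the invariance identity \eqref{f20.1}, and then integrate out $x$ and $v$ (using that $\nu_d$ is a probability measure) to reduce to \eqref{f21.4} and conclude $\wt\P = \nu_d$. The only cosmetic difference is that you localize in $x$ via product test functions $\varphi(x)g(v)$ whereas the paper states the intermediate identity \eqref{f21.2} for almost all $x$; these are the same step.
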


Recall that a reflection law is called Lambertian or Knudsen  if it is
given by \eqref{f22.1} (in particular, it does not depend on the angle of
incidence). Hence, under the assumptions of Corollary \ref{f21.6}, the
limiting reflection law is Lambertian. 

\begin{proposition}\label{m16.2}
Suppose that $\P(x,v; dy, dw) = \delta_x(y)dy \wt \P(x,v; dw)$
where $\wt \P$ satisfies for smooth $f$ and $g$, 
\begin{align*}
\int_{\left(S^{d-1}_+\right)^2} f(w) \wt \P (x, v; dw)
g(v) \nu_d(dv)
= \int_{\left(S^{d-1}_+\right)^2} g(w) \wt\P (x, v; dw)
f(v) \nu_d(dv).
\end{align*}
Then there exists a sequence of sets $M_n$ and corresponding
transformations $K_n$ such that
\begin{align*}
\Lambda(dx, dv)  \P_{K_n} (x, v; dy, dw)
\to
\Lambda(dx, dv)  \P(x, v; dy, dw)
\end{align*}
weakly on $B^2$ as $n\to \infty$. Moreover, $M_n$ can be chosen
in such a way that $M_n \subset \{(x_1,x_2,\dots,x_d) : -1/n < x_d \leq 0\}$.

\end{proposition}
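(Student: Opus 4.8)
This is a realizability statement complementing the necessity of $\nu_d$-reversibility established in Proposition~\ref{j7.1}: it asserts that the symmetry of $\wt\P$ hypothesised here, which Proposition~\ref{j7.1} shows to be \emph{necessary} for a ``reflection-at-a-point'' kernel to arise as a scaling limit of specular billiards off a rough wall, is also \emph{sufficient}. The construction goes back to \cite{ABS,Plak2dim} in dimension two and to \cite{PlakDdim,Plakbook} in general dimension, and the plan is to assemble their pieces. The wall $M_n$ will be a near-periodic array of microscopic billiard cavities, all contained in the slab $\{-1/n<x_d\le0\}$, whose individual specular scattering laws, once averaged over the fast microscale by $\Lambda$, reproduce $\wt\P$. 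First I would invoke the definition of weak convergence recalled before Proposition~\ref{j7.1}: with the exhausting sets $\Gamma_j=\{|x|\le j\}\times S^{d-1}_+\times L_*\times S^{d-1}_+$, on which every measure in play is finite, it suffices to test against bounded continuous $\phi$. I would then approximate $\wt\P(x,v;dw)$, in this topology, by kernels that are piecewise constant in $x$ on a fine partition $\{L_*^{(i)}\}$ of $L_*$ and, on each piece, finitely supported, subordinate to a finite partition of $S^{d-1}_+$, and still $\nu_d$-reversible --- the stated symmetry of $\wt\P$ is exactly what makes such approximants available. It then remains to realize, for a fixed piece and a fixed finitely-supported reversible kernel, cavities whose scattering laws converge to it, and finally to diagonalize over a vanishing partition mesh, an increasing number of direction cells, and $1/n\to0$.

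For the cavity I would quote the construction of \cite{Plakbook} (respectively \cite{ABS,Plak2dim} for $d=2$): there is a bounded $\Omega\subset\{x_d\le0\}$ with flat opening $O\subset L_*$ such that the billiard-in-$\Omega$ scattering map $(\text{entry point},v)\mapsto(\text{exit point},w)$ on $O\times S^{d-1}_+$ is defined $\Leb\times\nu_d$-a.e.\ and, disintegrated over its entry marginal, approximates $\delta_{\text{entry}}\otimes(\text{target kernel})$ arbitrarily well. The mechanism is that the billiard map on the cross-section of $\Omega$ preserves $\cos\theta\,dx\,d\theta$ and is reversible, so that the closure of the set of attainable scattering laws is precisely the set of $\nu_d$-reversible Markov kernels on $S^{d-1}_+$. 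I would rescale $\Omega$ by a factor at most $1/n$, tile each piece $L_*^{(i)}$ by translated copies of the scaled opening, and take $M_n$ to be the union over all pieces and copies of the scaled $\partial\Omega$; then $M_n\subset\{-1/n<x_d\le0\}$ automatically.

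To pass to the limit, I would use that $\Lambda$ has atomless Lebesgue position-marginal, so a macroscopic bundle of incoming trajectories meeting a neighborhood of a point $x$ is spread essentially uniformly over the $O(1/n)$-scale cavities tiling that neighborhood; each trajectory exits within $O(1/n)$ of its entry, with exit direction distributed, after this spatial averaging, according to the cavity's scattering law. Hence for bounded continuous $\phi$ on $\Gamma_j$,
\[
\int \phi\bigl(x,v,K_n(x,v)\bigr)\,\nu_d(dv)\,dx
\;\longrightarrow\;
\int \phi(x,v,x,w)\,\wt\P(x,v;dw)\,\nu_d(dv)\,dx,
\]
which, since $\P(x,v;dy,dw)=\delta_x(y)\,dy\,\wt\P(x,v;dw)$, is exactly the asserted convergence $\Lambda(dx,dv)\,\P_{K_n}(x,v;dy,dw)\to\Lambda(dx,dv)\,\P(x,v;dy,dw)$. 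The equidistribution-over-microcells input is classical averaging for scaled periodic scatterers, which I would cite quantitatively.

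The technical heart is the cavity step: realizing scatterers whose billiard scattering laws approximate a prescribed $\nu_d$-reversible kernel in arbitrary dimension is the substance of \cite{PlakDdim,Plakbook}, which I would use as a black box rather than reprove. The remaining delicate point is the coupling of the three limits --- the partition mesh in $x$, the number of direction cells, and the cavity scale $1/n$ --- which must be chosen jointly so that the errors from the piecewise-constant-in-$x$ approximation, the finite-support approximation, and the spatial averaging all vanish together while $M_n$ stays inside $\{-1/n<x_d\le0\}$; this is a routine but careful diagonal argument. In keeping with the paper's stated informality about the billiards model, I would present the cavity construction and the averaging step at the level of the construction and heuristic, citing the quantitative inputs.
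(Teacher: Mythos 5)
The paper does not actually prove Proposition~\ref{m16.2}: it is presented as a restatement of known results, assembled from \cite[Thms.~2.2, 2.3]{ABS}, \cite[Sect.~4]{Feres07} and \cite[Thms.~4.4, 4.5]{Plakbook}, with the authors explicitly declaring that they will be informal because these technical aspects are tangential to the project. So there is no in-paper argument to compare yours against. Your proposal is consistent with the paper's treatment in that it ultimately defers the decisive step to exactly those references; what you add is a plausible outline of how the pieces in the literature fit together (approximation of $\wt\P$ by piecewise-constant-in-$x$, finitely supported, $\nu_d$-reversible kernels; tiling $L_*$ by rescaled cavities confined to the slab $\{-1/n<x_d\le 0\}$; an averaging argument to recover $\delta_x(y)\,dy\,\wt\P(x,v;dw)$ in the weak topology defined before Proposition~\ref{j7.1}). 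Be aware, though, that the ``black box'' you invoke --- that the closure of the set of attainable specular scattering laws of a cavity is precisely the set of $\nu_d$-reversible kernels on $S^{d-1}_+$ --- \emph{is} the substance of the proposition (it is the content of the cited Plakhov theorems), so your write-up is bookkeeping around a citation rather than an independent proof; that is acceptable here only because the paper itself takes the same stance. If you intended a self-contained argument, the cavity construction (mushroom-type or staircase scatterers realizing a prescribed reversible kernel, and the verification that the scattering map preserves $\cos\theta\,dx\,d\theta$ and is reversible) is the part you would actually have to supply; nothing in your sketch addresses how to build such a cavity for a given target kernel in dimension $d>2$.
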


\begin{remark}\label{ja19.1}
The proposition applies to $\wt \P$ in \eqref{f22.2}, so a dynamical system
with Lambertian reflections is the limit of a sequence of systems with
specular deterministic reflections. 

We will later apply our results from Section \ref{ME} and Proposition
\ref{f22.6} on the stationary distribution given in
\eqref{f13.10}-\eqref{f13.11} to the approximating sets $M_n$. Since we
prove these results for smooth boundaries, we point out that it is easy to
see that the sets $M_n$ in Proposition \ref{m16.2} can be chosen to be
smooth. 

\end{remark}

\subsection{Pointlike particle in cylindrical drum}

We will now compute the average flight time for a single particle in a
rotating cylindrical drum in $\R^d$, $d\geq 2$, with a rough surface, i.e., with the Knudsen
reflection law. 

Since we are dealing with a single particle in this section, we will drop
the subscript from the notation introduced in Section \ref{RD} and we will
write $x(t)$ instead of $x_1(t)$, $m$ instead of $m_1$, etc. 
We will assume that the drum  is the product of a finite ball and a finite or infinite cube, 
i.e., 
\begin{align*}
D_{\ell}=\left\{(z_1, z_2, \dots, z_d) : z_1^2 + z_2^2 \leq \rho^2, |z_k|
\leq \ell \text {  for  } 3\leq k \leq d\right\}, 
\end{align*}
for some $\rho \in (0,\infty)$ and $\ell \in (0,\infty)$, and 
$D_\infty = \cup_{l\geq 0}D_\ell$.  
Let
\begin{align*}
\prt _c D_{\ell}=\left\{(z_1, z_2, \dots, z_d)\in \prt D_\ell : z_1^2 + z_2^2 = \rho^2\right\},
\qquad \ell\in (0,\infty].
\end{align*}
We will also need the torus $D_{ T}$ defined by taking $D_{\infty}$ and
identifying any two points  $(z_1, z_2, \dots, z_d)$ and $(z'_1, z'_2,
\dots, z'_d)$ if for every $k=3,\dots,d$, $z_k - z'_k$ is an even
integer. The part of the boundary  
$\prt _c D_{T}$ is defined in a way analogous to the definition of $\prt _c D_{\ell}$.

Recall that for $z=(z_1, z_2, \dots, z_d)\in \R^d$, $z^H$ denotes $(z_1,z_2)$.

Suppose that the particle starts in $D_{\ell}$ at time $t=0$ with non-zero
velocity, where $\ell \in (0,\infty]$. 
Let $\tau_\ell$ be the time of the first
reflection from the boundary.  The time $\tau_T$
is defined in an analogous way for the particle flight in the torus $D_{ T}$.

Let $\E$ be the expectation corresponding to starting from a point in
$\prt_c D_\infty$ with direction of velocity in the rotating frame     
distributed according to the Knudson law \eqref{f22.2}.  We will be  
concerned only with $\E \tau_\infty$ which does not depend on the exact
location of the starting point in $\prt_c D_\infty$, by rotational
symmetry and shift invariance, so the starting point is not included in the
notation.   

Let  $\calA(d,r)$ denote the surface area of a $d$-dimensional
sphere with radius $r$ (i.e., the boundary of a $(d+1)$-dimensional ball
with radius $r$). By convention,  $\calA(0,r) = 2$.  Let 
$\calB(d,1)=\calA(d-1,1)/d$ denote the Lebesgue measure of the unit ball in
$\mathbb{R}^d$.  

\begin{theorem}\label{f22.3}

Consider a single pointlike particle reflecting from the walls of $D_\ell$
according to the Knudsen law. Fix any $E^F$. If $E^F <0$ then 
let $\rho_0$ be defined by 
$E^F +   m \omega^2 \rho_0^2/2 =0$ so 
$\rho_0 = (- 2E^F/ (m\omega^2))^{1/2}$.  If $E^F\geq 0$, set $\rho_0=0$.  

(i)
If $0< \ell < \infty$ and $\rho_0 < \rho$ then
 the process  $(x^F(t), v^F(t)/\|v^F(t)\|)$  has a unique stationary
 distribution in $D_\ell \times S$, where $S$ is the $(d-1)$-dimensional
 unit sphere. 

For $0< \ell < \infty$, let $h_\ell(z,u)$ be the density of the stationary
measure with respect to the product of Lebesgue measure on $D_\ell$ with
normalized surface measure on $S$.   

(ii) If $0< \ell < \infty$ and $E^F\geq 0$ then for $z\in D_\ell$ and $u\in S$,
\begin{align}\label{f22.4}
h_\ell\left(z,u\right)
&=
\frac{d m \omega^2 }{4(2\ell)^{d-2} \pi}
\cdot
 \frac{
\left(E^F +   m \omega^2 \|z^H \|^2/2\right)^{d/2-1}}
{
 \left(E^F + m \omega^2 \rho^2/2 \right)^{d/2}
 - \left(E^F  \right)^{d/2}}. 
\end{align}

(iii) If  $E^F\geq 0$ then
\begin{align}
\label{f22.7}
 \E \tau_\infty &=
\sqrt{\frac{2}{m}}
 \cdot
\frac{\calB(d,1) }{\calB(d-1,1)}
\cdot \frac1 {\omega^{2} \rho }
\cdot
\frac
{\left( E^F + m \omega^2 \rho^2/2 \right)^{d/2}
 - \left( E^F  \right)^{d/2}}
{ \left(E^F +   m \omega^2 \rho^2/2\right)^{(d-1)/2}}.
\end{align}

(iv) If $0< \ell < \infty$, $E^F< 0$ and $\rho_0<\rho$
then for $z\in D_\ell$ and $u\in S$,
\begin{align}\label{f22.5}
h_\ell\left(z,u\right)
&=\frac{d m \omega^2 }{4(2\ell)^{d-2} \pi}
\cdot
 \frac{
\left(E^F +   m \omega^2 \|z^H \|^2/2\right)^{d/2-1}}
{
 \left( E^F + m \omega^2 \rho^2/2 \right)^{d/2} }
\bone_{(\rho_0,\rho)}(\|z^H \|).
\end{align}

(v) If $E^F< 0$ and $\rho_0<\rho$ then
\begin{align}
\label{f22.8}
 \E \tau_\infty &=
\sqrt{\frac{2}{m}}
\cdot
\frac{\calB(d,1) }{\calB(d-1,1)}
\cdot \frac1 {\omega^{2} \rho }
\cdot
\left( E^F + m \omega^2 \rho^2/2 \right)^{1/2}
=
\frac{\calB(d,1) }{\calB(d-1,1)}
\cdot \frac{v_*}{\omega^2\rho},
\end{align}
where 
\begin{align}\label{fe25.4}
v_*:=
\sqrt{(2/m)E^{F,K}} = \left((2/m)E^F +   \omega^2 \rho^2\right)^{1/2}
\end{align}
is the maximum speed in $F$.
\end{theorem}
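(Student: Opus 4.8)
The plan is to treat the particle in $D_\ell$ as a deterministic flow (free motion of the conserved energy $E^F$ in the rotating frame, by Proposition~\ref{f17.1}) interrupted by Lambertian reflections, and to identify the stationary distribution using Proposition~\ref{f22.6} applied to this single‑particle system. In the frame $F$ the particle, between reflections, follows the Hamiltonian flow of $\tfrac12 m\|v^F\|^2 - \tfrac12 m\omega^2\|z^H\|^2$, so the microcanonical measure of Proposition~\ref{f22.6}, restricted to one particle in $D_\ell$, is invariant; by Remark~\ref{ja19.1} the Lambertian wall reflections also preserve it (they are the limit of specular reflections from smooth sets $M_n$, for which Proposition~\ref{f22.6} applies verbatim). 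Concretely, writing $\Phibar$ in the one‑particle case, the invariant measure has the form $f(z)\,d\sigma_1(u)\,d\lambda(z)$ with $f(z)=\left(E^F+\tfrac12 m\omega^2\|z^H\|^2\right)^{d/2-1}$ on the set where this is positive, i.e.\ where $\|z^H\|>\rho_0$. For (i), uniqueness of the stationary distribution follows from the standard Harris‑type irreducibility argument for Knudsen billiards (as in \cite{ABS,Feres07}): the Lambertian kernel spreads mass over a full neighborhood of directions at each reflection, so the chain on $\prt_c D_\ell\times S^{d-1}_+$ is irreducible and aperiodic, hence has a unique invariant probability, which then determines a unique stationary law for the continuous‑time flow. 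This is the step I expect to require the most care, since one must check positivity/minorization of the return kernel $K$ uniformly, but it is essentially the cited literature.

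For (ii) and (iv), I would normalize the invariant measure $f(z)\,d\sigma_1(u)\,d\lambda(z)$ over $D_\ell\times S^{d-1}$. Because $f$ depends only on $r:=\|z^H\|$ and the $z_3,\dots,z_d$ variables range over a cube of side $2\ell$, the $z$‑integral factors as $(2\ell)^{d-2}\int_0^\rho f(r)\,2\pi r\,dr$ (with lower limit $\rho_0$ when $E^F<0$). The radial integral is elementary: $\int r\left(E^F+\tfrac12 m\omega^2 r^2\right)^{d/2-1}dr = \tfrac{1}{m\omega^2}\cdot\tfrac{2}{d}\left(E^F+\tfrac12 m\omega^2 r^2\right)^{d/2}$, giving the denominators $\left(E^F+\tfrac12 m\omega^2\rho^2\right)^{d/2}-\left(E^F\right)^{d/2}$ in (ii) and $\left(E^F+\tfrac12 m\omega^2\rho^2\right)^{d/2}$ in (iv) (where the boundary term at $\rho_0$ vanishes). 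Combining the constants $2\pi$, $(2\ell)^{d-2}$, $2/d$, $m\omega^2$ yields exactly \eqref{f22.4} and \eqref{f22.5}; I'd present this as one computation with a sign/case split on $E^F$.

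For (iii) and (v), I would use the standard free‑flight / Kac‑type identity: the expected free‑flight time $\E\tau_\infty$ starting from the wall with Lambertian direction equals the total phase‑space volume (under the stationary flow measure) divided by the flux through $\prt_c D_\infty$. More precisely, integrate the flow‑invariant density against the constant flow and relate the boundary measure (Lebesgue on $\prt_c D_\infty$ times $\nu_d$ on $S^{d-1}_+$, i.e.\ $\Lambda$) to the interior measure; the ratio $\calB(d,1)/\calB(d-1,1)$ and the factor $\sqrt{2/m}$ arise from converting the direction‑sphere normalization on $S^{nd-1}=S^{d-1}$ into the Lambertian $\langle v,\bn\rangle$ weight and from the speed $v=\sqrt{(2/m)(E^F+\tfrac12 m\omega^2 r^2)}$ at radius $r$. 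Since the cube directions $z_3,\dots,z_d$ give no net contribution (periodicity/infinite extent), the computation reduces to the two‑dimensional disk of radius $\rho$ with the $r$‑dependent speed, and the same radial integral as above reappears in the numerator; in the case $E^F<0$ the speed at the wall is exactly $v_*=\left((2/m)E^F+\omega^2\rho^2\right)^{1/2}$ and the telescoping in $r$ collapses the bracket to a single power, giving the compact form \eqref{f22.8}. The only subtlety is justifying the Kac flux formula on the non‑compact $D_\infty$ (and on the torus $D_T$), which I would handle by first working on $D_\ell$ or $D_T$, noting $\E\tau_\infty=\E\tau_T$ and that the limit $\ell\to\infty$ is harmless because $\tau_\ell\uparrow\tau_\infty$ and the relevant quantities are shift‑invariant in the cube directions.
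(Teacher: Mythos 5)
Your proposal is correct and follows essentially the same route as the paper: the invariant density is obtained from Proposition~\ref{f22.6} via the approximating specular systems of Remark~\ref{ja19.1}, uniqueness comes from a coupling/irreducibility argument, the normalizations in (ii) and (iv) are the same elementary radial integral, and the flight times in (iii) and (v) come from equating $\E\tau_\infty=\E\tau_T$ with the reciprocal of the stationary collision rate, which is exactly the Kac-type volume-over-flux identity you invoke (the paper derives it by the thin-shell computation near $\prt_c D_T$ rather than citing it). The only cosmetic difference is that the paper makes the flux computation explicit via the spherical integral producing the factor $\calA(d-2,1)/((d-1)\calA(d-1,1))=\calB(d-1,1)/\calB(d,1)$, whereas you state it as a known normalization.
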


\begin{proof}

(i)
The uniqueness follows from the following standard coupling argument.
Two copies of the process can be constructed so that they hit the same
point on the boundary at the end of their first flights (not necessarily
the same time for both processes), with positive probability. By the strong
Markov property, they will visit the same point on the boundary with
probability 1. Another application of the strong Markov property and
coupling shows that their evolutions following the visit of the same point
can be identical, up to a shift in time. The ergodic theorem then implies
that the stationary distributions have to be the same. 

According to Remark \ref{ja19.1}, there exists a sequence of sets
$D^k_\ell$, $k\geq 1$, converging to $D_\ell$ and such that the reflection
laws in  
$D^k_\ell$'s converge to the Knudsen reflection law in the sense of  
 Proposition \ref{m16.2}.
It follows from Proposition \ref{f22.6} that there exists
 a sequence of processes with specular reflection laws in $D^k_\ell$,
 $k\geq 1$, in the stationary regimes, with the stationary laws given by
 \eqref{f13.10} and \eqref{f13.11}. Taking the limit, we conclude that
 there is a process in $D_\ell$ with the Knudsen reflection law and the
 stationary distribution given by \eqref{f13.10} and \eqref{f13.11}. 
Hence, the form of the density $h_\ell(z,u)$ in \eqref{f22.4} and
\eqref{f22.5} is given by \eqref{f13.10} and \eqref{f13.11}, up to a
constant. We will find the normalizing constants. 

(ii)
Let $z^\perp = z - z^H$, $\calB_2(0,\rho)=\{z^H: \|z^H\| \leq \rho\}$ and
$\calQ_{d-2}(\ell)=\{z^\perp: |z_k| \leq \ell \text {  for  } 3\leq k \leq d\}$.  
The integral of unnormalized invariant measure is equal to
\begin{align}\label{fe25.1}
C_1 &:=\int _{\calQ_{d-2}(\ell)}\int _{\calB_2(0,\rho)}
 \left(E^F +   m \omega^2 \|z^H \|^2/2\right)^{d/2-1} dz^H 
 dz^\perp \\
 & = 
 (2\ell)^{d-2} \int _{\calB_2(0,\rho)}
 \left(E^F +   m \omega^2 \|z^H \|^2/2\right)^{d/2-1} dz^H \notag \\
 &  = 
 (2\ell)^{d-2} \int _0^\rho
 2\pi r
 \left(E^F +   m \omega^2 r^2/2\right)^{d/2-1} dr\notag \\
 &  = 
 (2\ell)^{d-2} \int _0^{\rho^2}
 \pi 
 \left(E^F +   m \omega^2 s/2\right)^{d/2-1} ds\notag \\
&  = 
 \frac{4(2\ell)^{d-2} \pi}{d m \omega^2 }\left(
 \left( E^F + m \omega^2 \rho^2/2 \right)^{d/2}
 - \left( E^F  \right)^{d/2}
 \right) .\notag 
\end{align}
It follows that the stationary (probability) density for the position of the particle is given by
\begin{align*}
h_\ell\left(z,u\right)
=C_1^{-1}\left(E^F +   m \omega^2 \|z^H \|^2/2\right)^{d/2-1} , 
\qquad z\in D_\ell.
\end{align*}
This proves \eqref{f22.4}.

(iii)
Consider the process $x(t)$ in the torus $D_T$ and
assume that the process is in the stationary regime. 
Note that the stationary distribution of $x(t)$ in $D_T$ is given by \eqref{f22.4} with $\ell=1$.

Consider a small time
interval $[0,\Delta t]$. The particle can collide with  
$\prt_c D_T$ in this time interval only if  its 
distance from $\prt_c D_T$ at time $t=0$  is less than $\eps:=v_* \Delta t$,
where $v_*$ is the maximum speed in $F$, given by \eqref{fe25.4}.   
The probability that the particle is within distance $\eps$ from $\prt_c
D_T$ at time 0 is equal to $a+ O(\eps^2)$, where 
\begin{align}\label{f23.2}
a &= h_\ell((\rho, 0,\dots, 0),u)(2\ell)^{d-2} 2\pi \rho\eps\\
&=
\frac{d m \omega^2 }{4(2\ell)^{d-2} \pi}
\cdot
 \frac{
\left(E^F +   m \omega^2 \|z^H \|^2/2\right)^{d/2-1}}
{
 \left( E^F + m \omega^2 \rho^2/2 \right)^{d/2}
 - \left( E^F  \right)^{d/2}}(2\ell)^{d-2} 2\pi \rho\eps \notag\\
&=
\frac{d m \omega^2\rho\eps}{2}
\cdot
\frac
{\left(E^F +   m \omega^2 \rho^2/2\right)^{d/2-1}}
{\left( E^F + m \omega^2 \rho^2/2 \right)^{d/2} 
 - \left( E^F  \right)^{d/2}}
.\notag
\end{align}
We will use the formulas $\eps=v_* \Delta t$, \eqref{fe25.4} and
\eqref{f23.2} in the following calculation. 
The probability of a reflection from $\prt_c D$ during the interval
$[0,\Delta t]$ is approximately equal to, with accuracy $O((\Delta t)^2)$, 
\begin{align}
&\frac{a}{\calA(d-1,\eps)} \int_0^\eps \int_r^\eps \calA(d-2,(\eps^2 -
  s^2)^{1/2}) 
\frac{ds}{(\eps^2-s^2)^{1/2}}\,dr \notag \\ 
&\quad= 
\frac{a}{\calA(d-1,1)} \int_0^1 \int_r^1\calA(d-2,(1-s^2)^{1/2})\frac{ds}{(1-s^2)^{1/2}}\,dr \notag \\ 
&\quad=
\frac{a\calA(d-2,1)}{\calA(d-1,1)} \int_0^1
  \int_r^1(1-s^2)^{(d-3)/2}ds\,dr \notag \\ 
&\quad=
\frac{a\calA(d-2,1)}{\calA(d-1,1)} \int_0^1 \int_0^s
  (1-s^2)^{(d-3)/2}dr\,ds \notag \\
&\quad=
\frac{a\calA(d-2,1)}{\calA(d-1,1)} \int_0^1 s(1-s^2)^{(d-3)/2}ds \notag \\
&\quad=
\frac{a\calA(d-2,1)}{(d-1)\calA(d-1,1)} \notag \\
&\quad =
\frac{d m \omega^2 \rho\eps}{2}
\cdot 
\frac{\left(E^F +   m \omega^2 \rho^2/2\right)^{d/2-1}}
{\left( E^F + m \omega^2 \rho^2/2 \right)^{d/2}
 - \left( E^F  \right)^{d/2}}
 \cdot
\frac{\calA(d-2,1) }{(d-1)\calA(d-1,1)} \notag  \\
&\quad =
\sqrt{\frac{m}{2}}\cdot \omega^2 \rho \cdot 
\frac{\left(E^F +   m \omega^2 \rho^2/2\right)^{(d-1)/2}}
{\left( E^F + m \omega^2 \rho^2/2 \right)^{d/2}
 - \left( E^F  \right)^{d/2}}
 \cdot
\frac{d\calA(d-2,1) }{(d-1)\calA(d-1,1)}\Delta t \label{fe25.2}
\end{align}

By the
ergodic theorem, $\E\tau_T$ is arbitrarily close to the reciprocal of
the  quantity in \eqref{fe25.2} (except for $\Delta t$), i.e., it is
approximately equal to 
\begin{align}\label{fe25.3}
\sqrt{\frac{2}{m}}\cdot \frac1 {\omega^{2} \rho }
\cdot
\frac
{\left( E^F + m \omega^2 \rho^2/2 \right)^{d/2}
 - \left( E^F  \right)^{d/2}}
{ \left(E^F +   m \omega^2 \rho^2/2\right)^{(d-1)/2}}
\cdot
\frac{(d-1)\calA(d-1,1) }{d\calA(d-2,1)}.
\end{align}
It is easy to see that $\E\tau_\infty = \E\tau_T$  so the last formula
proves \eqref{f22.7}.

(iv)
In this case, the analogue of \eqref{fe25.1} is
\begin{align*}
C_2&:=\int _{\calQ_{d-2}(\ell)}
\int _{\calB_2(0,\rho)\setminus \calB_2(0,\rho_0)}
 \left(E^F +   m \omega^2 \|z^H \|^2/2\right)^{d/2-1} dz^H
 dz^\perp \\
 & = 
 (2\ell)^{d-2} \int _{\rho_0}^\rho
 2\pi r
 \left(E^F +   m \omega^2 r^2/2\right)^{d/2-1} dr\\
 &  = 
 (2\ell)^{d-2} \int _{\rho_0^2}^{\rho^2}
 \pi 
 \left(E^F +   m \omega^2 s/2\right)^{d/2-1} ds\\
&  = 
 \frac{4(2\ell)^{d-2} \pi}{d m \omega^2 }
 \left( E^F + m \omega^2 \rho^2/2 \right)^{d/2} .
\end{align*}
It follows that the stationary (probability) density for the position of the particle is given by
\begin{align}\label{f23.1}
h_\ell\left(z,u\right)
=C_2^{-1}\left(E^F +   m \omega^2 \|z^H \|^2/2\right)^{d/2-1} , 
\qquad z\in D_\ell.
\end{align}
This proves \eqref{f22.5}.

(v)
In the present case, we modify  \eqref{f23.2} as follows,
\begin{align*}
a_1 &= h_\ell((\rho, 0,\dots, 0),u)(2\ell)^{d-2} 2\pi \rho\eps\\ 
&=
\frac
{d m \omega^2 \rho \eps\left(E^F +   m \omega^2 \rho^2/2\right)^{d/2-1}} 
{2\left( E^F + m \omega^2 \rho^2/2 \right)^{d/2}}
=
\frac
{d m \omega^2 \rho\eps }
{2\left( E^F + m \omega^2 \rho^2/2 \right)}
.\notag
\end{align*}
Hence, the approximate probability of a reflection during the interval $[0,\Delta t]$
is given by a formula analogous to \eqref{fe25.2}, 
\begin{align*}
&(m/2)^{1/2} \omega^2 \rho
\left(E^F +   m \omega^2 \rho^2/2\right)^{-1/2} 
 \cdot
\frac{\calB(d-1,1) }{\calB(d,1)} 
\Delta t.
\end{align*}
This implies that  \eqref{fe25.3} should be modified as follows,
\begin{align*}
(2/ m)^{1/2} \omega^{-2} \rho^{-1}
\left( E^F + m \omega^2 \rho^2/2 \right)^{1/2}
\frac{\calB(d,1) }{\calB(d-1,1)}.
\end{align*}
Thus \eqref{f22.8} is proved.
\end{proof}

\begin{example}
(i) Suppose that $d=2$. 
The formulas in Theorem \ref{f22.3} take especially simple form in this case.
If $E^F\geq 0$ then $h_\ell\left(z,u\right)$ is the uniform density in $ D_\ell\times S$.
If $E^F<0$ then $h_\ell\left(z,u\right)$ is the uniform density in $\{z\in
D_\ell: \|z^H\| > \rho_0\}\times S$. 

Recall \eqref{fe25.4} to see that if $E^F\geq 0$, then \eqref{f22.7}
can be written
\begin{align*}
 \E \tau_\infty &=
\sqrt{\frac{2}{m}}\cdot 
\frac{1}{\omega^{2} \rho }
 \cdot
\frac{\calB(d,1) }{\calB(d-1,1)}
\cdot
\frac
{ m \omega^2 \rho^2 }
{ 2\left(E^F +   m \omega^2 \rho^2/2\right)^{1/2}}\\
&=\sqrt{\frac{2}{m}}
\cdot 
\frac{\pi}{2}\cdot\frac{m\rho}{2}\cdot v_*^{-1} (m/2)^{-1/2} 
 =\frac{\pi \rho } {2 v_*} . 
\end{align*}
If $E^F< 0$, then \eqref{f22.8} becomes
\begin{align*}
 \E \tau_\infty &=
 \frac{\pi v_*}{2\omega^2\rho}.
\end{align*}

It is natural  that $\E \tau_\infty$ goes to 0 as $v_*$ becomes very large
(because the trajectory crosses the cylinder very fast) and the same is
true  when $v_*$ goes to 0 (because the particle stays very close to 
$\prt_c D$ during the whole flight). It is less obvious that $\E
\tau_\infty$ should be a monotone function of $v_*$ (other parameters being
fixed) in each regime $E^F\geq 0$ and $E^F< 0$.  

Curiously, if we fix $\rho$ and $v_*$ then $\E \tau_\infty$ does not depend
(explicitly) on the angular velocity $\omega$ in the case $E^F >0$ but it
does when $E^F<0$. In the last case,  $\E \tau_\infty\to 0$ when
$\omega \to \infty$ because the centrifugal force keeps the particle
close to $\prt _c D$. 

(ii) If $E^F=0$, formulas \eqref{f22.4}-\eqref{f22.7} agree with
\eqref{f22.5}-\eqref{f22.8}, as expected, and take the form 
\begin{align*}
h_\ell\left(z,u\right)
&=
\frac{d  }{2(2\ell)^{d-2} \pi \rho^d}
  \|z^H \|^{d-2}, \qquad z\in D_\ell, \\
\E \tau_\infty &=
\frac{\calB(d,1) }{\calB(d-1,1)}
\cdot \frac1 {\omega }.
\end{align*}

(iii) It is easily seen that for large $d$, 
\[
\E \tau_\infty \sim \sqrt{2\pi}
\cdot\frac{v_*}{\omega^2\rho}
\cdot d^{-1/2} 
\] 
for all $E^F\in \mathbb{R}$ and $\rho>\rho_0$.  
\end{example}

\subsection{Rotation rate}
We will prove that the asymptotic rotation rate for a pointlike particle in
rotating drum is equal to the angular speed of the drum, for any drum
shape, assuming the Knudsen reflection law. Our proof applies to any
random reflection law that arises in Propositions \ref{j7.1} and
\ref{m16.2}, provided that the state space consists of one communicating
class (the process is neighborhood irreducible). 

Assume that the drum is bounded but has an arbitrary shape, as in Section
\ref{sec:gen}.  Recall that the rotation axis is orthogonal to the
$(z_1,z_2)$-plane and that the drum rotates with angular velocity
$\omega>0$ in $H$. If $x^H(t)\ne (0,0)$ for  
$t\in[0, s)$ then we can uniquely represent $x^H(t)$ on this time interval
  using the complex notation as $x^H(t) = \|x^H(t)\| \exp(i \Theta(t))$,
  $t\in[0, s)$, with the convention that $\Theta(0)=0$ and $\Theta(t)$ is
    continuous.
Since the reflections are Lambertian, the probability that $x^H$ will hit
$(0,0)$ is zero and, therefore, $\Theta(t)$ is well defined for all $t$,
a.s. 

\begin{proposition}
The limit $\lim_{t\to\infty} \Theta(t)/t$ exists and is equal to $ \omega$.
\end{proposition}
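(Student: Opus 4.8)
The plan is to decompose the angle $\Theta(t)$ into a contribution coming from the rotation of the drum and a genuinely bounded fluctuation, and then apply the ergodic theorem to the Markov chain of reflection data.  First I would pass to the rotating frame $F$: write $x^{F,H}(t)=\calR(-\omega t) x^H(t)$, so that $x^H(t)=\|x^H(t)\|\exp(i\Theta(t))$ becomes $x^{F,H}(t)=\|x^H(t)\|\exp\big(i(\Theta(t)-\omega t)\big)$.  Setting $\Theta^F(t):=\Theta(t)-\omega t$ (the continuous lift of the argument of $x^{F,H}(t)$, which is well defined a.s.\ since Lambertian reflections make hitting the origin a null event, exactly as argued for $\Theta$), it suffices to show that $\Theta^F(t)/t\to 0$ a.s.  In the frame $F$ the drum is static, the particle moves in straight segments between reflections, and the dynamics of the sequence of (position on $\prt D$, incoming direction) at successive reflection times is a Markov chain; by the hypothesis that the state space is a single communicating class together with the invariant measure supplied by Proposition~\ref{f22.6} (restricted to the single-particle cylindrical or general bounded drum as in Theorem~\ref{f22.3}), this chain is positive recurrent and ergodic.

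The key estimate is that $\Theta^F$ accumulates a uniformly bounded amount of angle per reflection interval.  On a free flight segment between two consecutive reflections, the particle travels along a line segment $L$ in the static domain $D$; the total variation of the argument of $x^{F,H}(t)$ along this segment equals the angle subtended at the origin $(0,0)$ by the projected segment $P_H(L)$, which is at most $\pi$ (and is exactly the turning the position vector does as it sweeps past the foot of the perpendicular from the origin to the line).  Hence between the $n$-th and $(n+1)$-st reflection, $|\Theta^F(t_{n+1}^-)-\Theta^F(t_n^+)|\le \pi$.  At a reflection time the position $x^{F,H}$ is continuous, so $\Theta^F$ has no jump there.  Therefore $|\Theta^F(t)|\le \pi\, (N(t)+1)$, where $N(t)$ is the number of reflections in $[0,t]$.

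It then remains to bound the reflection rate: $N(t)/t$ converges a.s.\ to a finite constant.  This follows from the ergodic theorem applied to the reflection chain in $F$, using that the mean free‑flight time under the stationary measure is strictly positive (indeed it is computed explicitly, e.g.\ $\E\tau_\infty>0$ in Theorem~\ref{f22.3}, and for a general bounded drum one only needs the trivial lower bound that a free flight of a particle of speed $v_*$ in $F$ starting on $\prt D$ takes time at least $2r_{\min}/v_*>0$, where $r_{\min}$ is the in‑radius‑type quantity from the curvature hypothesis on $\prt D$ — actually boundedness of $D$ alone does not lower‑bound the flight time, but the explicit positive mean $\E\tau<\infty$ from the stationary analysis does, via $N(t)/t\to 1/\E\tau$).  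Combining, $|\Theta^F(t)|/t\le \pi(N(t)+1)/t\to \pi/\E\tau<\infty$; but this only shows $\Theta^F(t)/t$ is a.s.\ bounded, not that it vanishes, so one more input is needed: the increments of $\Theta^F$ over reflection intervals, viewed as a function of the reflection chain's state, have mean zero under the stationary measure.  This mean‑zero property is where the symmetry of the Lambertian law enters — the invariant measure is symmetric under the reflection $z_2\mapsto -z_2$ (equivalently $\Theta^F\mapsto -\Theta^F$), so the stationary expectation of the signed angle increment over one flight is its own negative, hence $0$.  Then the ergodic theorem for the stationary, ergodic reflection chain gives $\frac1n\sum_{k=1}^n(\text{signed increment at step }k)\to 0$ a.s., i.e.\ $\Theta^F(t_n)/n\to 0$; interpolating over the last partial flight (whose contribution is $O(1)$) and using $n\asymp t$ yields $\Theta^F(t)/t\to 0$, and therefore $\Theta(t)/t\to\omega$.

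I expect the main obstacle to be the mean‑zero step: identifying the stationary law in $F$ precisely enough (via Theorem~\ref{f22.3} for the cylinder, or the general statement that the invariant measure of Proposition~\ref{f22.6} is reflection‑symmetric in the $z_2$ coordinate) and checking that the signed angular increment is an odd function of the state under that symmetry, so that its stationary mean vanishes.  The boundedness‑per‑reflection estimate and the ergodic‑theorem bookkeeping are routine; the conceptual content is that there is \emph{no net drift} of the position angle in the rotating frame, which is precisely the statement that the asymptotic rotation rate equals $\omega$.
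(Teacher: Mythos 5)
Your overall strategy matches the paper's: pass to the rotating frame, reduce to $\Theta^F(t)/t\to 0$, and combine an ergodic theorem with a symmetry argument showing the angular drift has zero stationary mean. However, the step you yourself identify as the crux --- the mean-zero property --- is justified incorrectly, and there is a second, related slip. First, the particle does \emph{not} move along straight segments in the frame $F$: in $F$ the motion obeys \eqref{eom}, with centrifugal and Coriolis terms, so the trajectories are curved arcs. The clean ``at most $\pi$ per flight'' bound holds for $\Theta$ in the \emph{inertial} frame (where flights are straight chords); to bound the per-flight change of $\Theta^F=\Theta-\omega t$ one must additionally bound $\omega$ times the flight duration, which is what the paper does. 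This is fixable, but it signals the deeper problem below.

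The genuine gap is the symmetry argument. The reflection $S:z_2\mapsto -z_2$ satisfies $SLS^{-1}=-L$, so it conjugates the $\omega$-dynamics \eqref{eom} to the $(-\omega)$-dynamics: it is a symmetry of the stationary \emph{measure} but not of the \emph{flow}. Concretely, if you reflect the initial data $(x^F,v^F)$ of a flight, the new inertial velocity is $Sv^F+\omega L(Sx^F)=S(v^F-\omega Lx^F)$, which is not the mirror image of $v^F+\omega Lx^F$; the reflected flight is not the mirror image of the original one, and the per-flight increment of $\Theta^F$ (a functional of the entire flight, not just of the initial state) is therefore \emph{not} odd under $S$. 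So its stationary mean does not vanish by this argument, and your proof of the key step collapses. The paper sidesteps this by averaging a functional of the \emph{instantaneous} state: $\dot\Theta^F=\bigl(x_1^F\dot x_2^F-x_2^F\dot x_1^F\bigr)/\|x^{F,H}\|^2$ is linear, hence odd, in $v^F$ for fixed $x^F$, and by \eqref{f13.12} the stationary velocity distribution given $x^F$ is uniform on a centered sphere, hence symmetric under $v^F\mapsto -v^F$; thus $\E[\dot\Theta^F]=0$ with no appeal to any symmetry of the dynamics. (The bounded-increment-per-flight estimate is then used only to justify applying the ergodic theorem despite $\dot\Theta^F$ blowing up near $x^{F,H}=0$.) To repair your proof you should replace the $z_2\mapsto -z_2$ oddness claim with this $v^F\mapsto -v^F$ argument applied to the instantaneous angular velocity, or else prove directly that the per-flight increment has zero mean under the stationary flux measure --- which, as the above shows, does not follow from the reflection symmetry you invoke.
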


\begin{proof}
We define $\Theta^F(t)$ in a way analogous to that for $\Theta(t)$.
If $x^{F,H}(t)\ne (0,0)$ (this is equivalent to $x^H(t)\ne (0,0)$)
for 
$t\in[0, s)$ then we  uniquely represent $x^{F,H}(t)$ on $[0,s)$ as
    $x^{F,H}(t) = \|x^{F,H}(t)\| \exp(i \Theta^F(t))$, with the convention
    that $\Theta^F(0)=0$ and $\Theta^F(t)$ is continuous.
Note that $\Theta^F(t) = \Theta(t) - \omega t$, so it will suffice to
prove that $\lim_{t\to\infty} \Theta^F(t)/t=0$. 

In view of the spherical symmetry of the velocities distribution in $F$,
stated in \eqref{f13.12}, the ergodic theorem implies that if the limit
$\lim_{t\to\infty} \Theta^F(t)/t$ exists then it must be 0. 

Note that $\Theta$ can change by at most $\pi$ during one flight. Since $D$
is bounded and $\omega$ is constant, it follows that for some $c_1<\infty$,
$\Theta^F$ can change by at most $c_1$ during one flight. Hence, the
ergodic theorem applies. 

It remains to show that the measure in \eqref{f13.10}-\eqref{f13.11}, 
properly normalized, represents the unique stationary probability  
distribution for the process in the rotating frame
$F$. 
The locations of reflection point on the boundary form a neighborhood irreducible process, by assumption.
 Two independent copies of the process will
eventually  find themselves (not necessarily at the same time) in positions from where they can hit some parts
of the boundary with densities whose ratio is bounded away from zero and
infinity. This can be used to show that one can construct  two copies of the process that will couple in a
finite time, a.s. Therefore, their distributions must converge to the
same limiting distribution. The limit must be equal to the distribution
given in  \eqref{f13.10}-\eqref{f13.11}. 
\end{proof}

\begin{proposition}\label{f27.1}
If $E^F<0$ then $\frac {d}{dt} \Theta(t) >0$.
\end{proposition}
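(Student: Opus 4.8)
The plan is to compute $\frac{d}{dt}\Theta(t)$ explicitly in terms of quantities in the rotating frame $F$ and to bound it below by a strictly positive expression whenever $E^F<0$. First I would record what $E^F<0$ buys us. Since $E^F=E^{F,K}(t)+E^{F,P}(t)$ with $E^{F,K}(t)=m\|v^F(t)\|^2/2\ge0$, the hypothesis forces $E^{F,P}(t)=E^F-E^{F,K}(t)\le E^F<0$ for every $t$; as $E^{F,P}(t)=-m\om^2\|x^H(t)\|^2/2$ this means $\|x^H(t)\|>0$ for all $t$, so $\Theta(t)$ is indeed well defined via the continuous-argument convention. Rewriting $E^F<0$ as $\|v^F(t)\|^2<\om^2\|x^{F,H}(t)\|^2$ and using $\|x^{F,H}(t)\|=\|x^H(t)\|$, the hypothesis is equivalent to $\|v^F(t)\|<\om\|x^H(t)\|$ for all $t$.

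Next I would produce a formula for $\dot\Theta$. Set $v^H(t):=P_H(v(t))$. Between reflections the inertial trajectory $x(t)$ is a line segment, hence $x^H(t)$ is smooth and $\dot x^H(t)=v^H(t)$. Writing $x^H(t)=\|x^H(t)\|\,e^{i\Theta(t)}$ in complex notation and multiplying $\dot x^H(t)$ by $\overline{x^H(t)}$, the imaginary part yields
\[
\dot\Theta(t)=\frac{\langle L(x^H(t)),v^H(t)\rangle}{\|x^H(t)\|^2},
\]
since for $z,w\in H$ one has $\Im(\bar z w)=\langle\calR(\pi/2)z,w\rangle=\langle L(z),w\rangle$ (recall $L=\calR(\pi/2)\circ P_H$, so $L$ restricted to $H$ is $\calR(\pi/2)$).

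Then I would project the velocity relation and estimate. Since $\calR(\om t)$ commutes with $P_H$ and $P_H\circ L=L$, applying $P_H$ to \eqref{velocities}, written as $v(t)=\calR(\om t)(v^F(t))+\om L(x(t))$, gives $v^H(t)=\calR(\om t)\bigl(P_H(v^F(t))\bigr)+\om L(x^H(t))$. Pairing with $L(x^H(t))$ and using $\|L(x^H(t))\|=\|x^H(t)\|$,
\[
\langle L(x^H(t)),v^H(t)\rangle=\langle L(x^H(t)),\calR(\om t)(P_H(v^F(t)))\rangle+\om\|x^H(t)\|^2\ge\|x^H(t)\|\bigl(\om\|x^H(t)\|-\|v^F(t)\|\bigr),
\]
by Cauchy--Schwarz together with $\|\calR(\om t)(P_H(v^F(t)))\|=\|P_H(v^F(t))\|\le\|v^F(t)\|$. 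By the first paragraph this lower bound is strictly positive, so $\dot\Theta(t)>0$ strictly between reflections.

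Finally, to upgrade this to strict monotonicity of $\Theta$ on all of $[0,\infty)$, I would note that the bound just derived uses only $\|x^H(t)\|$, which is continuous across a reflection, and $\|v^F(t)\|$, which is unchanged by the specular reflection in the frame $F$; hence the two one-sided derivatives of $\Theta$ at a reflection time are also strictly positive, and continuity of $\Theta$ then gives that it is strictly increasing. I do not anticipate a real obstacle: the only things needing care are the elementary identities $P_H\circ\calR(\om t)=\calR(\om t)\circ P_H$ and $P_H\circ L=L$, and the harmless non-smoothness of $\Theta$ at reflection times.
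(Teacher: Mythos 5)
Your proposal is correct and follows essentially the same route as the paper: both reduce the claim to the observation that $E^F<0$ is equivalent to $\|v^F(t)\|<\omega\|x^H(t)\|$, i.e., the particle's speed in $F$ is less than the rotational speed of its radial position, which forces $\Theta$ to increase. The paper asserts this last implication without computation, whereas you supply the explicit formula $\dot\Theta=\langle L(x^H),v^H\rangle/\|x^H\|^2$ and the Cauchy--Schwarz estimate, plus the (correct) remarks that $\|x^H\|>0$ and that the bound persists across reflection times since it depends only on $\|v^F\|$, which is conserved.
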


\begin{proof}
It follows from \eqref{Fenergies} that, given $x^H(t)$, the speed of the
particle in $F$ is  
\begin{align*}
\sqrt{\frac2{m}}\left(E^{F,K} \right)^{1/2}
= \sqrt{\frac2{m}}\left(E^F +   m \omega^2 \|x^H(t) \|^2/2 \right)^{1/2}. 
\end{align*}
If this speed is less than $\omega\|x^H(t)\|$ then $\Theta(t)$ must be
increasing. The condition  
\begin{align*}
\sqrt{\frac2{m}}\left(E^F +   m \omega^2 \|x^H(t) \|^2/2 \right)^{1/2}
< \omega\|x^H(t)\| 
\end{align*}
is equivalent to $ E^F<0$. 
\end{proof}

\section{Rotating billiards table in gravitational field}\label{sec:grav}

Consider a rotating two-dimensional billiards table immersed in the
gravitational field with a constant acceleration. 
We will show that there is no universal bound for the energy of the
billiard particle, in an appropriate sense, in two cases: 
(i) if the billiards table is circular, rotates about its center,
and the reflections are Lambertian;   or (ii) the reflections are specular
and the billiards table is a  smooth, arbitrarily small, deformation of a
disc. We will state these claims in a precise manner as Corollaries
\ref{o16.1} and \ref{o16.3} at the end of this section. 

The main technical results of this section are concerned with a model
different from any of the two models mentioned above but closely related to
them. 
Consider a two-dimensional billiards table in the shape of the disc with
center $(0,0)$ and radius 1, rotating around its center with the angular
velocity of $\omega> 0$ radians per time unit in the counterclockwise
(positive) direction.  
Assume that there is a gravitational field with constant acceleration,
parallel to the disc, with the gravitational acceleration equal to $-g$ for
some $g>0$, in the vertical direction. 
If $v(t) = (v_x (t), v_y(t))$ denotes the velocity of the particle  then
\begin{align*}
\frac\prt{\prt t} v_x(t) = 0, 
\qquad
\frac\prt{\prt t} v_y(t) = -g,
\end{align*}
for all $t$ that are not reflection times.

The above determines the trajectory between reflection times, assuming that
the reflection times and the velocities just after reflections are
given. In the
following lemma we consider the motion without any reflections (more precisely, reflections are irrelevant for this lemma). 

\begin{lemma}\label{ju21.1}
Consider
any pair of distinct points $p_1$ and $ p_2$ on the unit circle and
gravitational acceleration $g>0$. There exists $w_0<\infty$ such that for
any  $w\geq w_0$ there is a unique initial velocity 
$v(0) $ such that all of the following conditions are satisfied,

(i) $\|v(0)\| = w$, 

(ii) if
the particle starts from $p_1$ with velocity $v(0) $ then its trajectory will pass through $p_2$, 

(iii) the trajectory defined in (ii) will stay inside the open unit disc until it reaches $p_2$.

\end{lemma}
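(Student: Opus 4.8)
The plan is to parametrize trajectories from $p_1$ by the initial velocity and study the function that records where (and whether) such a trajectory first returns to the unit circle. Write $v(0) = w(\cos\phi,\sin\phi)$ and fix $p_1$, $p_2$ distinct on $S^1$. A trajectory launched from $p_1$ is a parabola; it leaves the disc and re-enters it, and for large speed the chord it cuts is nearly the straight segment from $p_1$ in direction $\phi$. So the natural idea is: for each direction $\phi$ in the open arc of directions pointing into the disc, and each speed $w$, follow the parabola until it first hits $S^1$ again at a point $Q(\phi,w)\in S^1$; show that as $w\to\infty$, uniformly in $\phi$ on compact subsets of the admissible directions, $Q(\phi,w)$ converges to the second intersection of $S^1$ with the ray from $p_1$ in direction $\phi$, and that this limiting map is a homeomorphism from admissible directions onto $S^1\setminus\{p_1\}$. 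A degree/intermediate-value argument then gives, for $w$ large, a $\phi=\phi(w)$ with $Q(\phi,w)=p_2$; differentiating in $\phi$ (the limiting map has nonzero derivative) upgrades existence to \emph{uniqueness} for $w$ large.

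Concretely, I would proceed as follows. First, set up coordinates so that the parabola through $p_1$ with velocity $(w\cos\phi, w\sin\phi)$ is given explicitly: $x(t) = p_1 + t\,v(0) - \tfrac12 g t^2 (0,1)$. The condition $\|x(t)\|^2 = 1$ is a quartic in $t$ with $t=0$ as one root, leaving a cubic; the smallest positive root $t_1(\phi,w)$ is the first return time. Second, rescale time by $t = s/w$: then $x(s/w) = p_1 + s(\cos\phi,\sin\phi) - \tfrac{g}{2w^2} s^2 (0,1)$, so as $w\to\infty$ the trajectory converges in $C^1$ on compact $s$-intervals to the straight ray $p_1 + s(\cos\phi,\sin\phi)$. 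For this ray the return "time" $s$ is $s_*(\phi) = -2\langle p_1,(\cos\phi,\sin\phi)\rangle$, and this is transverse (the ray crosses $S^1$ non-tangentially) precisely when the direction is not tangent to the circle at $p_1$, i.e. for $\phi$ in the open half-circle of inward directions minus its two endpoints. By the implicit function theorem, for $w$ large $t_1(\phi,w)$ and hence $Q(\phi,w)$ are well-defined, smooth in $\phi$, and converge uniformly (with derivatives) on compact subsets of admissible $\phi$ to the straight-line data. Third, the limiting map $\phi\mapsto$ (second intersection of the ray with $S^1$) is a diffeomorphism of the open admissible arc of directions onto $S^1\setminus\{p_1\}$, with nonvanishing derivative; so for $w\ge w_0$ and $\phi$ restricted to a closed subarc whose image contains $p_2$ in its interior, $\phi\mapsto Q(\phi,w)$ is still a local diffeomorphism hitting $p_2$ exactly once. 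This gives (i) and (ii), with uniqueness.

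For (iii), I would observe that the limiting straight segment from $p_1$ to $p_2$ lies in the \emph{closed} disc and meets $S^1$ only at its endpoints, so $\|x(s/w)\| < 1$ strictly on the open interval $s\in(0,s_*(\phi))$ in the limit; by uniform $C^0$ convergence and compactness, for $w$ large the actual parabola also satisfies $\|x(t)\|<1$ for $t\in(0,t_1(\phi(w),w))$ — one needs a little care only near the endpoints $t=0$ and $t=t_1$, where one checks that the trajectory enters the disc (the initial velocity points inward) and exits it transversally at $t_1$. Enlarging $w_0$ if necessary absorbs all these finitely many uniform estimates.

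The main obstacle is the behavior near the boundary of the admissible direction set, i.e. when $\phi$ is close to tangent at $p_1$: there the return time $s_*(\phi)\to 0$, the cubic degenerates, and the $C^1$-convergence after time-rescaling is no longer uniform. Since $p_2\ne p_1$, the target direction stays away from tangency, so this is handled simply by working on a closed subarc bounded away from the tangent directions — but one must verify that the \emph{first} return of the true parabola does not occur via some short excursion outside the disc produced by the gravity term before the "main" return; ruling this out (again for $w$ large, using that the gravitational displacement over the relevant time is $O(1/w^2)$) is the one place where the estimate needs to be done with some attention rather than by soft limiting arguments.
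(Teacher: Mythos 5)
Your proposal is correct, but it takes a genuinely different route from the paper's. The paper fixes the target $p_2$ and uses the classical structure of the fixed-speed projectile problem: for $w$ above the minimal aiming speed there are exactly two launch velocities whose parabolas pass through $p_2$; the ``lob'' (apex attained between the passages through $p_1$ and $p_2$) is discarded because for large $w$ its apex lies at least $3$ units above $p_1$, hence outside the unit disc, and the surviving ``flat'' solution is shown to satisfy (iii) because its slope between $p_1$ and $p_2$ converges uniformly to the slope of the chord. You instead fix the speed, vary the launch angle $\phi$, and study the first-return map $Q(\phi,w)$ to the circle, which converges with derivatives (on compact subsets of inward directions) to the straight-ray second-intersection map, a diffeomorphism onto $S^1\setminus\{p_1\}$ with derivative $2$ in angular coordinates. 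The paper's route gets uniqueness almost for free (two candidates, one excluded) but must argue separately that the survivor stays inside the disc; your route builds (iii) into the definition of $Q$ (so your third paragraph is essentially redundant) but must work harder for uniqueness, namely excluding launch directions outside the distinguished subarc. That step, which you rightly flag as the delicate one, does close, and in fact more easily than you fear: substituting the rescaled trajectory $s=wt$ into $\|x\|^2=1$ gives $s\bigl[(s-s_*(\phi))+O(s/w^2)\bigr]=0$ with the error uniform over \emph{all} inward $\phi$, so the first-return \emph{point} (unlike the return time's derivative) is within $O(w^{-2})$ of the straight-line second intersection uniformly, including near tangency where it is therefore close to $p_1$ and bounded away from $p_2$. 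Both arguments are at the level of detail of the paper's own sketch; yours yields uniform quantitative control of the return map as a byproduct, while the paper's is shorter.
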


\begin{proof}
The proof is based on totally elementary calculations so we will only sketch the main steps.

First, 
it is quite obvious that for any given $p_1,p_2$ and $g$,
there exists $w_0<\infty$ such that for any  $w\geq w_0$ there exists (at least one) initial velocity
$v(0) $ such that $\|v(0)\| = w$ and if
the particle starts from $p_1$ at time 0 with velocity $v(0) $ then its
trajectory will pass through $p_2$. 

Second, it is easy to check that if $w$ is strictly greater than $w_0$
defined in the previous paragraph then there are exactly two  initial
velocities 
$\wt v(0) $ and $\wh v(0) $ such that $\|\wt v(0)\|=\|\wh v(0)\| = w$ and if
the particle starts from $p_1$ at time 0 with velocity $\wt v(0) $ or $\wh
v(0) $ then its trajectory will pass through $p_2$. 
Extend these parabolic trajectories to negative times. For exactly one of
these initial velocities, the highest point on the trajectory is attained
between the times when the trajectory passes through $p_1$ and $p_2$.  

Recall that the disc radius is 1.
One can find $w_1<\infty$ so large that if $\|v(0)\| \geq w_1$ then there
exists a trajectory such that its   highest point  is at least 3 units
above $p_1$, and it is attained at a time  between the hitting times of
$p_1$ and $p_2$. This trajectory does not satisfy condition (iii) of the
lemma, and, therefore, there is at most one trajectory satisfying (iii). 

We will argue that the other trajectory satisfies (iii) provided that $w_0$
is large enough. The chord $C$ joining $p_1$ and $p_2$ forms the same,
non-zero angle with the unit circle at both ends.  When $\|v(0)\|$
increases then the slope of the trajectory between $p_1$ and $p_2$
converges uniformly to the slope of $C$. This implies (iii). 
\end{proof}

Now we will define the reflection rules in the main model in this section. 

\begin{definition}\label{j25.2}
Consider
any pair of distinct points $p_1=(x_1,y_1)$ and $ p_2=(x_2,y_2)$ on the unit circle in non-rotating coordinate system. 
Let $s_1=0$ and suppose that the particle starts from $p_1$ at time
$t=0=s_1$. If the initial velocity is such that the trajectory satisfies
conditions (ii) and (iii) of Lemma \ref{ju21.1} then we let $s_2$ be the
hitting time of $p_2$. 
 Recall the definition of the rotating frame of reference $ F$ from Section \ref{RD}.
We reflect the particle at $p_2$ in such a way
that 
(a) the energy is conserved in $F$ and, (b)
 the trajectory satisfies  conditions (ii) and (iii) of Lemma \ref{ju21.1},
 with roles of $p_1$ and $p_2$ interchanged. 
We let $s_3$ be the hitting time of $p_1$.

We proceed by induction. Suppose that $s_{2k+1}$ has been defined and the
particle is at $p_1$ at time $s_{2k+1}$.  
We reflect the particle at $p_1$ in such a way that
(a) the energy is conserved in $F$ and, (b)
 the trajectory satisfies  conditions (ii) and (iii) of Lemma \ref{ju21.1}.
We let $s_{2k+2}$ be the hitting time of $p_2$.

If $s_{2k}$ has been defined and the particle is at $p_2$ at time $s_{2k}$
then we reflect the particle at $p_2$ in such a way that
(a) the energy is conserved in $F$ and, (b)
 the trajectory satisfies  conditions (ii) and (iii) of Lemma \ref{ju21.1},
 with roles of $p_1$ and $p_2$ interchanged. 
We let $s_{2k+1}$ be the hitting time of $p_1$.

The sequence $s_1, s_2, \dots$ might be finite, if conditions (ii) and
(iii) of Lemma \ref{ju21.1} cannot be satisfied at some stage. 
\end{definition}

\begin{proposition}\label{j13.2}
For any $\omega, g$ and
any pair of distinct points $p_1=(x_1,y_1)$ and $ p_2=(x_2,y_2)$ on the
unit circle there exists $w_0 < \infty$ such that the following holds.  

(i) If $|x_1|= |x_2|$ and $\| v(s_1+)\| \geq w_0$ then the sequence $s_1,
s_2, \dots$ is infinite and $v(s_{k+2}+) = v(s_k+)$ for all $k\geq 1$. 

(ii) If  $x_2 > x_1 \ne -x_2$ and $ v_x(s_1+) \geq w_0$ then the sequence
$s_1, s_2, \dots$ is infinite and for all $k\geq 2$, 
\begin{align}\label{j18.1}
v_x(s_{k+2}-) = v_x(s_{k}-) +  \frac{(-1)^k}{v_x(s_{k}-)^2}
\cdot \frac{g \omega(x_2-x_1)^3 ( x_1+x_2)}
{(x_2-x_1)^2 +(y_2-y_1)^2} + O\left(\frac1{|v_x(s_{k}-)|^3}\right).  
\end{align}

\end{proposition}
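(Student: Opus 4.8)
The plan is to reduce the motion between consecutive passages through $p_1$ and $p_2$ to a scalar recursion for the horizontal velocity. Since gravity is vertical and there are no reflections on a free flight, the horizontal velocity is constant there; write $u_k:=v_x(s_k+)=v_x(s_{k+1}-)$ for its value on the $k$-th flight, which by Definition~\ref{j25.2} runs from $p_{i_k}$ to $p_{i_{k+1}}$ with $i_1,i_2,i_3,\dots$ equal to $1,2,1,\dots$. Let $\mu=\tfrac{y_2-y_1}{x_2-x_1}$ be the slope of the chord through $p_1$ and $p_2$, and $\Delta_k=x_{i_{k+1}}-x_{i_k}=\pm(x_2-x_1)$. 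The parabola equations (flight time $T_k=\Delta_k/u_k>0$, and the vertical displacement equation) give $v_y(s_k+)=\mu u_k+\tfrac{g\Delta_k}{2u_k}$ and $v_y(s_{k+1}-)=\mu u_k-\tfrac{g\Delta_k}{2u_k}$, the slope $\mu$ being the same in both travel directions. In particular $v_x(s_k-)=u_{k-1}$ and $v_x(s_{k+2}-)=u_{k+1}$, so part~(i) and \eqref{j18.1} are assertions about the two-step map $u_{k-1}\mapsto u_{k+1}$.

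The next step is to turn the reflection rule into an algebraic equation. At a point of the unit circle the conserved quantity $E^F$ differs from $\tfrac m2\|v^F\|^2$ only by the fixed centrifugal and gravitational contributions, and $\|v^F\|=\|v-\omega L(x)\|$ by \eqref{velocities}, where $L(a,b)=(-b,a)$ and $\|p_{i_k}\|=1$. Writing $c:=\omega(y_1-\mu x_1)=\omega(y_2-\mu x_2)$ (these coincide because $\mu$ is the chord slope), and inserting the formulas for $v_y(s_k\pm)$ together with $\Delta_{k-1}=-\Delta_k$, the rule $\|v(s_k+)-\omega L(p_{i_k})\|=\|v(s_k-)-\omega L(p_{i_k})\|$ expands to
\[
(u_k-u_{k-1})\Big[(1+\mu^2)(u_k+u_{k-1})+2c+\frac{R_k}{u_ku_{k-1}}-Q_k\,\frac{u_k+u_{k-1}}{u_k^2u_{k-1}^2}\Big]=0,
\]
where $Q_k=\tfrac{g^2(x_2-x_1)^2}{4}$ is constant and $R_k=g\omega\,x_{i_k}\Delta_k$ equals $g\omega x_1(x_2-x_1)$ at $p_1$ and $-g\omega x_2(x_2-x_1)$ at $p_2$. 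Because the horizontal direction of travel reverses at each reflection, $u_k$ and $u_{k-1}$ have opposite signs, so $u_k-u_{k-1}\neq0$ and the bracket must vanish; call this equation $(\star)_k$. Clearing denominators turns $(\star)_k$ into a cubic with a unique root of large modulus, $u_k=-u_{k-1}-\tfrac{2c}{1+\mu^2}+O(|u_{k-1}|^{-2})$, of sign opposite to $u_{k-1}$; this, together with the proof of Lemma~\ref{ju21.1}, shows the reflection of Definition~\ref{j25.2} is well defined at large speed and produces a trajectory staying in the disc, which is what keeps the sequence $s_1,s_2,\dots$ infinite.

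For part~(i), assume first that the chord $\overline{p_1p_2}$ is not vertical; then $|x_1|=|x_2|$ forces $x_1=-x_2$, and one computes that $R_k$ is constant along the orbit. Hence $(\star)_k$ is a single fixed relation $G(u_k,u_{k-1})=0$ with $G$ symmetric in its two arguments, so $G(u_{k-1},u_k)=0=G(u_{k+1},u_k)$; thus $u_{k-1}$ and $u_{k+1}$ are both roots of the cubic $y\mapsto G(u_k,y)$, which has only one root of large modulus and it has sign opposite to $u_k$. Therefore $u_{k+1}=u_{k-1}$ exactly, and since $v_y(s_k+)$ is determined by $u_k$ and $\Delta_k$, both of which repeat with period two, this yields $v(s_{k+2}+)=v(s_k+)$; as $u_k$ then takes only two values the sequence never stalls. (The vertical-chord case $x_1=x_2$ falls outside this framework, since there $v_x\equiv0$, and would be treated by a separate one-dimensional computation.) For part~(ii) I would solve $(\star)_k$ by expansion in $|u_{k-1}|^{-1}$, obtaining $u_k=-u_{k-1}-\tfrac{2c}{1+\mu^2}+\tfrac{R_k}{(1+\mu^2)u_{k-1}^2}+O(|u_{k-1}|^{-3})$. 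Composing the maps at steps $k$ and $k+1$ cancels the $-u_{k-1}$ and $-\tfrac{2c}{1+\mu^2}$ terms and leaves $u_{k+1}=u_{k-1}+\tfrac{R_{k+1}-R_k}{(1+\mu^2)u_{k-1}^2}+O(|u_{k-1}|^{-3})$; using $R_{k+1}-R_k=(-1)^k g\omega(x_1+x_2)(x_2-x_1)$ and $\tfrac1{1+\mu^2}=\tfrac{(x_2-x_1)^2}{(x_2-x_1)^2+(y_2-y_1)^2}$, together with $v_x(s_k-)=u_{k-1}$ and $v_x(s_{k+2}-)=u_{k+1}$, gives exactly \eqref{j18.1}. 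Infiniteness of the sequence then follows from the fact that, once the speed is large, the two subsequences $(u_1,u_3,\dots)$ and $(u_2,u_4,\dots)$ have non-decreasing modulus, so the speeds never fall below the threshold of Lemma~\ref{ju21.1}.

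The main obstacle is not the algebra — which, though lengthy, is routine — but the uniform bookkeeping behind the two ``the sequence is infinite'' claims and the uniform validity of the $O(|u_{k-1}|^{-3})$ remainder. This needs an induction that simultaneously keeps $|u_k|$ bounded below, invokes the mechanism of Lemma~\ref{ju21.1} to certify that the unique energy-matching outgoing velocity produces a trajectory remaining in the open disc, and controls the accumulation of the error over the infinitely many reflections, so that \eqref{j18.1} holds for every $k\ge2$.
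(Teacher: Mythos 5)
Your proposal follows essentially the same route as the paper's proof: reduce to a scalar relation for the horizontal velocity using the parabola formulas and conservation of $\|v-\omega L(p)\|$ at the reflection point, factor out the trivial root $u_k=u_{k-1}$, expand the relevant root in powers of $|u_{k-1}|^{-1}$, and compose two steps so the $O(1)$ drift cancels and the $O(u^{-2})$ terms combine into the $(x_1+x_2)$ coefficient. Your bracket equation $(\star)_k$ is, after clearing denominators, exactly the paper's \eqref{ju15.3} (with $w=u_{k-1}$, $\delta=u_{k-1}+u_k$), and your expansion $u_k=-u_{k-1}-\tfrac{2c}{1+\mu^2}+\tfrac{R_k}{(1+\mu^2)u_{k-1}^2}+O(|u_{k-1}|^{-3})$ matches $\de_0,\de_2$ in \eqref{deltaexpand}. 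One genuine improvement in presentation: writing the point-dependence through the single coefficient $R_k=g\om x_{i_k}\Delta_k$ makes it immediate that $R_k$ is constant when $x_1=-x_2$, so the content of Lemma \ref{j17.3} (that \eqref{ju15.1} and \eqref{ju15.2} coincide in that case) becomes transparent, and the exact periodicity $u_{k+1}=u_{k-1}$ follows from the symmetry $G(a,b)=G(b,a)$ together with uniqueness of the large-modulus root --- the same mechanism as the paper's symmetry argument at the end of its proof of (ii).

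There is one piece of the statement you do not prove: part (i) also covers the vertical chord $x_1=x_2$, which your framework excludes since there $v_x\equiv 0$ and the flight time cannot be written as $\Delta_k/u_k$. You correctly flag this as requiring a separate one-dimensional computation but do not carry it out; the paper does it directly by solving the quadratic \eqref{j13.5} for $v_y$, obtaining $v_y(s_2+)=-v_y(s_2-)+2\om x_2$ and hence exact period-two behavior. That computation is short and should be included for completeness. The remaining caveats you raise yourself --- the induction keeping $|u_k|$ above the threshold of Lemma \ref{ju21.1} so that the reflection of Definition \ref{j25.2} is well defined at every step and the $O(|u_k|^{-3})$ remainders are uniform --- are exactly what the paper supplies via the bounds \eqref{o17.7}--\eqref{o17.8}, and your sketch of how to close that induction is consistent with it.
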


\begin{proof}

Recall that the
 billiards table is the disc with center $(0,0)$ and radius 1, rotating
 around its center with the angular velocity of $\omega$ radians per time
 unit in the counterclockwise (positive) direction. 
Hence, the billiards table boundary point which happens to be at  $p_1$ is
moving with the velocity $\omega(-y_1, x_1)$. The law of conservation
of energy in the moving frame of reference $F$ requires that the vectors
$v(s_k-) - \omega(-y_1, x_1)$ and $v(s_k+) - \omega(-y_1, x_1)$
have the same norm, for odd $k$. In other words, 
\begin{equation}\label{j13.3}
\big(v_x(s_k-)+\omega y_1\big)^2 + \big(v_y(s_k-)-\omega x_1\big)^2 
= \big(v_x(s_k+)+\omega y_1\big)^2 + \big(v_y(s_k+)-\omega x_1\big)^2.
\end{equation}
The particle reflects at $p_2$ at times $s_k$ for even $k$. The analogous formula to \eqref{j13.3} is
\begin{equation}\label{j13.4}
\big(v_x(s_k-)+\omega y_2\big)^2 + \big(v_y(s_k-)-\omega x_2\big)^2 
= \big(v_x(s_k+)+\omega y_2\big)^2 + \big(v_y(s_k+)-\omega x_2\big)^2.
\end{equation}

\medskip

We argue separately in the cases $x_1=x_2$ and $x_1\neq x_2$.  
In each part we will derive a
formula, similar in spirit to \eqref{j18.1}, relating speeds at
consecutive reflection times. If the speeds never decrease below a certain
threshold then we can appeal to Lemma \eqref{ju21.1} and conclude that 
the sequence $s_1, s_2, \dots$ is infinite.

\medskip
(i) 
Suppose that $x_1=x_2$. 
In this case we have $y_2=-y_1$ and  
\begin{align}\label{o18.2}
v_x(t)=0,\qquad t\geq 0.
\end{align}
 For $k=2$, the equation \eqref{j13.4} reduces to
\begin{equation}\label{j13.5}
\big(v_y(s_2-)-\omega x_2\big)^2 =  \big(v_y(s_2+)-\omega x_2\big)^2.
\end{equation}
Let $w =v_y(s_2-)$ and let $\delta$ be defined by
$v_y(s_2+)=-v_y(s_2-)+\delta = - w +\delta$. We solve \eqref{j13.5} for
$\delta$ as follows, 
\begin{align*}
&w^2+\omega^2x_2^2-2\omega x_2w=\delta^2 + (\omega x_2+ w)^2-2\delta
  (\omega x_2+w),\\
&\delta^2 -2\delta(\omega x_2+w) +4\omega x_2w = 0,\\
&\delta = 2 \omega x_2 \quad \text{  or  } \quad \delta = 2 w.
\end{align*}
If $\delta = 2 w$ then 
\begin{align*}
v_y(s_2+)=-v_y(s_2-)+2w = - w +2w = w = v_y(s_2-).
\end{align*}
This is impossible for geometric reasons. Hence, $\delta  =2\omega x_2$ and, therefore,
\begin{align}\label{o18.3}
v_y(s_2+)=-v_y(s_2-)+\delta = -v_y(s_2-)+ 2\omega x_2.
\end{align}
A similar argument shows that $v_y(s_3+)= -v_y(s_3-)+ 2\omega x_1$. Thus
\begin{align*}
v_y(s_3+)&= -v_y(s_3-)+ 2\omega x_1 = -v_y(s_2+)+ 2\omega x_1\\
&= -(-v_y(s_2-)+ 2\omega x_2) + 2\omega x_1 = v_y(s_2-) = v_y(s_1+).
\end{align*}
It is easy to see that the same argument applies to any $k$ and yields 
\begin{align}\label{o18.4}
v(s_{k+2}+) = v(s_k+), \qquad k\geq 1.
\end{align}
An argument similar to that in \eqref{o18.3} shows that
\begin{align*}
| v_y(s_{k+1}+) + v_y(s_k+)| \leq 2 \omega  |x_1|, \qquad k\geq 1.
\end{align*}
This, \eqref{o18.2} and \eqref{o18.4} imply that for any $w_0$ there exists
$w_1$ such that if $\| v(s_1+)\| \geq w_1$ then $\| v(s_k-)\| \geq w_0$ for
all $k$. This bound and Lemma \ref{ju21.1} prove that the sequence $s_1,
s_2, \dots$ is infinite. 

\medskip

(ii)
If $x_1\neq x_2$ then,
\begin{align}
\sign(v_x(0+))&=\sign(x_2-x_1) , \label{o18.1} \\
v_x(0+) s_2 &=  x_2-x_1 , \notag \\
s_2 = (x_2-x_1)/ v_x(0+) &= (x_2-x_1) / v_x(s_2-), \notag \\
v_y(0+) s_2 - (g/2)s_2^2 &= y_2-y_1, \notag \\
v_y(s_1+)=v_y(0+) = (g/2)s_2 + (y_2-y_1)/s_2 &= \frac{y_2-y_1}{x_2-x_1}
v_x(s_2-) + \frac{g(x_2-x_1)}{2v_x(s_2-)}, \notag \\ 
v_y(s_1+)&= \frac{y_2-y_1}{x_2-x_1} v_x(s_1+) + \frac{g(x_2-x_1)}{2v_x(s_1+)} ,\label{j14.1}\\
v_y(s_2-) = v_y(0+) - g s_2 &=  \frac{y_2-y_1}{x_2-x_1} v_x(s_2-) - \frac{g(x_2-x_1)}{2v_x(s_2-)}.
\label{j14.2}
\end{align}
Exchanging the roles of $p_1$ and $p_2$, and shifting time from $s_1$ to $s_2$, we 
obtain a formula analogous to \eqref{j14.1}, 
\begin{align*}
v_y(s_2+) = \frac{y_1-y_2}{x_1-x_2} v_x(s_2+) + \frac{g(x_1-x_2)}{2v_x(s_2+)}
= \frac{y_2-y_1}{x_2-x_1} v_x(s_2+) + \frac{g(x_1-x_2)}{2v_x(s_2+)}.
\end{align*}
This, \eqref{j13.4} and \eqref{j14.2} imply that
\begin{align}\label{Energy2}
&\left(v_x(s_2-)+\omega y_2\right)^2 +\left(\frac{y_2-y_1}{x_2-x_1} v_x(s_2-) 
- \frac{g(x_2-x_1)}{2v_x(s_2-)}-\omega x_2\right)^2 \\
&\quad= \left(v_x(s_2+)+\omega y_2\right)^2 
+\left(\frac{y_2-y_1}{x_2-x_1} v_x(s_2+) + \frac{g(x_1-x_2)}{2v_x(s_2+)}-\omega x_2\right)^2. \notag
\end{align}

We make the following definitions to simplify the notation,
\begin{align}
w & = v_x(s_2-),\label{o17.3}\\
\delta & = v_x(s_2-) + v_x(s_2+),\label{o17.4}\\
\lambda&=\frac{y_2-y_1}{x_2-x_1},\label{ju16.2}\\
\alpha&=x_2-x_1.\label{ju16.3}
\end{align}
Then $v_x(s_2+) = -w + \delta$ and  \eqref{Energy2} can be written in this form,
\begin{align}\label{ju15.1}
&\left(w+\omega y_2\right)^2 +\left(\lambda w 
- \frac{g\alpha}{2w}-\omega x_2\right)^2 \\
&\quad= \left(-w+\delta+\omega y_2\right)^2 
+\left(\lambda(-w+\delta) - \frac{g\alpha}{2(-w+\delta)}-\omega x_2\right)^2. \notag
\end{align}
This equation has to be also satisfied if $w = v_x(s_{2k}-)$ and 
$\delta = v_x(s_{2k}-) + v_x(s_{2k}+)$ for any integer $k\geq 1$. By symmetry,
the following condition
\begin{align}\label{ju15.2}
&\left(w+\omega y_1\right)^2 +\left(\lambda w 
+ \frac{g\alpha}{2w}-\omega x_1\right)^2 \\
&\quad= \left(-w+\delta+\omega y_1\right)^2 
+\left(\lambda(-w+\delta) + \frac{g\alpha}{2(-w+\delta)}-\omega x_1\right)^2, \notag
\end{align}
has to be satisfied if  $w = v_x(s_{2k+1}-)$ and  $\delta = v_x(s_{2k+1}-)
+ v_x(s_{2k+1}+)$ for  integer $k\geq 1$. 

We can think of \eqref{ju15.1} and \eqref{ju15.2} as equations with unknown
$\delta$, all other quantities being treated as known constants. It is
obvious that the equations are satisfied by  $\delta = 2w$. We will call a
solution \emph{relevant} if $\delta \ne 2 w$. 

Direct computations show that \eqref{ju15.1} is equivalent to
\begin{align*}
&\frac{(\delta-2 w) \left(\alpha^2 \delta g^2+4 \alpha g \omega  w x_2
    (\delta-w)-4 w^2 (\delta-w)^2 \left(\delta \lambda^2+\delta-2 \lambda
    \omega  x_2+2 \omega  y_2\right)\right)}{4 w^2 (\delta-w)^2} \\ 
&=0.
\end{align*}
Note that, since we require that the consecutive reflections of the
particle occur at $p_1$ and $p_2$, and $x_1\ne x_2$, we cannot have  
$v_x(s_2+) = 0$. Hence, $v_x(s_2+) = -w + \delta\ne 0$ and, therefore
$\delta-w\ne 0$. It follows that we are not dividing by 0 in the last
formula. Thus, to find a relevant $\delta$, we need to solve 
\begin{align}\label{ju15.3}
\alpha^2 \delta g^2+4 \alpha g \omega  w x_2 (\delta-w)-4 w^2
(\delta-w)^2 \left(\delta \lambda^2+\delta-2 \lambda \omega  x_2+2
\omega  y_2\right)=0. 
\end{align}
We are interested in solutions for $|w|$ large, so set $u=w^{-1}$. 
Equation \eqref{ju15.3} becomes 
\begin{equation}\label{intermsofu}
F(u,\de):=4(1-u\delta)^2\Big((1+\lambda^2)\delta+2\om(y_2-\la x_2)\Big) 
+4\al g \om x_2u^2(1-u\de) - \al^2g^2u^4\de=0.
\end{equation}
When $u=0$ the unique solution is 
\begin{equation*}
\delta = \delta_0: = \frac{2 \omega  (\lambda x_2 - y_2)}{1+\lambda^2}.
\end{equation*}
Observe that 
$\pa_\de F(0,\de_0)=4(1+\la^2)\neq 0$.  Therefore the implicit function   
theorem implies the existence of a unique real-analytic function 
$f(u)$ defined for $u$ near $0$ such that $f(0)=\de_0$ and $F(u,f(u))=0$.     
The derivatives of $f$ at $u=0$ can be obtained by successively
differentiating the equation $F(u,f(u))=0$ and evaluating at $u=0$.  
Differentiating once and using the product rule on the first term gives
\[
4\big((1-uf)^2\big)'\cdot 0 +4(1+\la^2)f'(0)+ 0+0=0,
\]
so $f'(0)=0$.  Differentiating twice, applying the product
rule twice on the first term and recalling $f'(0)=0$ gives 
\[
4\big((1-uf)^2\big)''\cdot 0 +8\big((1-uf)^2\big)'\cdot 0 
+4(1+\la^2)f''(0)+8\al g \om x_2 +0=0.
\]
So 
\[
f''(0) = -\frac{2\al g\om x_2}{1+\la^2}.
\]

Set $\de(w)=f(w^{-1})$.  Then for $|w|$ sufficiently large, $\de(w)$ is
given by a convergent power series in $w^{-1}$ and $\de = \de(w)$ solves 
\eqref{ju15.1}.  If we set 
\[
\de_2=\frac12 f''(0)=-\frac{\al g\om x_2}{1+\la^2},
\]
then the expansion of $\de(w)$ in powers of $w^{-1}$ is 
\begin{equation}\label{deltaexpand}
\de(w)= \de_0 +\de_2 w^{-2} + O(|w|^{-3}).
\end{equation}
Recall the definition \eqref{ju16.2} of $\lambda$ to see that 
\begin{align}\label{ju16.5}
\delta_0 = \frac{2 \omega  (\lambda x_2 - y_2)}{1+\lambda^2}
= \frac{2 \omega  (\frac{y_2-y_1}{x_2-x_1} x_2 - y_2)}
{1+\left(\frac{y_2-y_1}{x_2-x_1}\right)^2}
= \frac{2 \omega  (x_1 y_2 - x_2 y_1)(x_2-x_1)}
{(x_2-x_1)^2 +(y_2-y_1)^2}.
\end{align}
Likewise we use \eqref{ju16.2} and \eqref{ju16.3} to write 
\begin{align}\label{ju16.6}
\delta_2 = -\frac{\alpha g \omega  x_2}{1+\lambda^2}
= -\frac{(x_2-x_1) g \omega  x_2}
{1+\left(\frac{y_2-y_1}{x_2-x_1}\right)^2}
= -\frac{(x_2-x_1)^3 g \omega  x_2}
{(x_2-x_1)^2 +(y_2-y_1)^2}.
\end{align}

We use \eqref{o17.3}-\eqref{o17.4} and the generalization of this notation
to even indices, together with \eqref{ju16.5}-\eqref{ju16.6}, 
to write \eqref{deltaexpand} in the following form, 
\begin{align}\label{o17.5}
v_x(s_{2k+1}-) + v_x(s_{2k}-) & = \frac{2 \omega  (x_1 y_2 - x_2 y_1)(x_2-x_1)}
{(x_2-x_1)^2 +(y_2-y_1)^2}\\
&\quad   -\frac{(x_2-x_1)^3 g \omega  x_2}
{(x_2-x_1)^2 +(y_2-y_1)^2} \frac1{v_x(s_{2k}-)^2} +
O(|v_x(s_{2k}-)|^{-3}).\notag 
\end{align}
Since \eqref{ju15.2} can be obtained from \eqref{ju15.1} by exchanging the
roles of $(x_1,y_1)$ and $(x_2, y_2)$, it follows from  \eqref{o17.5} that  
\begin{align}\label{o17.6}
v_x(s_{2k+2}-) + v_x(s_{2k+1}-)
& = \frac{2 \omega  (x_1 y_2 - x_2 y_1)(x_2-x_1)}
{(x_2-x_1)^2 +(y_2-y_1)^2}\\
&\quad   +\frac{(x_2-x_1)^3 g \omega  x_1}
{(x_2-x_1)^2 +(y_2-y_1)^2} \frac1{v_x(s_{2k+1}-)^2} +
O(|v_x(s_{2k+1}-)|^{-3}).\notag
\end{align}
Subtracting \eqref{o17.5} from \eqref{o17.6} yields,
\begin{align}\label{o17.9}
v_x(s_{2k+2}-) - v_x(s_{2k}-)
& = \frac{(x_2-x_1)^3 g \omega  }
{(x_2-x_1)^2 +(y_2-y_1)^2} 
\left(\frac{x_1}{v_x(s_{2k+1}-)^2}
+ \frac{x_2}{v_x(s_{2k}-)^2}\right)\\
&\qquad + O(|v_x(s_{2k+1}-)|^{-3})
+ O(|v_x(s_{2k}-)|^{-3}).\notag 
\end{align}
Note that the signs of $v_x(s_{2k+1}-) $ and $ v_x(s_{2k}-)$ are different
and the same observation applies to $v_x(s_{2k+2}-) $ and $
v_x(s_{2k+1}-)$. 
It follows from \eqref{o17.5} that there exists $w_2$ such that if $|
v_x(s_{2k}-)| \geq 2w_2$ then 
\begin{align}\label{o17.7}
\Big||v_x(s_{2k+1}-)| -| v_x(s_{2k}-)|\Big| 
\leq \frac{3 \omega |(x_1 y_2 - x_2 y_1)(x_2-x_1)|}
{(x_2-x_1)^2 +(y_2-y_1)^2}.
\end{align}
We increase $w_2$, if necessary, so that \eqref{o17.7} combined with \eqref{o17.6} yields $| v_x(s_{2k+1}-)| \geq w_2$ and
\begin{align}\label{o17.8}
\Big||v_x(s_{2k+2}-)| -| v_x(s_{2k+1}-)|\Big| 
\leq \frac{4 \omega |(x_1 y_2 - x_2 y_1)(x_2-x_1)|}
{(x_2-x_1)^2 +(y_2-y_1)^2}.
\end{align}
In view of \eqref{o17.7}-\eqref{o17.8}, formula \eqref{o17.9} implies that
there exists $w_3$ such that, assuming $| v_x(s_{2k}-)| \geq w_3$, we have 
\begin{align}\label{ju21.2}
v_x(s_{2k+2}-) = v_x(s_{2k}-) + \frac{g \omega (x_2-x_1)^3 ( x_1+x_2)}
{v_x(s_{2k}-)^2\left((x_2-x_1)^2 +(y_2-y_1)^2\right)} +
O\left(\frac1{|v_x(s_{2k}-)|^3}\right). 
\end{align}
This proves \eqref{j18.1} for even $k$ and for all $x_2\neq x_1$. The proof
for odd $k$ is analogous. 

It follows from
\eqref{j18.1}  that for some $w_4$, if $| v_x(s_{1}+)| \geq w_4$ then at
least one of the sequences $\{|v_x(s_{2k}-)|, k\geq 1\}$ and
$\{|v_x(s_{2k+1}-)|, k\geq 0\}$ is nondecreasing,  
depending on the sign of 
\begin{align*}
\frac{g \omega (x_2-x_1)^3 ( x_1+x_2)}
{(x_2-x_1)^2 +(y_2-y_1)^2}.
\end{align*}
This and \eqref{o17.7}-\eqref{o17.8} imply that for some $w_5$, if 
$|v_x(s_{1}+)| \geq w_5$ then  $|v_x(s_{k}-)|\geq w_4$ for $k\geq 2$. We
can choose arbitrarily large $w_4$, so we can apply Lemma \ref{ju21.1} at all
reflection times. Hence, if $x_2\neq x_1$ then the sequence 
$s_1, s_2, \dots$ is infinite.  

It remains to show that if $x_1=-x_2\neq 0$, then $v(s_{k+2}+) = v(s_k+)$
for all $k\geq 1$.  If we take $w=v_x(s_{2k}-)$ with $|w|$ sufficiently
large, then $\de=\de(v_x(s_{2k}-))$ solves \eqref{ju15.1} and 
\begin{align*}
v_x(s_{2k+1}-) = -v_x(s_{2k}-) +\de(v_x(s_{2k}-)).
\end{align*}
The two sides of \eqref{ju15.1} are symmetric, in the sense
that exchanging the roles of $w$ and $-w+\delta$ turns the left hand side
into the right hand side and vice versa.  Therefore $\de=\de(v_x(s_{2k}-))$
also solves \eqref{ju15.1} for 
$w=-v_x(s_{2k}-) +\de(v_x(s_{2k}-))=v_x(s_{2k+1}-)$.  
According to Lemma \ref{j17.3} proved below, it follows that
$\de=\de(v_x(s_{2k}-))$ solves \eqref{ju15.2} for 
$w=v_x(s_{2k+1}-)$.  The same implicit function theorem argument applied
above to \eqref{ju15.1} can be applied to \eqref{ju15.2}, implying the
existence of a unique solution near $\de_0$ for $|w|$ large.  Thus   
\begin{align*}
v_x(s_{2k+2}-) = -v_x(s_{2k+1}-) +\de(v_x(s_{2k}-)).
\end{align*}
So, for all $k$,
\begin{align*}
v_x(s_{2k+2}-) & =  -v_x(s_{2k+1}-) +\de(v_x(s_{2k}-))\\
& =  -(-v_x(s_{2k}-) +\de(v_x(s_{2k}-))) + \de(v_x(s_{2k}-))\\
& =v_x(s_{2k}-). 
\end{align*}
The proof that $v_x(s_{k+2}-) = v_x(s_{k}-)$ for $k$ odd is analogous.  
\end{proof}

\begin{lemma}\label{j17.3}

If $x_1=-x_2\neq 0$, then any solution $(w,\delta)$ to
\eqref{ju15.1} is a solution to \eqref{ju15.2} and vice versa. 
\end{lemma}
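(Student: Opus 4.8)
The plan is to recognize that each of \eqref{ju15.1} and \eqref{ju15.2} asserts that a single function takes the same value at $w$ and at $-w+\delta$, and that the two functions involved differ only by an additive constant once $x_1=-x_2$. Concretely, with $\lambda$ and $\alpha$ as in \eqref{ju16.2} and \eqref{ju16.3}, set, for $t\ne 0$,
\begin{align*}
P_1(t)&=(t+\omega y_2)^2+\Bigl(\lambda t-\frac{g\alpha}{2t}-\omega x_2\Bigr)^2, \\
P_2(t)&=(t+\omega y_1)^2+\Bigl(\lambda t+\frac{g\alpha}{2t}-\omega x_1\Bigr)^2.
\end{align*}
Then \eqref{ju15.1} is exactly $P_1(w)=P_1(-w+\delta)$ and \eqref{ju15.2} is exactly $P_2(w)=P_2(-w+\delta)$. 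So it suffices to show that $P_1-P_2$ is constant on $\{t\ne 0\}$: once that is established, $P_1(w)-P_1(-w+\delta)=P_2(w)-P_2(-w+\delta)$ for every admissible pair $(w,\delta)$, and therefore the two equations have exactly the same solution set, in both directions.

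To prove the constancy I would expand the two differences of squares separately. The first pair contributes $(t+\omega y_2)^2-(t+\omega y_1)^2=2\omega(y_2-y_1)t+\omega^2(y_2^2-y_1^2)$. For the second pair, writing $U=\lambda t-\frac{g\alpha}{2t}-\omega x_2$ and $V=\lambda t+\frac{g\alpha}{2t}-\omega x_1$, one has $U-V=-\frac{g\alpha}{t}-\omega\alpha$ and, using the hypothesis $x_1+x_2=0$, $U+V=2\lambda t$; hence $U^2-V^2=(U-V)(U+V)=-2g\alpha\lambda-2\omega\alpha\lambda\,t$. Adding the two contributions and invoking the identity $\alpha\lambda=y_2-y_1$, which is immediate from \eqref{ju16.2} and \eqref{ju16.3}, the coefficient of $t$ cancels and there is no $t^{-1}$ term, so that
\[
P_1(t)-P_2(t)=\omega^2(y_2^2-y_1^2)-2g(y_2-y_1)=(y_2-y_1)\bigl(\omega^2(y_1+y_2)-2g\bigr),
\]
which is independent of $t$. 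This completes the argument.

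I do not expect any real obstacle beyond keeping the algebra straight, but it is worth pinpointing where the hypothesis enters. The condition $x_1=-x_2$ is precisely what removes the constant term $-\omega(x_1+x_2)$ from $U+V$, and it is this removal that prevents a nonzero $t^{-1}$ term from surviving in $U^2-V^2$; for a generic pair $p_1,p_2$ such a term does survive and $P_1-P_2$ is not constant, so the lemma is genuinely special to $x_1=-x_2$. The additional clause $x_1\ne 0$ is used only to ensure $x_1\ne x_2$, so that $\lambda$ and $\alpha$ in \eqref{ju16.2}--\eqref{ju16.3} make sense. Finally, both equations implicitly require $w\ne 0$ and $-w+\delta\ne 0$ (the poles of $P_1$ and $P_2$), so the stated equivalence is understood on that common domain -- which is exactly the regime in which the lemma is used in the proof of Proposition \ref{j13.2}.
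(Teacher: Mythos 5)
Your proof is correct and is essentially the paper's argument: both show that the difference of the two left-hand sides is independent of $w$ (equivalently, that $P_1-P_2$ is constant) and then use the fact that each right-hand side arises from the left by the substitution $w\mapsto -w+\delta$. The only cosmetic difference is that the paper additionally invokes the unit-circle identity $x_1^2+y_1^2=x_2^2+y_2^2$ to evaluate the constant as $-2g\alpha\lambda$, whereas you leave it as $(y_2-y_1)\bigl(\omega^2(y_1+y_2)-2g\bigr)$; since only constancy matters, both forms suffice (and they agree, because $x_1=-x_2$ forces $y_1^2=y_2^2$).
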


\begin{proof}
The difference of the left hand sides of \eqref{ju15.1} and \eqref{ju15.2} is equal to
\begin{align}\label{ju16.10}
&\left(w+\omega y_2\right)^2 +\left(\lambda w 
- \frac{g\alpha}{2w}-\omega x_2\right)^2 
-\left(w+\omega y_1\right)^2 
-\left(\lambda w 
+ \frac{g\alpha}{2w}-\omega x_1\right)^2\\
&= \left(\omega y_2\right)^2 - \left(\omega y_1\right)^2
+ 2 w\omega (y_2-y_1)
+ \left(\lambda w -\omega x_2\right)^2
-\left(\lambda w -\omega x_1\right)^2\notag\\
&\quad -\frac{g\alpha}{w} \left(\lambda w -\omega x_2 +\lambda w -\omega x_1\right)\notag\\
&= \left(\omega y_2\right)^2 + \left(\omega x_2\right)^2
- \left(\omega y_1\right)^2 -\left(\omega x_1\right)^2
+ 2 w\omega (y_2-y_1)
-2 \lambda w \omega (x_2-x_1)\notag\\
&\quad -\frac{g\alpha}{w} \left(2\lambda w -\omega (x_1 +x_2)\right).\notag
\end{align}
Since the points $p_1=(x_1,y_1)$ and $ p_2=(x_2,y_2)$ lie on the unit circle,
\begin{align}\label{ju16.11}
\left(\omega y_2\right)^2 + \left(\omega x_2\right)^2
= \left(\omega y_1\right)^2 +\left(\omega x_1\right)^2 = \omega^2.
\end{align}
Recall that $\lambda = \frac{y_2-y_1}{x_2-x_1}$ to see that
\begin{align*}
2 w\omega (y_2-y_1)
-2 \lambda w \omega (x_2-x_1)
= 2 w\omega (y_2-y_1)
-2 \frac{y_2-y_1}{x_2-x_1} w \omega (x_2-x_1)=0.
\end{align*}
This, \eqref{ju16.10}, \eqref{ju16.11} and the assumption that $x_1 +x_2=0$ imply that
the difference of the left hand sides of \eqref{ju15.1} and \eqref{ju15.2} is equal to
\begin{align}\label{ju16.12}
 -\frac{g\alpha}{w} \left(2\lambda w -\omega (x_1 +x_2)\right)
= -2 g\alpha \lambda
+\frac{g\alpha\omega (x_1 +x_2)}{w} =-2 g\alpha \lambda.
\end{align}
Note that the right hand side of \eqref{ju15.1} can be obtained from the
left hand side by replacing $w$ with $-w+\delta$, and the same remark
applies to \eqref{ju15.2}. Since the right hand side of \eqref{ju16.12}
does not depend on $w$, it follows that the difference of the right hand
sides of \eqref{ju15.1} and \eqref{ju15.2} is equal to 
$-2 g\alpha \lambda$. Hence, both differences are equal to each other. This
proves the lemma. 
\end{proof}

\begin{corollary}\label{j25.1}
Suppose that assumptions of Proposition \ref{j13.2} (ii) hold; in
particular, $x_2 > x_1\ne - x_2$. There exists $w_0>0$ such that if
$v_x(s_1+)>w_0$   then  
\begin{align}\label{o18.10}
\lim_{t\to\infty} \frac{\|v(t-)\|^2}{t}
= g \omega  | x_1+x_2|.
\end{align}
\end{corollary}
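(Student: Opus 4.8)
The plan is to turn the second‑order recursion for the horizontal velocity supplied by Proposition~\ref{j13.2}(ii) into sharp asymptotics for the speed at the reflection times and for the reflection times themselves, and then to interpolate to arbitrary $t$. By Proposition~\ref{j13.2}(ii), under the stated hypothesis the sequence $s_1,s_2,\dots$ is infinite and \eqref{j18.1} (equivalently \eqref{ju21.2}) holds for all large indices; recall also that $v_x$ is constant on each parabolic arc, that the particle is at $p_2$ at the times $s_{2k}$ and at $p_1$ at the times $s_{2k+1}$, and that $v_x(s_{2k}-)>0$ while $v_x(s_{2k+1}-)<0$ since $x_2>x_1$. Abbreviate by $c$ the constant $g\omega(x_2-x_1)^3(x_1+x_2)/\big((x_2-x_1)^2+(y_2-y_1)^2\big)$ occurring in \eqref{j18.1}; as $x_2>x_1$ and $x_1\neq-x_2$ we have $c\neq0$ and $\sign(c)=\sign(x_1+x_2)$. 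I will treat the case $x_1+x_2>0$, so $c>0$ and (enlarging $w_0$) the reflection speeds stay above a large constant; the case $x_1+x_2<0$ is symmetric. Put $w_k=v_x(s_{2k}-)$ for $k\ge1$, so $w_1=v_x(s_1+)$; by \eqref{o17.7}--\eqref{o17.8} one has $|v_x(s_{2k+1}-)|=w_k+O(1)$, so the two families of reflection speeds are comparable.

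First I would pin down the growth of $w_k$. Enlarging $w_0$ if necessary, a short induction from \eqref{j18.1} — absorbing the $O(|w_k|^{-3})$ remainder into the main increment $c\,w_k^{-2}$ — yields crude two‑sided bounds $a_1k^{1/3}\le w_k\le a_2k^{1/3}$ for suitable $0<a_1<a_2$. Feeding these back into the binomial expansion of \eqref{ju21.2}, one gets $w_{k+1}^3-w_k^3=3c+O(|w_k|^{-1})$; summing from $1$ to $k$ and using $\sum_{j\le k}|w_j|^{-1}=O(k^{2/3})=o(k)$ gives $w_k^3=3ck+o(k)$, hence
\[
w_k\sim(3ck)^{1/3},\qquad |v_x(s_{2k+1}-)|\sim(3ck)^{1/3}.
\]
Next I would compute the reflection times. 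The flight from $p_2$ at $s_{2k}$ to $p_1$ at $s_{2k+1}$ has horizontal extent $x_1-x_2$ traversed at horizontal speed $|v_x(s_{2k+1}-)|$, so $s_{2k+1}-s_{2k}=(x_2-x_1)/|v_x(s_{2k+1}-)|$, and likewise $s_{2k+2}-s_{2k+1}=(x_2-x_1)/w_{k+1}$. Thus $s_{2k+2}-s_{2k}\sim2(x_2-x_1)(3ck)^{-1/3}$, and summing, $s_{2k}\sim2(x_2-x_1)(3c)^{-1/3}\sum_{j\le k}j^{-1/3}\sim3(x_2-x_1)(3c)^{-1/3}k^{2/3}$.

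To pass to $\|v(t-)\|^2$, note that by \eqref{j14.2} (and its shift to an arbitrary even index) $v_y(s_{2k}-)=\lambda w_k+O(|w_k|^{-1})$ with $\lambda=(y_2-y_1)/(x_2-x_1)$, so $\|v(s_{2k}-)\|^2=(1+\lambda^2)w_k^2+O(1)$. For a general $t$ in an arc between two consecutive reflections surrounded by the index $k$, $v_x(t)$ equals one of the two adjacent reflection speeds, both $\sim(3ck)^{1/3}$; and since that arc stays in the open unit disc its horizontal extent is $\le2$, so its duration is $\le2/|v_x(t)|$ and $v_y$ varies by only $O(|v_x(t)|^{-1})$ along it. Since at the arc's endpoints $v_y=\lambda v_x+O(|v_x|^{-1})$, we get $\|v(t-)\|^2=(1+\lambda^2)v_x(t)^2+O(1)=(1+\lambda^2)(3ck)^{2/3}(1+o(1))$ uniformly for $t\in[s_{2k},s_{2k+2}]$, while on the same interval $t\sim3(x_2-x_1)(3c)^{-1/3}k^{2/3}$. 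Dividing, and simplifying using $1+\lambda^2=\big((x_2-x_1)^2+(y_2-y_1)^2\big)/(x_2-x_1)^2$, the ratio collapses to $(1+\lambda^2)c/(x_2-x_1)=g\omega(x_1+x_2)=g\omega|x_1+x_2|$, as required.

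The step I expect to be the main obstacle is the passage from the crude bounds to the sharp law $w_k^3=3ck+o(k)$: the recursion \eqref{j18.1} controls $w_{k+1}$ only through $w_k$ with an error that is a priori merely $O(|w_k|^{-3})$ against an increment of order $|w_k|^{-2}$, so one cannot legitimately sum the errors before first securing the $k^{1/3}$‑order lower bound by a bootstrap. Once that is in hand, the telescoping sum for $s_{2k}$ and the uniform‑in‑the‑arc estimate for $\|v(t-)\|^2$ — the latter resting only on the disc constraint to bound each flight's duration — are routine bookkeeping.
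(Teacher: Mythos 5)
Your argument is correct and follows essentially the same route as the paper's proof: extract $|v_x(s_k-)|\sim(\mathrm{const}\cdot k)^{1/3}$ from the recursion \eqref{j18.1}, sum the flight durations to get $s_k\sim \mathrm{const}\cdot k^{2/3}$, invert, and use \eqref{j14.2} to control $v_y$. The only cosmetic difference is that you derive the cube-root asymptotics by telescoping $w_{k+1}^3-w_k^3=3c+O(w_k^{-1})$ after a bootstrap, whereas the paper invokes comparison with the ODE $v'=c_1/v^2$; your version is, if anything, slightly more explicit about the a priori lower bound needed to sum the error terms and about uniformity of the estimates over entire arcs.
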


\begin{proof}
We will give the proof in the case $x_1+x_2 >0$. The other case follows by symmetry.

Let $w_0$ be as in the statement of Proposition \ref{j13.2} (ii).

Under the assumptions of the corollary, it follows from \eqref{j18.1} that if $k$ is even and
\begin{align}\label{o18.11}
c_1 := \frac{g \omega (x_2-x_1)^3 ( x_1+x_2)}
{(x_2-x_1)^2 +(y_2-y_1)^2} >0,
\end{align}
then, 
\begin{align*}
v_x(s_{k+2}-) = v_x(s_{k}-) +  \frac{c_1}{v_x(s_{k}-)^2} + O\left(\frac1{|v_x(s_{k}-)|^3}\right).
\end{align*}
It follows, by comparison with the solution to the differential equation
$v'= c_1/v^2$, that for arbitrarily small $\eps>0$, there exist $k_0$ such
that for even $k\geq k_0$, 
\begin{align}\label{j19.1}
(1-\eps) (3 c_1 k/2)^{1/3} \leq
|v_x(s_{k}-)| \leq (1+\eps) (3 c_1 k/2)^{1/3}.
\end{align}
The first terms on the right hand sides of \eqref{o17.5} and
\eqref{o17.6} do not depend on $v_x(s_{2k}-)$ and $v_x(s_{2k+1}-)$ so \eqref{j19.1} holds not only for
even $k$ but for odd $k$ as well (although $k_0$ might have to be
adjusted). 

Since $s_{k+1} - s_k =   |x_2 - x_1| / |v_x(s_k+)|$, \eqref{j19.1} implies
that for any $\eps>0$  and sufficiently large $k$, 
\begin{align*}
s_{k+1} - s_k 
&\leq  \frac{ x_2 - x_1}{(1-\eps) (3 c_1 k/2)^{1/3}}
=
\frac{ (x_2 - x_1)(2((x_2-x_1)^2 +(y_2-y_1)^2))^{1/3}}
{(1-\eps) (3 g \omega (x_2-x_1)^3 ( x_1+x_2) k)^{1/3}}\\
&\leq (1+2\eps)
\left(\frac{2((x_2-x_1)^2 +(y_2-y_1)^2)}
{3 g \omega  ( x_1+x_2) k}\right)^{1/3}.
\end{align*}
The corresponding lower bound is
\begin{align*}
s_{k+1} - s_k 
\geq (1-2\eps)
\left(\frac{2((x_2-x_1)^2 +(y_2-y_1)^2)}
{3 g \omega  ( x_1+x_2) k}\right)^{1/3}.
\end{align*}
These bounds imply that for arbitrarily small $\eps>0$ and large $k$,
\begin{align}\label{jy10.1}
&(1-\eps)
\left(\frac{(x_2-x_1)^2 +(y_2-y_1)^2}
{ g \omega  ( x_1+x_2) }\right)^{1/3} \left(\frac 3 2\right)^{2/3} k^{2/3}\\
&\qquad\leq
s_k \leq (1+\eps)
\left(\frac{(x_2-x_1)^2 +(y_2-y_1)^2}
{ g \omega  ( x_1+x_2) }\right)^{1/3} \left(\frac 3 2\right)^{2/3} k^{2/3}.
\notag
\end{align}
This, in turn, means that
 for arbitrarily small $\eps>0$ and large $k$,
\begin{align*}
&(1-\eps)
\frac 2 3
\left(\frac{(x_2-x_1)^2 +(y_2-y_1)^2}
{ g \omega  ( x_1+x_2) }\right)^{-1/2}  s_k^{3/2}\\
&\qquad\leq
k \leq (1+\eps)\frac 2 3
\left(\frac{(x_2-x_1)^2 +(y_2-y_1)^2}
{ g \omega  ( x_1+x_2) }\right)^{-1/2}  s_k^{3/2}.
\end{align*}
We combine this estimate with \eqref{o18.11} and \eqref{j19.1} to conclude that
\begin{align*}
|v_x(s_{k}-)| &\leq (1+\eps) 
\left(\frac32 \frac{g \omega (x_2-x_1)^3 ( x_1+x_2)}
{(x_2-x_1)^2 +(y_2-y_1)^2} 
 \frac 2 3
\left(\frac{(x_2-x_1)^2 +(y_2-y_1)^2}
{ g \omega  ( x_1+x_2) }\right)^{-1/2}  s_k^{3/2}\right)^{1/3}\\
&= (1+\eps) 
\left( \frac{g \omega (x_2-x_1)^3 ( x_1+x_2)}
{(x_2-x_1)^2 +(y_2-y_1)^2} 
\left(\frac{(x_2-x_1)^2 +(y_2-y_1)^2}
{ g \omega  ( x_1+x_2) }\right)^{-1/2}  s_k^{3/2}\right)^{1/3}\\
&= (1+\eps)  (x_2-x_1)
\left( \frac{g \omega  ( x_1+x_2)}
{(x_2-x_1)^2 +(y_2-y_1)^2} \right)^{1/2}s_k^{1/2}.
\end{align*}
This, an analogous lower bound and \eqref{jy10.1} show that
\begin{align*}
\lim_{t\to\infty} \frac{|v_x(t-)|}{\sqrt{t}}
=  (x_2-x_1)
\left( \frac{g \omega  ( x_1+x_2)}
{(x_2-x_1)^2 +(y_2-y_1)^2} \right)^{1/2}.
\end{align*}
It follows from \eqref{j14.2} that
\begin{align*}
\lim_{t\to\infty} \frac{|v_y(t-)|}{\sqrt{t}}
=  (y_2-y_1)
\left( \frac{g \omega  ( x_1+x_2)}
{(x_2-x_1)^2 +(y_2-y_1)^2} \right)^{1/2}.
\end{align*}
Hence, we have the following asymptotic behavior for the energy $\|v(t-)\|^2$,
\begin{align*}
\lim_{t\to\infty} \frac{\|v(t-)\|^2}{t}
=
\lim_{t\to\infty} \frac{v_x(t-)^2+v_y(t-)^2}{t}
= g \omega  ( x_1+x_2).
\end{align*}
\end{proof}

The following two corollaries are concerned with models, mentioned at the
beginning of the section, that are more realistic than the model considered
in Proposition \ref{j13.2} and Corollary \ref{j25.1}. 

\begin{corollary}\label{o16.1}
Suppose that the billiards table is circular and reflections are
Lambertian, with the energy of the particle conserved in $F$ for every
reflection time $t$. 
For every rotation speed $\omega$ and acceleration $g>0$
there exist $w_0<\infty$ and a point $p_1$ on the unit circle such that for
every $w_1\in(w_0,\infty)$ there exists $\eps>0$ such that the particle
starting at $p_1$ with energy $w_0$ will attain energy $w_1$  with
probability greater than $\eps$. 
\end{corollary}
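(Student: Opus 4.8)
The idea is to transplant the deterministic acceleration mechanism of Proposition~\ref{j13.2} and Corollary~\ref{j25.1} into the Lambertian model by a positive-probability tube argument. Choose distinct points $p_1=(x_1,y_1)$ and $p_2=(x_2,y_2)$ on the unit circle with $x_2>x_1$ and $x_1+x_2>0$; then $x_1\ne -x_2$ and the coefficient $\frac{g\omega(x_2-x_1)^3(x_1+x_2)}{(x_2-x_1)^2+(y_2-y_1)^2}$ appearing in \eqref{j18.1} is strictly positive, so the deterministic trajectory of Definition~\ref{j25.2} bouncing between $p_1$ and $p_2$ gains energy. Take $w_0$ larger than the thresholds of Lemma~\ref{ju21.1} and Proposition~\ref{j13.2}(ii) for this pair, and large enough that the particle leaving $p_1$ toward $p_2$ with energy $w_0$ along the near-chord branch of Lemma~\ref{ju21.1} has $v_x(s_1+)$ above the threshold of Proposition~\ref{j13.2}(ii) (possible because for large energy that branch is almost parallel to the chord). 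Let $\gamma^{*}$ be the resulting deterministic trajectory started at $p_1$ at time $s_1=0$ with that initial velocity; by Proposition~\ref{j13.2}(ii) and Corollary~\ref{j25.1} it is infinite and its energy tends to infinity, so we may fix $N=N(w_0,w_1)$ such that the energy of $\gamma^{*}$ exceeds $2w_1$ at its $N$-th reflection.

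Next I would set up the tube. Along $\gamma^{*}$ every flight stays strictly inside the disc and meets the circle transversally (each flight is close to the chord $p_1p_2$), every reflection conserves $E^F$ and sends the velocity in the frame $F$ into the interior of the inward hemisphere, and the hit points alternate between $p_1$ and $p_2$. Hence the $N$-fold iteration of ``free flight to the next boundary hit, then reflect'' depends continuously on the finitely many reflection directions in a neighborhood of $\gamma^{*}$; in particular the energy at the $N$-th reflection is a continuous function of those directions, equal to a value exceeding $2w_1$ at $\gamma^{*}$. Therefore there is $\rho>0$ such that if, at each of the first $N$ reflections, the outgoing direction measured in $F$ lies within angular distance $\rho$ of the corresponding direction of $\gamma^{*}$, then the perturbed trajectory stays in a thin tube about $\gamma^{*}$ --- no spurious intermediate boundary hit occurs and the hit points remain near $p_1$ and $p_2$ --- and it reaches its $N$-th reflection with energy at least $w_1$.

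Finally I would estimate the probability of staying in the tube. Shrinking $\rho$ if needed, the $\rho$-ball about each reference direction stays a positive distance from the boundary of the inward hemisphere, uniformly along the tube; since those directions are close to the nontangential chord direction, their inner products with the inward normals at the nearby hit points are bounded below by some $c_2>0$. As the Lambertian density \eqref{f22.1} in the frame $F$ is proportional to that inner product, it is at least a positive constant on each such $\rho$-ball, so conditionally on the past the $k$-th reflection direction falls within $\rho$ of the reference direction with probability at least some $\delta_1>0$ independent of $k$ and of the history inside the tube. By the strong Markov property, all of the first $N$ reflections are ``good'' with probability at least $\delta_1^{N}$, and on that event the energy reaches at least $w_1$; taking $\eps:=\delta_1^{N}$ proves the corollary. (If the starting velocity at $p_1$ is itself Lambertian rather than prescribed, one only loses one more positive factor; and since the energy is continuous along each parabolic arc and starts at $w_0<w_1$, on the good event it actually attains the value $w_1$.)

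The step I expect to be the main obstacle is the continuity/robustness claim of the second paragraph: that a uniformly small perturbation of each of finitely many reflection directions yields a uniformly small perturbation of the whole $N$-step trajectory, so that the acceleration estimate of Proposition~\ref{j13.2}, which was derived for the exact alternating dynamics, survives on the tube. This is a finite composition of smooth maps, but one must exploit transversality of the near-chord flights and nondegeneracy of the reflections, and check that the hit points stay near $p_1$ and $p_2$, so that the inward normals, and hence the lower bounds on the Lambertian densities used in the third paragraph, remain uniformly controlled.
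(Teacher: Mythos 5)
Your proposal is correct and follows essentially the same strategy as the paper's proof: fix the deterministic accelerating trajectory of Corollary~\ref{j25.1} as a reference, use continuity of the finitely many flight-and-reflect maps to show that small perturbations of the outgoing directions still reach energy $w_1$, and then note that the Lambertian law assigns positive probability to each such perturbation. Your explicit lower bound $\delta_1^{N}$ on the tube probability is merely a more quantitative version of the paper's observation that the event has strictly positive probability.
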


\begin{proof}

\emph{Step 1}.
According to Corollary \ref{j25.1}, there exist points $p_1$ and $p_2$ on
the unit circle and $w_0<\infty$ such that if the particle starts from
$p_1$ with the initial speed in the $x$-direction greater than $w_0$, and
reflecting according to the model described in Definition \ref{j25.2} then
for every $w_1>w_0$, the energy of the particle will be greater than $2
w_1$  at some  time $t_*$.  
We will call this trajectory $\calT$.
Suppose that $\calT$ undergoes exactly $m$ reflections before  $t_*$.

\medskip 
\noindent
\emph{Step 2}.
It is easy to see that
each part of the trajectory, between two consecutive reflections, is a
continuous function of the initial conditions in the following
sense. Suppose that the trajectory starts from point $p_3$ on the unit
circle at time $t=0$ with the initial velocity $v(0+)$ and it hits the unit
circle at time $t_1>0$, at a point $p_4$, with the velocity
$v(t_1-)$. Moreover, assume that the velocity $v(t_1-)$ is such that if the
trajectory continues after time $t_1$ then it will immediately cross to the
exterior of the unit disc. 
Then for every $\eps_1>0$ there is $\delta_1>0$ such that if a trajectory
starts from a  point $p_5$ on the unit circle at time $t_2$ (positive or
negative),  with the initial velocity $\wt v(t_2+)$, satisfying $\|p_3 -
p_5\| \leq \delta_1$, $|t_2| \leq \delta_1$, and $\|\wt v(t_2+) - v(0+)\|
\leq \delta_1$ then the new trajectory will hit the unit circle at a time
$t_3>0$, at a point $p_6$, with  velocity $v(t_3-)$, satisfying $\|p_6 -
p_4\| \leq \eps_1$, $|t_3-t_1| \leq \eps_1$, and $\|\wt v(t_1-) - v(t_3-)\|
\leq \eps_1$. 

\medskip 
\noindent
\emph{Step 3}.
Let $\theta_1$ denote a small (positive or negative) angle.
Let $\calT_1=\calT_1(\theta_1)$ be the trajectory  such that the velocity
vector of the particle just after $s_1 = 0$ forms the angle $\theta_1$ with
the velocity of the particle represented by $\calT$. The energy of the
particle in $F$ is assumed to be the same in both cases, $\calT$ and
$\calT_1$. The evolution of $\calT_1$ is governed by Definition
\ref{j25.2}. It follows from Step 2 and an induction argument that if
$|\theta_1|$ is sufficiently small then the energy of the particle  will be
greater than $(1+2^{-1}) w_1$ at some  time $u_1$, and, moreover, $\calT_1$
will have $m$ reflections before $u_1$. 

We proceed by induction. 
Suppose that $\calT_k=\calT_k(\theta_1, \dots, \theta_k)$ has been defined,
with the definition depending on parameters $\theta_1, \theta_2, \dots,
\theta_k$, and if $|\theta_j|$ is sufficiently small for $j=1,\dots,k$,
then the energy of the particle following $\calT_k$ is greater than
$(1+2^{-k}) w_1$ at some  time $u_k$. 
Let $s_{k+1}$ be defined as in  Definition \ref{j25.2}, relative to $\calT_k$.
Suppose that $\theta_{k+1}$ is a small, positive or negative, angle.
Let $\calT_{k+1}$ be a trajectory 
equal to $\calT_k$ until time $s_{k+1}$ and such that
the velocity vector in $\calT_{k+1}$ just after $s_{k+1} $ forms the angle
$\theta_{k+1}$ with the velocity of the particle represented by
$\calT_k$. The energy of the particle in $F$ is assumed to be the same in
both cases, $\calT_k$ and $\calT_{k+1}$. The evolution of $\calT_{k+1}$ is
governed by Definition \ref{j25.2} after time $s_{k+1}$. It follows from
Step 2 and an induction argument that  
if $|\theta_j|$ is sufficiently small for $j=1,\dots,k+1$, then the energy
of the particle following $\calT_{k+1}$ is greater than $(1+2^{-k-1}) w_1$
at some  time $u_{k+1}$, and, moreover, $\calT_{k+1}$ has $m$ reflections
before $u_{k+1}$. 

Let $\eta>0$ be so small that  if $|\theta_j|< \eta$  for $j=1,\dots,m$,
then the energy of the particle following $\calT_m=\calT_m(\theta_1, \dots,
\theta_m)$  is greater than $(1+2^{-m}) w_1$ at some  time $u_{m}$ and
$\calT_{m}$ has $m$ reflections before $u_{m}$. 
We now consider the model with Lambertian reflections, with the particle
starting from $p_1$. It is easy to see that there is a strictly positive
probability that the random trajectory generated in this way will be equal
to $\calT_m=\calT_m(\theta_1, \dots, \theta_m)$ for some $\theta_1, \dots,
\theta_m$ satisfying $|\theta_j|< \eta$  for $j=1,\dots,m$. Hence, there is
strictly positive probability that the trajectory with Lambertian
reflections will attain energy greater than $w_1$. 
\end{proof}

\begin{corollary}\label{o16.3}
For every rotation speed $\omega$ and acceleration $g>0$
there exists $w_0<\infty$ such that for every $w_1<\infty$ and $\eps>0$ 
there exist a billiard table whose $C^\infty$-smooth boundary lies inside
the annulus $\{x\in \R^2: 1< |x| < 1+\eps\}$, and  a trajectory with
specular reflections for the particle starting with energy $w_0$, such that
the energy of the particle will exceed $w_1$ at some time. 
\end{corollary}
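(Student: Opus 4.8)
The plan is to realize the ``two fixed points'' trajectory of Corollary~\ref{j25.1} as a genuine specular billiard trajectory in a rotating drum whose boundary is a thin $C^\infty$ perturbation of a circle.

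Fix $\omega$ and $g$. The results of Section~\ref{sec:grav}, stated for the unit circle, hold after rescaling for a circle of any radius; I would work on $\{|x|=R_0\}$ with $R_0:=1+\eps/2$. Choose distinct points $p_1,p_2$ on it satisfying the hypotheses of Proposition~\ref{j13.2}(ii) (so $x_2>x_1$ and $x_1+x_2\ne 0$), with a fixed geometry independent of $w_1$, and let $w_0<\infty$, depending only on $\omega,g$, be large enough that the Lemma~\ref{ju21.1}-determined initial velocity carrying the particle from $p_1$ through $p_2$ has $x$-component above the threshold of Corollary~\ref{j25.1}. For any target $w_1$, Corollary~\ref{j25.1} then produces a trajectory $\calT$ of the model of Definition~\ref{j25.2} whose energy tends to infinity; let $t_*$ be the first time it exceeds $w_1$ and let $\sigma_1<\dots<\sigma_N$ be its reflection times before $t_*$, alternating between $p_1$ and $p_2$.

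Each reflection of $\calT$ is specular for a suitably tilted wall rotating with angular velocity $\omega$: at time $\sigma_i$ energy conservation in $F$ gives $\|v^F(\sigma_i-)\|=\|v^F(\sigma_i+)\|$, so there is a unique line $\ell_i$ through the drum-frame point $q_i:=\calR(-\omega\sigma_i)\big(p_{j(i)}\big)$ across which $v^F(\sigma_i+)$ is the mirror image of $v^F(\sigma_i-)$ -- its normal is parallel to $\calR(-\omega\sigma_i)\big(v(\sigma_i-)-v(\sigma_i+)\big)$. Since for large speeds $\calT$ is essentially a one-dimensional bounce along the chord $p_1p_2$, the velocity hits $\ell_i$ nearly perpendicularly (angular defect $O(1/\|v\|)$), and $\ell_i$ is tilted from the circle by a fixed angle $\beta<\pi/2$. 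I would take $\partial D$ to be a $C^\infty$ embedded loop $\Gamma$ in the annulus $\{1<|x|<1+\eps\}$ winding once around $0$, equal to $\{|x|=R_0\}$ except for a small notch near each $q_i$ making $\Gamma$ pass through $q_i$ with tangent $\ell_i$. Then the specular billiard trajectory in the rotating drum $D$ (the domain enclosed by $\Gamma$) started exactly like $\calT$ reflects exactly as $\calT$ at $q_1,\dots,q_N$, so, provided it never hits $\Gamma$ prematurely, it coincides with $\calT$ up to $t_*$ and attains energy $>w_1$.

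The main obstacle is to design the notches so that no premature reflection occurs. Each drum-frame arc of $\calT$ satisfies $|x^F(t)|=|\gamma(t)|\le R_0$ with equality only at its two endpoints, so the only danger is the part of $\Gamma$ dipping below radius $R_0$ near some $q_j$. Two facts control this. First, near its own endpoint the arc leaves (or arrives) nearly perpendicular to $\ell_i$, so its distance from $\ell_i$ grows linearly in arc length while $\Gamma$ leaves $\ell_i$ only quadratically; hence for $\|v\|$ large the arc separates from $\Gamma$ near $q_i$ and touches it only at $q_i$, no matter how sharp the notch -- which is what allows notches as shallow and narrow as needed. Second, away from fixed neighborhoods of its endpoints each arc has radius at most $R_0-c_0$ for some $c_0>0$ depending only on $p_1,p_2$, so if every notch is shallower than $c_0$ and narrower than the minimal separation of the $q_i$, no arc meets any notch away from its own endpoints. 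Turning this into uniform (in $i$ and $w_1$) estimates -- choosing $\beta$, the notch depth and curvature, and $w_0$, and using that the smallest speed along $\calT$, at the first bounce, is already large when $w_0$ is -- is the technical core. The remaining points -- genericity making the $q_i$ pairwise distinct (perturb $p_1,p_2$ within the admissible open set), $D\supset\calT$, $\Gamma$ winding once and being $C^\infty$ -- are routine.
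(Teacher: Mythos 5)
Your proposal is correct and follows essentially the same route as the paper: take the alternating two-point trajectory of Definition~\ref{j25.2}/Corollary~\ref{j25.1} and realize each of its finitely many energy-conserving reflections as a specular reflection by carving a small $C^\infty$ notch, with the appropriate tangent line, into the rotating circle near each drum-frame reflection point, checking that the notches are pairwise disjoint and never hit prematurely; the paper secures distinctness of the reflection points by first perturbing the trajectory via the continuity argument from Corollary~\ref{o16.1}, while you keep $\calT$ and invoke genericity in the drum frame, and you supply more of the no-premature-collision geometry that the paper leaves implicit. One small quantifier point to repair: since you place $p_1,p_2$ on the circle of radius $R_0=1+\eps/2$, your $w_0$ a priori depends on $\eps$, whereas the statement requires $w_0$ to be chosen before $\eps$; this is fixed by restricting to $\eps\le 1$ (the annuli are nested in $\eps$) and taking a threshold uniform over $R_0\in(1,3/2]$, or, as in the paper, by keeping the reflection points on the unit circle itself and perturbing the boundary outward into the annulus.
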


\begin{proof}
Recall trajectory $\calT$ and the associated definitions ($p_1, p_2$, etc.) from
Step 1 of the proof of Corollary \ref{o16.1}. Let $v(t)$ denote the
velocity of $\calT$ at time $t\geq 0$. 
Let $t_1, t_2,  \dots$ be the times of reflections of $\calT$.
It follows from the continuity property of reflected billiard trajectories
discussed in Step 2 of the proof of Corollary \ref{o16.1} that for every
$n$ and $\eps>0$ there exist 
 sequences of distinct points $q_1, q_2,  \dots, q_n$ and times $s_1, s_2, \dots, s_n$, such that
\begin{align*}
\|q_{2k+1} - p_1 \| &< \eps, \qquad 1\leq 2k+1  \leq n,\\
\|q_{2k} - p_2 \| &< \eps, \qquad 1\leq 2k \leq n,\\
|s_k - t_k| &< \eps,\qquad 1\leq k\leq n,
\end{align*}
and a trajectory $\wt\calT$ with velocity $\wt v(t)$ such that it reflects
at times $s_k$ at points $q_k$, and 
\begin{align}\label{o16.2}
\|\wt v(s_k+) - v(t_k+) \| < \eps, \qquad 1\leq k \leq n.
\end{align}
Moreover, energy in the rotating frame of reference $F$ is conserved for the reflections of $\wt \calT$.

One can realize such collisions physically by perturbing the unit circle
near points $q_k$ locally (i.e., so that the perturbations around distinct
points $q_k$ do not overlap), in a $C^\infty$ way, so that the specular
reflection sends the reflecting billiard trajectory in the moving domain
from $q_k$ to $q_{k+1}$ at time $s_k$, for all $k$.  
Recall from
Step 1 of the proof of Corollary \ref{o16.1} that the energy of $\calT$ is
greater than $2 w_1$  at   time $t_*$ and $t_{m+1} \geq t_*$.  
If $n=m+1$ and $\eps>0$ is sufficiently small then it follows from
\eqref{o16.2} that the energy of $\wt\calT$ is greater than $ w_1$  at
time $t_*$.  
\end{proof}

\section{Microcanonical ensemble formula }\label{ME}
In this section we discuss the measure induced on a hypersurface by a 
smooth background measure and a defining function for the  
hypersurface.  In the setting of classical mechanics this provides an
invariant measure on an energy level surface (microcanonical ensemble),
which we make explicit for 
motion under the influence of a potential.  Undoubtedly this is 
well-known, but we have not found a suitable reference.  We describe two
specializations: first to a system of particles in a 
gravitational field, and second to the system discussed in this 
paper:  free particles viewed by an observer rotating at constant angular  
velocity.  In this section we use the methods and language of geometric
mechanics.  Background references for the discussion below are \cite{Lee} for
smooth manifolds and \cite{AM,Arn,MR} for geometric mechanics.      

Let $\cM$ be a smooth manifold, let $d\mu$ be a smooth measure on $\cM$
(i.e. a non-negative smooth density)
and let $H\in C^\infty(\cM,\R)$.  Let $\cM^r=\{dH\neq 0\}$ denote the
set of regular points of $H$.  For $E\in \R$, let $\cS_E=\{H=E\}$ be the 
level set of $H$ and let $\cS_E^r=\cS_E\cap\cM^r$ be the subset of regular 
points, a smoothly embedded hypersurface in $\cM^r$.  For each    
$E$, the pair $(d\mu,H)$ induces a measure on $\cS_E^r$ as follows.
If $\psi\in C_c(\cM^r)$ is a continuous function with compact support in  
$\cM^r$, then $\int_{H< t}\psi\, d\mu$ is a $C^1$ function
of $t$ which is non-decreasing if $\psi$ is non-negative.  For each $E$, the
assignment     
\[
\psi\mapsto \frac{d}{dt}\left(\int_{H< t}\psi\, d\mu\right)\Big|_{t=E}   
\]
defines a positive linear functional on $C_c(\cM^r)$; hence a Radon measure
on $\cM^r$ supported on $\cS_E^r$ which we will denote $d\Si_E$.
Another notation sometimes used for $d\Si_E$ is $\delta(H-E)d\mu$, where
$\delta$ denotes the Dirac delta function.  We can view $d\Si_E$ 
also as a measure on $\cS_E^r$.  No confusion should arise by using the
same notation $d\Si_E$ for both interpretations.  

It is clear from the definition that the construction of the measure
$d\Si_E$ is invariant under diffeomorphisms.  Namely, if $\ph:\cM\to\cM$ is a 
diffeomorphism, then for each $E$ the measure associated to 
$(\ph^*(d\mu), \ph^*H)$ is $\ph^*(d\Si_E)$.  Consequently if $d\mu$ and $H$
are invariant under $\ph$, then so is $d\Si_E$.   

The fundamental theorem of calculus implies that 
\begin{equation}\label{intdmu}
\int_{\cM^r}\psi\, d\mu 
= \int_{-\infty}^\infty\left(\int_{\cS_E^r}\psi\,d\Si_E\right)\,dE,\qquad
\psi\in C_c(\cM^r).
\end{equation}
Suppose $G$ is a Riemannian metric on $\cM$ and take $d\mu$ to be the  
Riemannian volume measure.  The coarea formula (see \cite{Chav}, Corollary
I.3.1) implies that the same equation \eqref{intdmu} holds, but with
$d\Si_E$ replaced 
by $d\cH/|\na H|$, where $d\cH$ denotes the surface measure (Hausdorff
measure) induced on $\cS_E^r$ by $G$, and $\na$ and $|\cdot|$  
are the gradient and norm relative to $G$.  Consequently  
$d\Si_E = d\cH/|\na H|$.  In particular, $d\cH/|\na H|$ is
independent of the choice of metric $G$ with volume form $dv_G=d\mu$.   

There is an equivalent realization of $d\Si_E$ in terms of  
differential forms (see \cite{AM}, Theorem 3.4.12).  Set $\dim \cM=D$.  If  
$\mu$ is a $D$-form on $\cM$, then it is easily seen that there is a 
unique $(D-1)$-form $\Si_E$ on $\cS_E^r$ with the property that if   
$\sbar$ is any $(D-1)$-form  
in a neighborhood of $\cS_E^r$ satisfying $\sbar|_{T\cS_E^r}=\Si_E$,
then $\mu = dH\wedge \sbar$ on $\cS_E^r$.  
If $d\mu=|\mu|$ is the measure determined by $\mu$, then $d\Si_E=|\Si_E|$. 
(Here $d\mu$ and $d\Si_E$ denote the measures discussed above, not the 
exterior derivatives of the differential forms.)   

Next let $(\cM^{2N},\Om)$ be a symplectic manifold with corresponding
volume form $\Om^N$ and volume measure $d\mu:=|\Om^N|$.  If  
$H\in C^\infty(\cM,\R)$, the associated Hamiltonian vector field $X_H$ is 
defined by $X_H\into \Om =-dH$.    
Since $H$ is constant along the flow $\ph_t$ of $X_H$, 
$\ph_t$ determines a flow $\ph_t|_{\cS^r_E}$ on $\cS^r_E$.    
\begin{proposition}
The measure $d\Si_E$ on $\cS_E^r$ determined by $d\mu$ and $H$ is     
invariant under $\ph_t|_{\cS^r_E}$.   
\end{proposition}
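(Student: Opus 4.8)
The plan is to deduce the proposition from the diffeomorphism invariance of the construction of $d\Si_E$ recorded just above: it is enough to show that for each $t$ the Hamiltonian flow $\ph_t$ preserves both the smooth measure $d\mu$ and the function $H$, since then $(\ph_t^*(d\mu),\ph_t^*H)=(d\mu,H)$, so the measure canonically attached to this pair, namely $d\Si_E$, satisfies $\ph_t^*(d\Si_E)=d\Si_E$ --- which is exactly the asserted invariance under $\ph_t|_{\cS^r_E}$. Invariance of $H$ was already observed above (it follows from $dH(X_H)=-\Om(X_H,X_H)=0$, using $X_H\into\Om=-dH$ and antisymmetry of $\Om$), and in particular it ensures that $\ph_t$ maps $\cM^r$ into $\cM^r$ and $\cS^r_E$ into $\cS^r_E$, so that $\ph_t|_{\cS^r_E}$ makes sense. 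Thus the only substantive point is the invariance of $d\mu$, i.e. Liouville's theorem.

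To get that, I would invoke Cartan's magic formula together with $d\Om=0$:
\[
\calL_{X_H}\Om=d(X_H\into\Om)+X_H\into d\Om=-d(dH)+0=0.
\]
Because the Lie derivative is a derivation of the exterior algebra, $\calL_{X_H}\Om^N=0$ as well, hence $\tfrac{d}{dt}\ph_t^*\Om^N=\ph_t^*\calL_{X_H}\Om^N=0$, so $\ph_t^*\Om^N=\Om^N$ and therefore $\ph_t^*(d\mu)=|\ph_t^*\Om^N|=|\Om^N|=d\mu$.

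Combining the two invariances, each $\ph_t$ is a diffeomorphism fixing the pair $(d\mu,H)$, and the diffeomorphism invariance of the construction then gives $\ph_t^*(d\Si_E)=d\Si_E$, which is the proposition. (The same conclusion also falls out of the differential-form realization: $\ph_t^*\mu=\mu$ and $\ph_t^*dH=d(H\circ\ph_t)=dH$, so the uniqueness of $\Si_E$ in the splitting $\mu=dH\wedge\sbar$ forces $\ph_t^*\Si_E=\Si_E$, hence $\ph_t^*(d\Si_E)=d\Si_E$.) The one place that needs a word of care --- not a genuine obstacle --- is that $X_H$ may fail to be complete, so $\ph_t$ is only a local flow, whereas the invariance recorded above was phrased for global diffeomorphisms; this is harmless because $d\Si_E$ is determined by its pairing with functions in $C_c(\cM^r)$, each such function's support flows for all sufficiently small $|t|$, and the identity $\ph_t^*(d\Si_E)=d\Si_E$ then propagates over the whole maximal flow interval by the group law. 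That localization bookkeeping is the only mildly delicate step; the geometric content is just the one-line Liouville computation.
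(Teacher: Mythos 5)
Your proposal is correct and follows essentially the same route as the paper: the paper's proof likewise invokes Liouville's theorem for $\Om^N$ and the $\ph_t$-invariance of $H$ (hence of $dH$), and concludes via the diffeomorphism-invariance of the construction of $d\Si_E$. Your additional details (the Cartan-formula derivation of Liouville's theorem and the remark about local versus global flows) merely fill in steps the paper leaves implicit.
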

\begin{proof}
$\Om^N$ is $\ph_t$-invariant by Liouville's Theorem, and $dH$ is  
$\ph_t$-invariant since $H$ is.  It follows that $d\Si_E$ is invariant as 
well.  
\end{proof}

The invariant measure $d\Si_E$ can be written concretely in the   
setting of motion under the influence of a potential.  
Let $(M^N,g)$ be a Riemannian manifold and    
$V\in C^\infty(M)$.  The cotangent bundle $\cM:=T^*M$ has a canonical
symplectic structure given by $\Om = d\theta$, where $\theta=p_idq^i$ is
the tautological 1-form.  Here $q^i$ are local coordinates on $M$ and $p_i$
the corresponding dual coordinates on the fibers of $T^*M$.  In these
coordinates, $|\Om^N|$ has the form 
$|\Om^N|=c_Nd\lambda=c_Ndq^1\cdots dq^Ndp_1\cdots dp_N$, where $c_N>0$ is a  
constant depending only on $N$ and $d\lambda$ denotes $2N$-dimensional
Lebesgue measure.  

Consider a  Hamiltonian of the form   
\begin{equation}\label{hampot}
H(q,p)=\tfrac12 |p|^2_g +V(q)=\tfrac12 g^{ij}(q)p_ip_j +V(q).
\end{equation}
For $E>\inf V$, set $M_E=\{q:V(q)<E\}\subset M$ and 
\[
\cS_E^0=\{(q,p):H(q,p)=E\text{ and } p\neq 0\}\subset \cS_E^r,
\]
with projection $\pi:\cS_E^0\to M_E$.  For fixed $q\in M_E$, the fiber  
$\pi^{-1}(\{q\})$ is the sphere $\{p\in T^*_qM:|p|^2=2(E-V(q))\}$ of
radius $r(q)=\sqrt{2(E-V(q))}$.  The 
map $\Phi:  \cS_E^0\to S^*M_E$ given by $\Phi(q,p)=(q,p/|p|)$ is a  
diffeomorphism, where $S^*M_E=\{(q,p):q\in M_E, |p|=1\}$ is the unit
cosphere bundle over $M_E$.  We denote by $d\si_1(p) dv(q)$ the 
canonical measure on 
$S^*M_E$ determined by the volume measure $dv(q)$ of $g$ on $M_E$ and the 
usual surface measure $d\si_1(p)$ on each fiber $S^*_qM_E$ arising from its  
realization as the unit sphere in the Euclidean 
space $(T^*_qM, g)$.  (By abuse of notation, here we denote by $g$ also the 
inner product induced on $T_q^*M$.)        

\begin{proposition}\label{mainprop}
When restricted to $\cS_E^0$, the measure $d\Si_E$ 
determined by $|\Omega^N|$ and $H$ is given by  
\[
d\Si_E= c_N \big(2(E-V(q))\big)^{\frac{N}{2}-1}\Phi^*(d\si_1(p) dv(q)).      
\] 
In particular, this measure is the restriction to $\cS_E^0$ of a smooth
measure on $\cS_E^r$ which is invariant under the flow
$\varphi_t|_{\cS_E^r}$.   
\end{proposition}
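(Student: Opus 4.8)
The plan is to compute $d\Si_E$ on $\cS_E^0$ directly from its definition as $\frac{d}{dt}\big(\int_{H<t}(\cdot)\,d\mu\big)\big|_{t=E}$, using that the Liouville measure on $T^*M$ is a fibered product of the base volume and the fiberwise $g$-volume. First I would record that in canonical coordinates $|\Om^N|=c_N\,dq^1\!\cdots dq^N\,dp_1\!\cdots dp_N$, and that, since $dv(q)=\sqrt{\det g_{ij}(q)}\,dq$ while the volume measure of the Euclidean space $(T^*_qM,g)$ is $\sqrt{\det g^{ij}(q)}\,dp=(\det g_{ij}(q))^{-1/2}\,dp$, these combine to the identity $|\Om^N|=c_N\,dv_{g,q}(p)\,dv(q)$, where $dv_{g,q}$ denotes the volume measure of $(T^*_qM,g)$. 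Hence, for $\psi\in C_c(\cM^r)$ supported in $\{p\neq 0\}$ (which suffices, as $\cS_E^0=\cS_E^r\cap\{p\neq 0\}$ is relatively open in $\cS_E^r$),
\[
\int_{H<t}\psi\,d\mu \;=\; c_N\int_{M}\Big(\int_{\{p\in T^*_qM:\,\frac12|p|_g^2<t-V(q)\}}\psi(q,p)\,dv_{g,q}(p)\Big)\,dv(q),
\]
the inner integral extending over the $g$-ball of radius $\sqrt{2(t-V(q))}$ in $T^*_qM$ (empty when $t\le V(q)$), so that the contribution of $q$ outside $M_t$ vanishes.

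Next I would differentiate at $t=E$ and move the derivative under the $q$-integral; this is routine, since the section already records that $t\mapsto\int_{H<t}\psi\,d\mu$ is $C^1$, and for $\psi$ supported away from $\{p=0\}$ the region $\{V(q)\ge E\}$ contributes nothing for $t$ near $E$, so there is no subtlety. For fixed $q\in M_E$, writing the inner integral in $g$-polar coordinates on $T^*_qM$ as $\int_0^{R(q,t)}\rho^{N-1}\big(\int_{S^*_qM_E}\psi(q,\rho\hat p)\,d\si_1(\hat p)\big)\,d\rho$ with $R(q,t)=\sqrt{2(t-V(q))}$ and $\partial_tR=R^{-1}$, its derivative at $t=E$ equals $r(q)^{N-2}\int_{S^*_qM_E}\psi(q,r(q)\hat p)\,d\si_1(\hat p)$, where $r(q)=\sqrt{2(E-V(q))}$ and $r(q)^{N-2}=\big(2(E-V(q))\big)^{N/2-1}$. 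This yields
\[
\int_{\cS_E^0}\psi\,d\Si_E \;=\; c_N\int_{M_E}\big(2(E-V(q))\big)^{\frac N2-1}\Big(\int_{S^*_qM_E}\psi(q,r(q)\hat p)\,d\si_1(\hat p)\Big)\,dv(q).
\]

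Finally I would recognize the right-hand side as the pullback expression in the statement: since $\Phi^{-1}(q,\hat p)=(q,r(q)\hat p)$ for $(q,\hat p)\in S^*M_E$, the change of variables $\Phi$ identifies the right-hand side of the last display with $\int_{\cS_E^0}\psi\cdot c_N\big(2(E-V(q))\big)^{N/2-1}\,\Phi^*\!\big(d\si_1(p)\,dv(q)\big)$, and since $\psi$ is arbitrary this is exactly the asserted formula for $d\Si_E$ on $\cS_E^0$. The final assertion then requires no new work: by the coarea realization $d\Si_E=d\cH/|\na H|$, valid on all of the embedded hypersurface $\cS_E^r$, $d\Si_E$ is a smooth measure there, and it is $\varphi_t|_{\cS_E^r}$-invariant by the preceding proposition; the formula above merely describes its restriction to the open dense subset $\cS_E^0$.

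I do not expect a genuine obstacle. The one place to be careful is the bookkeeping in the fibered decomposition of $|\Om^N|$ --- tracking the constant $c_N$ and the convention that each fiber $S^*_qM_E$ carries the surface measure of the \emph{$g$-unit} sphere of $(T^*_qM,g)$ --- together with the radial integration, which is what produces the weight $r(q)^{N-2}=\big(2(E-V(q))\big)^{N/2-1}$. (A more geometric alternative would be to start from $d\Si_E=d\cH_G/|\na_G H|$ for the Sasaki metric $G$ on $T^*M$, whose volume form is a constant multiple of $|\Om^N|$, using that $|\na_GH|^2=|p|_g^2+|\na_gV|_g^2$; but the fibered computation above is shorter and self-contained.)
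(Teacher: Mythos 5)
Your proof is correct, but it follows a genuinely different route from the paper's. The paper starts from the coarea identity $d\Si_E=c_N\,d\cH/|\na H|$ for the metric $G=g_{ij}\,dq^idq^j+g^{ij}\,dp_idp_j$ (whose volume measure is $d\lambda$), writes the surface measure of the level set as $d\cH=d\si_{r(q)}(p)\,dv(q)/\cos\theta$, and computes the tilt $\cos\theta=|p|/|\na H|$ between $\na H$ and the fiber-radial direction; the factors of $|\na H|$ cancel and the weight $|p|^{N-2}$ emerges from rescaling the sphere of radius $r(q)$ to the unit sphere. You instead go back to the defining formula $\psi\mapsto \frac{d}{dt}\int_{H<t}\psi\,d\mu\big|_{t=E}$, decompose $|\Om^N|$ as the fibered product $c_N\,dv_{g,q}(p)\,dv(q)$, and differentiate the fiberwise polar-coordinate integral in $t$, the chain rule $\partial_t R=R^{-1}$ producing the same weight $r(q)^{N-2}$. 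Your version is more self-contained (it does not invoke the coarea formula, and it sidesteps the tilt computation entirely because for fixed $q$ only the radial direction in the fiber moves as $t$ varies), at the cost of having to justify differentiation under the $q$-integral, which you do correctly by restricting to $\psi$ supported away from $\{p=0\}$ --- and that restriction is legitimate since $\cS_E^0$ is exactly the part of $\cS_E^r$ lying in the open set $\{p\neq0\}$. One small caveat: in your closing parenthetical, the formula $|\na_GH|^2=|p|_g^2+|\na_gV|_g^2$ omits the contribution of the $q$-derivatives of $g^{ij}(q)$ to $\partial_{q^i}H$, so it is only valid in coordinates where the metric coefficients are constant (as happens in the paper's applications, where $g$ is the flat mass metric on $\R^{nd}$); since this aside plays no role in your actual argument, it does not affect the proof.
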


\begin{proof}
We work locally over the domain of a coordinate chart in $M$.  Let $q^i$,
$1\leq i\leq N$ be local coordinates in $M_E$ and $p_i$ the induced linear  
coordinates on the fibers of $T^*M_E$.  Then 
$d\mu =c_N d\lambda$ is a constant multiple of Lebesgue measure.  
The metric 
$G=g_{ij}(q)dq^idq^j+g^{ij}(q)dp_idp_j$ has 
volume measure $d\lambda$.  According to the discussion above,  
$d\Si_E= c_N d\cH/|\na H|$, where $d\cH$ is surface measure on $\cS_E^r$ 
with respect to the metric $G$.  On $\cS_E^0$ this can be expressed as    
\[
d\cH=\frac{d\si_{r(q)}(p)dv(q)}{\cos\theta}.
\]
Here $d\si_{r(q)}(p)$ denotes surface measure on the sphere  
$\{p\in T^*_qM:|p|=r(q)\}$ with fixed $q$ and $\theta$ is the  
angle with respect to $G$ between the normals $\na H$ and $\na_p |p|^2$,  
where $\na_p |p|^2$ denotes the gradient in the $p$ variables with $q$ 
held fixed.  Now  
\[
\cos\theta 
= \frac{\langle \na H, \na_p |p|^2\rangle}{|\na H|\cdot |\na_p|p|^2|} 
= \frac{\langle \na_p |p|^2, \na_p |p|^2\rangle}{2|\na H|\cdot
  |\na_p|p|^2|}   = \frac{|\na_p|p|^2|}{2|\na H|} = \frac{|p|}{|\na H|}.   
\]
So 
\[
\begin{split}
c_N^{-1}d\Si_E&= d\cH/{|\na H|}=|p|^{-1}d\si_{r(q)}(p)dv(q)
=|p|^{N-2}\frac{d\si_{r(q)}(p)}{|p|^{N-1}}dv(q)\\
&=\big(2(E-V(q))\big)^{\frac{N}{2}-1}\Phi^*(d\si_1(p) dv(q)).       
\end{split}
\]
\end{proof}

We describe two examples in the next two sections. The first of these is
simpler and needed for a different project, presented in \cite{KBJM}. The second example is used in
this article.   

\subsection{Gravitational field}
First consider $n$ noninteracting point particles of masses  
$m_k>0$, $1\leq k\leq n$, moving in $\R^d$ under the influence of a
gravitational field imparting a constant acceleration {\tt g}.  Write
the position of the $k$-th particle as $x_k=(z_k,w_k)$ with $z_k\in \R$
$w_k\in \R^{d-1}$, where gravity acts in the downward $z_k$-direction.
Denote the velocity of the $k$-th particle by $v_k\in \R^d$ and its
momentum by $p_k=m_kv_k$.   
Set $\mathbf{x}= \big(x_1,x_2,\ldots,x_n\big)$ 
and $\mathbf{v}=(v_1,v_2,\ldots,v_n)$.  In the context of the above
discussion, take $M=\{\mathbf{x}\in\R^{nd}\}$ so that $N=nd$,
$q=\mathbf{x}$ and  $p=(p_1,\ldots,p_n)$.   
The metric $g$ is given by   
\begin{equation}\label{metric}
g(\mathbf{v},\mathbf{v})=\sum_{k=1}^nm_k\|v_k\|^2=\sum_{k=1}^n\frac{1}{m_k}\|p_k\|^2=:|p|^2_g, 
\end{equation}
where $\|\cdot\|$ denotes the Euclidean norm, and the potential $V$ is
given by 
\[
V=\text{{\tt g}}\sum_{k=1}^nm_k z_k.  
\]
Since $dV$ is nowhere vanishing, 
\[
\cS_E^r=\cS_E=\{(q,p):\tfrac12 |p|_g^2 +V(q)=E\}\quad\text{ and } \quad 
\cS_E^0=\{(q,p)\in \cS_E:p\neq 0\}.  
\] 
The unit cosphere bundle is $S^*M_E=\{(q,p): q\in M_E,\,\, |p|_g=1\}$.   
Let $S^{N-1}=\{\pb\in \R^N:\|\pb\|=1$\} denote the Euclidean usual unit
sphere in $\R^N$ and let $d\sigma_1(\pb)$ denote its usual measure.      
Define $\Psi:S^*M_E\to M_E\times S^{N-1}$ by
\begin{equation}\label{psi}
S^*M_E\ni (q,p)\stackrel{\Psi}{\mapsto} (q,\pb)\in M_E\times S^{N-1} 
\end{equation}
where 
\begin{equation}\label{pbar}
\pb=\left(\frac{p_1}{\sqrt{m_1}},\ldots,\frac{p_N}{\sqrt{m_N}}\right).
\end{equation}
If we set $\Phibar=\Psi\circ\Phi$, then Proposition~\ref{mainprop} implies  
\begin{proposition}\label{exampleprop}
The measure 
\begin{equation}\label{Phimeasure} 
\big(2(E-V(q))\big)^{\frac{N}{2}-1}\Phibar^*(d\sigma_1(\pb) dv(q))  
\end{equation}
is the 
restriction to $\cS_E^0$ of a smooth measure on $\cS_E$ which is invariant
under the flow $\varphi_t|_{\cS_E}$.       
\end{proposition}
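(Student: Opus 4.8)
The plan is to deduce the proposition directly from Proposition~\ref{mainprop}, the only extra ingredient being the behavior under pullback of the mass-rescaling map $\Psi$ of \eqref{psi}--\eqref{pbar}. Recall that $\Phibar=\Psi\circ\Phi$, where $\Phi(q,p)=(q,p/|p|_g)$ is the diffeomorphism of Proposition~\ref{mainprop} and $\Psi(q,p)=(q,\pb)$ with $\pb=(p_1/\sqrt{m_1},\ldots,p_n/\sqrt{m_n})$. The key observation is that for each fixed $q\in M_E$ the linear map $p\mapsto\pb$ is an isometry from $(T^*_qM,g)$ onto Euclidean $\R^N$: indeed, by \eqref{metric}, $|p|_g^2=\sum_k\|p_k\|^2/m_k=\sum_k\|\pb_k\|^2=\|\pb\|^2$. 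Hence $\Psi$ carries the unit $g$-cosphere $S^*_qM_E$ isometrically onto the Euclidean sphere $S^{N-1}$, so on each fiber $\Psi^*(d\sigma_1(\pb))=d\sigma_1(p)$, and since $\Psi$ fixes the base point $q$ this upgrades to $\Psi^*(d\sigma_1(\pb)\,dv(q))=d\sigma_1(p)\,dv(q)$ on $S^*M_E$. Pulling back along $\Phi$ then gives $\Phibar^*(d\sigma_1(\pb)\,dv(q))=\Phi^*(d\sigma_1(p)\,dv(q))$.

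With this identity in hand the conclusion is immediate. Here the Hamiltonian has the form \eqref{hampot} with $M=\R^{nd}$, metric $g$ as in \eqref{metric}, and $V=\text{{\tt g}}\sum_k m_k z_k$, and since $dV$ is nowhere vanishing we have $\cS_E^r=\cS_E$. Proposition~\ref{mainprop} thus asserts that, on $\cS_E^0$,
\begin{align*}
c_N\big(2(E-V(q))\big)^{\frac{N}{2}-1}\Phibar^*(d\sigma_1(\pb)\,dv(q))
&=c_N\big(2(E-V(q))\big)^{\frac{N}{2}-1}\Phi^*(d\sigma_1(p)\,dv(q))\\
&=d\Si_E,
\end{align*}
and that $d\Si_E$ is the restriction to $\cS_E^0$ of a smooth measure on $\cS_E$ invariant under $\varphi_t|_{\cS_E}$. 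Dividing by the positive constant $c_N$, which preserves both smoothness and flow-invariance, shows that the measure \eqref{Phimeasure} has the asserted properties.

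There is essentially no obstruction here: the statement is a routine specialization of Proposition~\ref{mainprop}. The one point meriting attention is the verification that $\Psi$ is a fiberwise isometry for the cometric on $T^*M$ determined by \eqref{metric}---equivalently, that the substitution $p_k\mapsto p_k/\sqrt{m_k}$ turns $\sum_k m_k^{-1}\|p_k\|^2$ into the Euclidean $\sum_k\|\pb_k\|^2$---so that $\Psi$ intertwines the round surface measure $d\sigma_1(p)$ on $S^*_qM_E$ with the standard $d\sigma_1(\pb)$ on $S^{N-1}$; granting this, the pullback identity above, and hence the whole statement, follows at once.
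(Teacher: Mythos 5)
Your proposal is correct and takes essentially the same route as the paper, which simply states that Proposition~\ref{mainprop} implies the result via the composition $\Phibar=\Psi\circ\Phi$. The one detail you spell out --- that the mass rescaling $p_k\mapsto p_k/\sqrt{m_k}$ is a fiberwise isometry from $(T^*_qM,g)$ to Euclidean $\R^N$, so $\Psi$ intertwines the two spherical measures --- is exactly the point the paper leaves implicit, and your verification of it is right.
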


\subsection{Rotating observer}\label{fe16.1}
This example is concerned with $n$ noninteracting free point particles 
of masses $m_k>0$, $1\leq k\leq n$, moving in $\R^d$, $d\geq 2$, but viewed
by an observer rotating with constant angular velocity $0<\om\in \R$.    
A discussion of motion observed by a rotating observer in the  
more general setting of time-dependent angular velocity vector and external 
force field can be found in \S 8.6 of \cite{MR}.   
First consider the case $n=1$ of a single particle.  Write its position as  
$x=(y,z,w)$ with $y$, $z\in \R$, $w\in \R^{d-2}$, and write
$x^H=(y,z,0)$ for its horizontal projection.  Set   
\[
L=\begin{pmatrix}
0&-1&0\\
1&0&0\\
0&0&0
\end{pmatrix}, \qquad 
A_t=\exp(t\om L)=
\begin{pmatrix}
\cos(\om t)&-\sin(\om t)&0\\
\sin (\om t)&\cos(\om t)&0\\
0&0&I
\end{pmatrix},
\]
where the $1\times 1\times (d-2)$ block decomposition corresponds to the 
decomposition of $x$.  The position as viewed by the rotating 
observer is $x^F=(y^F,z^F,w^F)$, where $x=A_tx^F$.  Therefore    
\[
\begin{split}
\dot{x}&=\dot{A_t}x^F + A_t\dot{x}^F\\ 
&=\dot{A_t}A_t^{-1}x + A_t\dot{x}^F\\ 
&=\om Lx + A_t\dot{x}^F
\end{split}
\]
and
\begin{equation}\label{transform}
\begin{split}
\ddot{x}&=\om L\dot{x} +\dot{A_t}\dot{x}^F 
+A_t\ddot{x}^F\\
&=\om L\big(\om Lx + A_t\dot{x}^F\big) +\om LA_t\dot{x}^F 
+A_t\ddot{x}^F\\
&=\om^2 L^2x + 2\om LA_t\dot{x}^F+A_t\ddot{x}^F.   
\end{split}
\end{equation}
Now $\ddot{x}=0$ since the particle is free.  So multiplying
\eqref{transform} by $A_t^{-1}$ shows that the equation of motion as
viewed by the rotating observer is
\begin{equation}\label{eom}
\ddot{x}^F=\om^2x^{F,H} -2\om L\dot{x}^F.   
\end{equation}
The first term on the right-hand side is the observed centrifugal force and
the second term the Coriolis force.   

Equation \eqref{eom} is clearly equivalent to the first order system
\begin{equation}\label{firstorder}
\dot{q}=m^{-1}p,\qquad \dot{p}=m\om^2q^H-2\om Lp
\end{equation}
via $q=x^F$.  Now \eqref{firstorder} is Hamiltonian, but with respect 
to the symplectic form 
\[
\Ot = \Om +2m\om \, dy^F\wedge dz^F
\]
rather than the canonical symplectic form $\Om$ on $T^*\R^d$.  
In fact, it is easily verified that the vector field
\begin{equation}\label{X}
X=m^{-1}p\cdot \pa_q+ (m\om^2q^H-2\om Lp)\cdot \pa_p
\end{equation}
satisfies $X\into \Ot = -dH$, where the Hamiltonian function is      
\begin{equation*}
H(q,p)=\frac{1}{2m} \|p\|^2 -\frac{m}{2}\om^2 \|q^H\|^2 
=\frac{1}{2m} \|p\|^2 -\frac{m}{2}\om^2 \left[(y^F)^2+(z^F)^2\right].  
\end{equation*}
Thus $X=X_H$ is the Hamiltonian vector field associated to $H$ by the 
symplectic form $\Ot$, and the corresponding Hamiltonian system is
\eqref{firstorder}.  In particular, $H$ is constant on any trajectory.  

Observe that $\Ot^d=\Om^d$, so that $\Ot$ and $\Om$ induce the same volume   
form.  Since Proposition~\ref{mainprop} only involves the
measure induced by $\Om$, it applies to give a description   
of the measure induced on a level set of $H$ by $|\Ot^d|$ and $H$, which 
is invariant under the flow $\varphi_t$ of $X_H$.  

We remark that by making a change of the momentum variable, \eqref{eom} can
also be realized as Hamilton's equations with respect to the canonical
symplectic form.  But in this realization the ``potential energy'' term in
the Hamiltonian function depends on both position and momentum.  See
\cite{MR}.

The same reasoning holds in the case of $n$ noninteracting free particles.
Let their positions be $x_k$ and their observed positions
be $x^F_k=(y^F_k,z^F_k,w^F_k)$, $1\leq k\leq n$.  Let
$v^F_k=\dot{x}^F_k$ denote the observed velocity of the $k$-th particle, and 
$p_k=m_kv^F_k$ its observed momentum.  Set
$\mathbf{x}^F=(x^F_1,\ldots,x^F_n)$,  
$\mathbf{v}^F=(v^F_1,v^F_2,\ldots,v^F_n)$.  Again take   
$M=\{\mathbf{x}^F\in\R^N\}$, $N=nd$, with coordinate $q=\mathbf{x}^F$, and
set $p=(p_1,\ldots,p_n)$.  The metric is again given by \eqref{metric}.      
Since the particles do 
not interact, \eqref{eom} holds with $x^F$ replaced by $x^F_k$ for each  
$k$.  The corresponding vector field $X$ is again given by \eqref{X} but
now with $q$, $p\in \R^N$.
The equations of motion are equivalent to Hamilton's equations for  
symplectic form on $T^*\R^N$ given by
\[
\Ot=\Om + 2\om\sum_{k=1}^n m_kdy^F_k\wedge dz^F_k,  
\]
where $\Om$ is the canonical symplectic form, and Hamiltonian 
\begin{equation}\label{HV}
H(q,p)=\frac12 \sum_{k=1}^n\frac{1}{m_k} \|p_k\|^2 +V(q),\qquad
V(q):=-\frac12\om^2\sum_{k=1}^nm_k \left[(y^F_k)^2+(z^F_k)^2\right].  
\end{equation}
The level surfaces of $H$ are given as usual by 
$\cS_E=\{(q,p):\tfrac12 |p|_g^2 +V(q)=E\}$.  
Note that $\cS_E^r=\cS_E$ for $E\neq 0$, while for $E=0$ one has 
$\cS_E\setminus \cS_E^r=\{(q,p):p=0, q=(x_1^F,\ldots,x_n^F)\text{ where }  
x_1^{F,H}=\ldots= x_n^{F,H}=0\}$.  
Also note that $\cS_E^0=\cS_E$ if $E>0$.    

As in the previous example, define $\Psi$ by \eqref{psi}, \eqref{pbar} and 
$\Phibar=\Psi\circ\Phi$.  Just as in Proposition~\ref{exampleprop}, the
measure defined by \eqref{Phimeasure}, with $V(q)$ now given by 
\eqref{HV}, is the restriction to $\cS_E^0$ of a smooth  
measure on $\cS_E^r$ ($= \cS_E$ if $E\neq 0$) which is invariant under the 
flow $\varphi_t|_{\cS_E^r}$.  In case $E=0$, this measure extends to an 
invariant measure on $\cS_E$ by requiring 
$\cS_E\setminus \cS_E^r$ to have measure $0$.  Summarizing, we have  
\begin{proposition}\label{rotatingprop}
The Hamiltonian \eqref{HV} is conserved for the system of $n$
noninteracting free particles in $\R^d$ viewed by an observer rotating at
constant angular velocity $\omega$.  The measure 
\[
\big(2(E-V(q))\big)^{\frac{N}{2}-1}\Phibar^*(d\sigma_1(\pb) dv(q))  
\]
is the
restriction to $\cS_E^0$ of a measure on $\cS_E$ which is 
invariant under the flow $\varphi_t|_{\cS_E}$, where $\pb$ is given by
\eqref{pbar}, $V(q)$ is given by \eqref{HV}, and $\varphi_t$ is the flow of
$X$ on $T^*\R^{nd}$.      
\end{proposition}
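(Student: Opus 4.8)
The plan is to read the proposition off from the rotating-observer computation carried out above together with Proposition~\ref{mainprop}. It has two halves. For conservation of $H$: by \eqref{eom} and \eqref{firstorder}, with $q=\mathbf{x}^F$ and $p_k=m_kv^F_k$, the equations of motion seen by the rotating observer are Hamilton's equations for the vector field $X$ of \eqref{X} with respect to the symplectic form $\Ot=\Om+2\om\sum_{k=1}^n m_k\,dy^F_k\wedge dz^F_k$ and the Hamiltonian $H$ of \eqref{HV}; the identity $X\into\Ot=-dH$ is a direct computation from the explicit formulas, which I would verify coordinatewise. Since any Hamiltonian is constant along the flow of its own Hamiltonian vector field, this already gives that $H$ is conserved.

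For the invariant measure I would proceed in four short steps. First, $\Ot^N=\Om^N$: writing $\Ot=\Om+\beta$ with $\beta=2\om\sum_k m_k\,dy^F_k\wedge dz^F_k$, in the binomial expansion of $\Ot^N$ every term $\Om^{N-j}\wedge\beta^{\,j}$ with $j\ge1$ vanishes, since $\beta$ involves only position differentials and hence $\Om^{N-1}\wedge\beta$ involves $N+1$ position differentials (counted with multiplicity), more than the $N$ available, and a fortiori so do the terms with $j\ge2$. Thus $|\Ot^N|=|\Om^N|=c_N\,d\lambda$ is a constant multiple of Lebesgue measure. Second, the general invariance proposition preceding Proposition~\ref{mainprop}, applied to the symplectic manifold $(T^*\R^{nd},\Ot)$, shows that the measure $d\Si_E$ induced on $\cS_E^r$ by $|\Ot^N|$ and $H$ is invariant under the flow $\varphi_t|_{\cS_E^r}$ of $X=X_H$. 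Third, because $H$ has the kinetic-plus-potential form \eqref{hampot} with potential $V$ given by \eqref{HV}, and because the explicit computation in Proposition~\ref{mainprop} uses only the measure $d\mu=|\Om^N|$ and $H$ (not the particular symplectic form that produces it), that proposition applies verbatim to give, on $\cS_E^0$,
\[
d\Si_E=c_N\big(2(E-V(q))\big)^{N/2-1}\,\Phi^*\big(d\sigma_1(p)\,dv(q)\big).
\]
Fourth, composing $\Phi$ with $\Psi$ of \eqref{psi}--\eqref{pbar} exactly as in Proposition~\ref{exampleprop} rewrites the right-hand side as a constant multiple of the measure \eqref{Phimeasure} with $V$ as in \eqref{HV}. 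This settles the assertion for $E\ne0$, where $\cS_E^r=\cS_E$.

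It remains to treat the value $E=0$, the only one with $\cS_E^r\ne\cS_E$. Here $\cS_0\setminus\cS_0^r=\{(q,p):p=0,\ x_1^{F,H}=\cdots=x_n^{F,H}=0\}$, and at such a point $\dot q=m^{-1}p=0$ and $\dot p=m\om^2q^H-2\om Lp=0$, so this set consists of equilibria of $X$ and is therefore invariant under $\varphi_t$. Declaring it to have measure zero extends $d\Si_0$ to a $\varphi_t|_{\cS_0}$-invariant Borel measure on all of $\cS_0$, completing the proof.

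I do not expect a genuine obstacle here: the substantive content is carried entirely by Proposition~\ref{mainprop} and the general invariance proposition that precedes it, both already in hand, and the only computations required — the relation $X\into\Ot=-dH$ and the vanishing of the correction terms in $\Ot^N=\Om^N$ — are routine. The one point needing a little care is the $E=0$ extension, where one must first check that the exceptional set is flow-invariant before declaring it null.
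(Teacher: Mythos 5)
Your proposal is correct and follows essentially the same route as the paper: the paper's ``proof'' is precisely the preceding discussion in Section~\ref{fe16.1}, which verifies $X\into\Ot=-dH$, observes $\Ot^N=\Om^N$ so that Proposition~\ref{mainprop} (which depends only on the measure $|\Om^N|$ and $H$) combines with the general flow-invariance proposition applied to $(T^*\R^{nd},\Ot)$, and then handles $E=0$ by assigning measure zero to $\cS_0\setminus\cS_0^r$. Your added details --- the counting argument for why the cross terms in $\Ot^N$ vanish and the check that the exceptional set at $E=0$ consists of equilibria and is hence flow-invariant --- are correct refinements of points the paper states without proof.
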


\section{Acknowledgments}

We are grateful to Martin Hairer, Robert Ho\l yst, Domokos Sz\'asz and
Balint Toth for very helpful advice.

\bibliographystyle{plain}

\begin{thebibliography}{10}

\bibitem{AM}
Ralph Abraham and Jerrold~E. Marsden.
\newblock {\em Foundations of mechanics}.
\newblock Benjamin/Cummings Publishing Co., Inc., Advanced Book Program,
  Reading, Mass., 1978.
\newblock Second edition.

\bibitem{Alex}
Roger Alexander.
\newblock Time evolution for infinitely many hard spheres.
\newblock {\em Comm. Math. Phys.}, 49(3):217--232, 1976.

\bibitem{ABS}
Omer Angel, Krzysztof Burdzy, and Scott Sheffield.
\newblock Deterministic approximations of random reflectors.
\newblock {\em Trans. Amer. Math. Soc.}, 365(12):6367--6383, 2013.

\bibitem{AZ}
Maxim Arnold and Vadim Zharnitsky.
\newblock Pinball dynamics: unlimited energy growth in switching {H}amiltonian
  systems.
\newblock {\em Comm. Math. Phys.}, 338(2):501--521, 2015.

\bibitem{Arn}
V.~I. Arnol'd.
\newblock {\em Mathematical methods of classical mechanics}, volume~60 of {\em
  Graduate Texts in Mathematics}.
\newblock Springer-Verlag, New York, second edition, 1989.
\newblock Translated from the Russian by K. Vogtmann and A. Weinstein.

\bibitem{BurdzyDuarte20}
Krzysztof Burdzy and Mauricio Duarte.
\newblock On the number of hard ball collisions.
\newblock {\em J. Lond. Math. Soc. (2)}, 101(1):373--392, 2020.

\bibitem{KBJM}
Krzysztof Burdzy and Jacek Ma{\l}ecki.
\newblock Archimedes' principle for ideal gas.
\newblock 2021.
\newblock Preprint, ArXiv:2102.01718.

\bibitem{Chav}
Isaac Chavel.
\newblock {\em Isoperimetric inequalities}, volume 145 of {\em Cambridge Tracts
  in Mathematics}.
\newblock Cambridge University Press, Cambridge, 2001.
\newblock Differential geometric and analytic perspectives.

\bibitem{CM}
Nikolai Chernov and Roberto Markarian.
\newblock {\em Chaotic billiards}, volume 127 of {\em Mathematical Surveys and
  Monographs}.
\newblock American Mathematical Society, Providence, RI, 2006.

\bibitem{Feres16}
Timothy Chumley, Renato Feres, and Hong-Kun Zhang.
\newblock Diffusivity in multiple scattering systems.
\newblock {\em Trans. Amer. Math. Soc.}, 368(1):109--148, 2016.

\bibitem{Popov10}
Francis Comets, Serguei Popov, Gunter~M. Sch\"{u}tz, and Marina Vachkovskaia.
\newblock Knudsen gas in a finite random tube: transport diffusion and first
  passage properties.
\newblock {\em J. Stat. Phys.}, 140(5):948--984, 2010.

\bibitem{Feres12}
Scott Cook and Renato Feres.
\newblock Random billiards with wall temperature and associated {M}arkov
  chains.
\newblock {\em Nonlinearity}, 25(9):2503--2541, 2012.

\bibitem{DACOSTA2015}
Diogo~Ricardo da~Costa, Carl~P. Dettmann, and Edson~D. Leonel.
\newblock Circular, elliptic and oval billiards in a gravitational field.
\newblock {\em Communications in Nonlinear Science and Numerical Simulation},
  22(1):731 -- 746, 2015.

\bibitem{Dol}
Dmitry Dolgopyat.
\newblock Bouncing balls in non-linear potentials.
\newblock {\em Discrete Contin. Dyn. Syst.}, 22(1-2):165--182, 2008.

\bibitem{Dol2008}
Dmitry Dolgopyat.
\newblock Fermi acceleration.
\newblock In {\em Geometric and probabilistic structures in dynamics}, volume
  469 of {\em Contemp. Math.}, pages 149--166. Amer. Math. Soc., Providence,
  RI, 2008.

\bibitem{Dullin}
Holger~R. Dullin.
\newblock Linear stability in billiards with potential.
\newblock {\em Nonlinearity}, 11(1):151--173, 1998.

\bibitem{Feres07}
Renato Feres.
\newblock Random walks derived from billiards.
\newblock In {\em Dynamics, ergodic theory, and geometry}, volume~54 of {\em
  Math. Sci. Res. Inst. Publ.}, pages 179--222. Cambridge Univ. Press,
  Cambridge, 2007.

\bibitem{Feres13}
Renato Feres, Jasmine Ng, and Hong-Kun Zhang.
\newblock Multiple scattering in random mechanical systems and diffusion
  approximation.
\newblock {\em Comm. Math. Phys.}, 323(2):713--745, 2013.

\bibitem{Feres10}
Renato Feres and Hong-Kun Zhang.
\newblock The spectrum of the billiard {L}aplacian of a family of random
  billiards.
\newblock {\em J. Stat. Phys.}, 141(6):1039--1054, 2010.

\bibitem{Feres12a}
Renato Feres and Hong-Kun Zhang.
\newblock Spectral gap for a class of random billiards.
\newblock {\em Comm. Math. Phys.}, 313(2):479--515, 2012.

\bibitem{Fermi49}
Enrico Fermi.
\newblock On the origin of the cosmic radiation.
\newblock {\em Phys. Rev.}, 75:1169--1174, Apr 1949.

\bibitem{GRKT}
V.~Gelfreich, V.~Rom-Kedar, and D.~Turaev.
\newblock Fermi acceleration and adiabatic invariants for non-autonomous
  billiards.
\newblock {\em Chaos}, 22(3):033116, 21, 2012.

\bibitem{K1934}
Martin Knudsen.
\newblock {\em The Kinetic Theory of Gases: Some Modern Aspects}.
\newblock Methuen \& Co., London, 1934.
\newblock (Methuen's Monographs on Physical Subjects).

\bibitem{L1760}
J.H. Lambert.
\newblock {\em Photometria sive de mensure de gratibus luminis, colorum
  umbrae}.
\newblock Eberhard Klett, 1760.

\bibitem{Lee}
John~M. Lee.
\newblock {\em Introduction to smooth manifolds}, volume 218 of {\em Graduate
  Texts in Mathematics}.
\newblock Springer, New York, second edition, 2013.

\bibitem{LLC}
A.~J. Lichtenberg, M.~A. Lieberman, and R.~H. Cohen.
\newblock Fermi acceleration revisited.
\newblock {\em Phys. D}, 1(3):291--305, 1980.

\bibitem{MR}
Jerrold~E. Marsden and Tudor~S. Ratiu.
\newblock {\em Introduction to mechanics and symmetry}, volume~17 of {\em Texts
  in Applied Mathematics}.
\newblock Springer-Verlag, New York, 1994.

\bibitem{Meyer95}
N~Meyer, L~Benet, C~Lipp, D~Trautmann, C~Jung, and T~H Seligman.
\newblock Chaotic scattering off a rotating target.
\newblock {\em Journal of Physics A: Mathematical and General},
  28(9):2529--2544, may 1995.

\bibitem{PlakDdim}
A.~Yu. Plakhov.
\newblock Scattering in billiards and problems of {N}ewtonian aerodynamics.
\newblock {\em Uspekhi Mat. Nauk}, 64(5(389)):97--166, 2009.
\newblock (translation in Russian Math. Surveys 64 (2009), no. 5, 873–938).

\bibitem{Plak2dim}
Alexander Plakhov.
\newblock Billiard scattering on rough sets: two-dimensional case.
\newblock {\em SIAM J. Math. Anal.}, 40(6):2155--2178, 2009.

\bibitem{Plakbook}
Alexander Plakhov.
\newblock {\em Exterior billiards}.
\newblock Springer, New York, 2012.
\newblock Systems with impacts outside bounded domains.

\bibitem{Pust95}
L.~D. Pustyl'nikov.
\newblock Poincar\'{e} models, rigorous justification of the second law of
  thermodynamics from mechanics, and the {F}ermi acceleration mechanism.
\newblock {\em Uspekhi Mat. Nauk}, 50(1(301)):143--186, 1995.
\newblock Translation in \emph{Russian Math. Surveys} 50 (1995), no. 1,
  145–189.

\bibitem{Pust05}
L.~D. Pustyl'nikov.
\newblock A generalized {N}ewtonian periodic billiard in a ball.
\newblock {\em Uspekhi Mat. Nauk}, 60(2(362)):171--172, 2005.
\newblock Translation in \emph{Russian Math. Surveys} 60 (2005), no. 2,
  365–366.

\bibitem{Richter_1990}
P.H. Richter, H.-J. Scholz, and A.~Wittek.
\newblock A breathing chaos.
\newblock {\em Nonlinearity}, 3(1):45--67, Feb 1990.

\bibitem{Ruelle}
David Ruelle.
\newblock {\em Statistical mechanics}.
\newblock World Scientific Publishing Co., Inc., River Edge, NJ; Imperial
  College Press, London, 1999.
\newblock Rigorous results, Reprint of the 1989 edition.

\bibitem{Ulam61}
S.~M. Ulam.
\newblock On some statistical properties of dynamical systems.
\newblock In {\em Proc. 4th {B}erkeley {S}ympos. {M}ath. {S}tatist. and
  {P}rob., {V}ol. {III}}, pages 315--320. Univ. California Press, Berkeley,
  Calif., 1961.

\bibitem{Vaser79}
L.~N. Vaserstein.
\newblock On systems of particles with finite-range and/or repulsive
  interactions.
\newblock {\em Comm. Math. Phys.}, 69(1):31--56, 1979.

\bibitem{Zhou2019}
Jing Zhou.
\newblock A rectangular billiard with moving slits.
\newblock {\em Nonlinearity}, 33(4):1542--1571, 2020.

\end{thebibliography}

\end{document}